\definecolor{bMagenta}{RGB}{226, 25, 165}
\definecolor{aBlue}{RGB}{59, 113, 166}
\def\expandafter\normalsize\expandafter{%
    \normalsize%
    \setlength\abovedisplayskip{8pt}%
    \setlength\belowdisplayskip{8pt}%
    \setlength\abovedisplayshortskip{2pt}%
    \setlength\belowdisplayshortskip{2pt}%
}
\renewcommand{\th}{\th}
\newcommand{\vphi}{\varphi}
\renewcommand{\d}{\text{d}}
\newcommand{\nn}{\nonumber}
\newcommand{\la}{\langle}
\newcommand{\ra}{\rangle}
\newcommand{\reg}{\mathrm{reg}}
\renewcommand{\c}[1]{\check{#1}}
\newcommand{\bs}[1]{\boldsymbol{#1}}
\newtheorem{thm}{Theorem}[section]
\newtheorem{prop}[thm]{Proposition}
\newtheorem{coro}[thm]{Corollary}
\newtheorem{rmk}[thm]{Remark}
\newtheorem{clm}[thm]{Claim}
\title{A double copy from twisted (co)homology at genus $g$}
\author[1]{Andrzej Pokraka,}
\emailAdd{apokraka@uva.nl}
\author[2]{Lecheng Ren}
\emailAdd{lecheng.ren@qmul.ac.uk}
\author[3]{and Carlos Rodriguez}
\emailAdd{carlos.rodriguez@mis.mpg.de}
\affiliation[1]{%
    Institute of Physics, University of Amsterdam, Amsterdam, \\ 1098 XH, The Netherlands
}
\affiliation[2]{%
    Centre for Theoretical Physics, 
    Department of Physics and Astronomy, 
    \\ Queen Mary University of London, E1 4NS, UK
}
\affiliation[3]{%
    Max Planck Institute for Mathematics in the Sciences, 
    \\ 04103 Leipzig, Germany
}
\begin{document}

\abstract{
We study a family of generalized hypergeometric integrals defined on punctured Riemann surfaces of genus $g$.
These integrals are closely related to $g$-loop string amplitudes in chiral splitting, where one leaves the loop-momenta, moduli and all but one puncture un-integrated. 
We study the twisted homology groups associated to these integrals, and determine their intersection numbers. 
We make use of these homology intersection numbers to write a double-copy formula for the ``complex'' version of these integrals -- their closed-string analogues. To verify our findings, we develop numerical tools for the evaluation of the integrals in this work. This includes the recently introduced Enriquez kernels -- integration kernels for higher-genus polylogarithms.
}
\maketitle

\section{Introduction}

Observables in (dimensionally regulated) quantum field theory, string theory and (classical) Einstein gravity are often expressed in terms of \emph{generalized hypergeometric functions} with arguments that depend on the physical kinematics. 
These generalized hypergeometric functions have Euler-like integral representations and are often called generalized Euler integrals.%
\footnote{These integrals are also called \emph{hypergeometric}, \emph{Euler-Mellin} and \emph{Aomoto-Gelfand} integrals as well as \emph{algebraic Mellin transforms} \cite{Matsubara-Heo:2023ylc}.}
The ubiquity of these integrals in many areas of physics is one reason for why experts in quantum field theory are able to contribute to theoretical predictions of black hole observables and classical gravitational wave events  --  naively unrelated to quantum field theory \cite{Kosower:2022yvp,Buonanno:2022pgc}.  
In the framework of quantum field theory these integrals arise as Feynman integrals, while in string theory, they appear as integrals of over $\mathcal{M}_{g,n}$ -- the moduli space of genus $g$ Riemann surfaces with $n$ marked points. 

One of the many tools for the studying hypergeometric integrals is twisted de Rham theory (see \cite{aomoto2011theory} or \cite{yoshida2013hypergeometric} for a textbook introduction).  
In physics, twisted de Rham homology and cohomology was first introduced in the context of string theory amplitudes \cite{Mizera:2016jhj,Mizera:2017cqs}, and soon afterwards applied to analytically-and-dimensionally regulated Feynman integrals -- dimensionally regulated Feynman integrals where the propagator powers are raised to generic powers \cite{Mizera:2017rqa,Mastrolia:2018uzb}.
To allow for integer propagator powers more relevant for physics, one needs to work with a more robust mathematical framework called relative twisted cohomology \cite{Caron-Huot:2021xqj, Caron-Huot:2021iev}. 
Ever since, methods of twisted cohomology have been effectively used to compute quantities in quantum field theory and gravity: \cite{Gasparotto:2023roh} uses twisted cohomology methods to compute lattice integrals with high amounts of discrete symmetries, \cite{Frellesvig:2024swj} studies black hole scattering, and \cite{Duhr:2024uid} used the cohomology intersection number to obtain differential equations for Feynman integrals that involve genus-$2$ algebraic curves. This last example is perhaps the first, albeit indirect, example of cohomology intersection numbers in a genus two setup.

Remarkably, both gravitational scattering amplitudes and generalized Euler integrals share the so-called \emph{double-copy} property. 
Physically, the double-copy expresses gravitational amplitudes as a bilinear in Yang-Mills amplitudes which are much simpler. 
Mathematically, the double-copy expresses a complex/single-valued integral as a bilinear of generalized Euler integrals. 
For tree-level (genus-0) string amplitudes, the physical and mathematical notions of the double-copy coincide. 
At 1-loop (genus-one), there is a proposed double-copy \cite{Stieberger:2023nol} that is related to the twisted cohomology of certain genus-one hypergeometric integrals \cite{Mazloumi:2024wys,bhardwaj2024double}. 
On the other hand, the double-copy of gravitational amplitudes in quantum field theory has been tested to 5-loops \cite{Bern:2017ucb}. 
This highlights the gap in understanding between the double-copy of higher genus string amplitudes. 

In this work, we derive a double-copy formula for a family of higher-genus hypergeometric integrals that are closely related to higher-genus string integrals where only one puncture is integrated, at fixed values of loop momenta and moduli. 

\subsection{A motivating example}

To illustrate the double-copy in a simple setting, consider the beta function $\beta(s,t)$. 
For $s,t\in \mathbb{R}_+$, it has the following integral representation:
\begin{align}
    \beta(s,t)=\int_0^1 z^s (1-z)^t\frac{\d z}{z(z-1)} \, .
\end{align}
A closely related integral is the complex beta function, $\beta_\mathbb{C}(s,t)$, which for  $s,t>0, s+t<1,$ is given by an integral over the whole complex plane:
\begin{align}
    \beta_\mathbb{C}(s,t)=\int_{\mathbb{C}-\{0,1\}} |z|^{2s} |1-z|^{2t}\frac{\d z\wedge\d \overline{z}}{|z|^2|z-1|^2} \, .
\end{align}
These two are related by a quadratic relation:
\begin{align}
    \label{eq:KLT_intro}
    \beta_\mathbb{C}(s,t)=\left(\frac{2}{i}\frac{\sin(\pi s)\sin(\pi t) }{\sin(\pi (s+t))}\right) \big[\beta(s,t)\big]^2 \, .
\end{align}
The relation in \eqref{eq:KLT_intro} is one of the simplest examples of the double copy. 
It was discovered independently by physicists and mathematicians, in the context of conformal field theories \cite{Dotsenko:1984ad}, string scattering amplitudes \cite{Kawai:1985xq} and hypergeometric functions \cite{Aomoto87}. 

\subsection{Physical motivation}
In string theory, the complex beta function $\beta_\mathbb{C}(s,t)$ corresponds to a scattering amplitude involving closed strings, while the beta function $\beta(s,t)$ corresponds to a scattering amplitude of open strings.
String theorists refer to \eqref{eq:KLT_intro} as the {\em KLT relations}, after Kawai, Lewellen and Tye \cite{Kawai:1985xq}. Physically, KLT relations are interesting because they relate closed string amplitudes (and their low-energy limit, graviton amplitudes) to open string amplitudes (and their low-energy limit, gluon amplitudes). 
The original KLT relations are only applicable to tree-level string amplitudes, while their field theory implications -- referred to as {\em double copy} relations -- have been extended to multiple loops \cite{Bern:2008qj} (see also \cite{Bern:2019prr} for a textbook presentation).
Only very recently has a 1-loop version of the KLT relations been proposed \cite{Stieberger:2022lss,Stieberger:2023nol}. This raises a natural question: what about $g$-loop KLT relations?

We will not try to find $g$-loop KLT relations in this work. Instead, we will introduce natural genus-$g$ generalizations of the integrals $\beta(s,t)$  and $\beta_\mathbb{C}(s,t)$. 
These are one-fold integrals single variable $z_1$ on a punctured Riemann surface of genus $g$, $\Sigma_g^* = \Sigma_g-\{z_2,z_3,\ldots,z_n\}$. 
These genus-$g$ hypergeometric integrals are closely related to $g$-loop open string integrals in the chiral splitting formalism, and satisfy quadratic and double-copy relations analogous to \eqref{eq:KLT_intro}.
The genus-one specialization of these hypergeometric integrals are called {\em Riemann-Wirtinger integrals}, and have been studied in \cite{Mano2012,ghazouani2016moduli,Goto2022,bhardwaj2024double}. 
In fact, the recent work of Mazloumi and Stieberger \cite{Mazloumi:2024wys} relates the double-copy formulas of Riemann-Wirtinger integrals \cite{ghazouani2016moduli,bhardwaj2024double} -- i.e.~the genus-one version of \eqref{eq:KLT_intro} -- to the 1-loop KLT relations in \cite{Stieberger:2022lss,Stieberger:2023nol}. This motivates us to look for the $g$-loop version of this double-copy formula.

\subsection{Mathematical motivation}

Mano and Watanabe introduced the Riemann-Wirtinger \cite{Mano2012} integrals -- a genus-one version of the beta function $\beta(s,t)$ -- see \eqref{eq:RWintegral_defn} in appendix \ref{app:genus_one}. 
The Riemann-Wirtinger integrals are integrals over $E_\tau^*$: a punctured complex torus.
These integrals can be interpreted as twisted de Rham periods -- a non-degenerate bilinear pairing between the twisted (co)homology groups studied in \cite{Mano2012}. 
A few years later, the homology intersection numbers among these twisted cycles were computed in \cite{ghazouani2016moduli}, to relate the ``complex'' version of the Riemann-Wirtinger integral,
\begin{align}
\label{eq:complex_RW}
J^{RW} = \int_{E_\tau^*} |T_{RW}(z_1)|^2 \,  \d z_1 \wedge \d \overline{z_1} \, ,
\end{align}
to a bilinear combination of Riemann-Wirtinger integrals (see \cite[Proposition 4.22]{ghazouani2016moduli}).  
In view of \cite{hanamura1999hodge}, these ``double copy relations'' can be interpreted as twisted Riemann bilinear relations.
Goto \cite{Goto2022} later extended the (co)homology to allow for quasiperiodic forms, and computed further intersection numbers associated to these Riemann-Wirtinger integrals; paving the way for a more general genus-one double-copy \cite{bhardwaj2024double}.

Let $\Sigma_g$ be a compact Riemann surface of genus $g$ and $\Sigma_g^*$ be the same surface with $n\geq 2$ points removed. 
In 2016, Watanabe introduced twisted cohomology groups on $\Sigma_g^*$ with $g\geq1$ \cite{watanabe2016twisted}. 
However, the hypergeometric functions described by these twisted cohomology groups were left implicit in Watanabe's work. These hypergeometric functions are the starting point of our work, where we study the twisted homology on $\Sigma_g^*$ with an aim to find quadratic relations for these hypergeometric functions. Such quadratic relations naturally follow from the twisted Riemann bilinear relations and an isomorphism between a complex-conjugated local and a dual local systems \cite{cho1995,hanamura1999hodge}.
In fact, the case of a punctured Riemann surface is already discussed in \cite[section 4]{hanamura1999hodge}. 

The twisted cohomology setup of Watanabe is flexible, and allows one to consider 1-forms valued in any line bundle of trivial Chern class. Quasiperiodic 1-forms are an example of such forms, i.e.~1-forms that get multiplied by a nonzero number when they are analytically continued along a nontrivial loop $\gamma \in \pi_1(*,\Sigma_g)$.
An interesting family of quasiperiodic 1-forms is the Abelianized
version of the recently introduced Kronecker forms%
\footnote{
    These are also related to the foundational work on flat connections on punctured Riemann surfaces of \cite{bernard1988wess,Enriquez:2011np,enriquez2021construction}. The authors \cite{ Baune:2024biq, Lisitsyn_masters_thesis} explain how Kronecker forms relate to some of these flat connections. The authors in \cite{DHoker:2023vax} work on the iterated integrals associated to two of these flat connections.
}
\cite{DHoker:2023vax, Baune:2024biq, Lisitsyn_masters_thesis} in the physics literature, which are generating functions of integration kernels for polylogarithms on Riemann surfaces of genus $g\geq 1$. 
In fact, the genus-one version of these Abelian Kronecker forms -- called Kronecker-Eisenstein series -- were also explicitly studied by Mano and Watanabe in \cite{Mano2012}. It is only natural, then, to study hypergeometric functions with Abelian Kronecker forms.

\subsection{Summary of the paper}
In this work, we study two families of genus $g$ hypergeometric integrals,  associated to the twisted cohomology groups found by Watanabe -- one of them involving single-valued $1-$forms on a punctured Riemann surface, and the other involving quasiperiodic $1-$forms that obtain a phase after analytically continuing along a $\mathfrak{B}$-cycle.

In section \ref{sec:SecPreliminaries}, we give a brief introduction to theory of Riemann surfaces, and introduce some functions and $1$-forms needed to define the genus-$g$ hypergeometric integrals. 
In section 3, we introduce the genus-$g$ hypergeometric integrals $I^\varphi_\gamma$, and review the twisted cohomology setup of Watanabe. 
In section 4, we introduce $H_1(\Sigma_g^*,\mathcal{L}_{\bs{s}})$, the twisted homology group that underlies the hypergeometric integrals $I^\varphi_\gamma$. 
We find a basis of this twisted homology group in terms of regularized twisted cycles, and compute the homology intersection numbers. 
In section 5, we introduce some cohomology intersection numbers  and, via the Riemann bilinear relations derive quadratic and double-copy relations for the integrals $I^\varphi_\gamma$. 
In section 6, we introduce hypergeometric integrals with quasiperiodic 1-forms, built out of an Abelian version of the Kronecker form, and derive a double-copy formula for these integrals. 
In section 7, we give concrete examples of the double copy relations in the case of genus two ($g=2$). 
In section \ref{sec:conclusion}, we conclude and speculate on future directions.

\section{Preliminaries on compact Riemann surfaces of genus $g\geq 1$}\label{sec:SecPreliminaries}

Here, we introduce Riemann surfaces of genus $g$ and the main component functions and forms that enter in the definition of genus-$g$ hypergeometric functions. The reader familiar with differentials on a Riemann surfaces, prime forms and prime functions can go ahead to section \ref{sec:GenusGHyper}.
We follow the exposition of \cite{hejhal1972theta}.

Let $\Sigma_{g}$ be a compact Riemann surface of genus $g$. 
Its first homology group with integer coefficients is $2g$-dimensional: 
\begin{align}
H_1(\Sigma_{g},\mathbb{Z}) = \mathbb{Z}^{2g}.
\end{align}
We can choose a symplectic basis of this homology group, with $g$ $\mathfrak{A}$-cycles, $\{\mathfrak{A}_I\}_{I=1}^g$, and $g$ $\mathfrak{B}$-cycles, $\{\mathfrak{B}_I\}_{I=1}^g$. This means we select the cycles such that there exists an antisymmetric intersection form $[\bullet\vert\bullet]_{\textrm{top}}: H_1(\Sigma_{g},\mathbb{Z}) \times H_1(\Sigma_{g},\mathbb{Z}) \rightarrow \mathbb{Z}$ such that:
\begin{align}\begin{aligned}
\relax
[ \mathfrak{A}_I | \mathfrak{B}_J ]_\mathrm{top} &= \delta_{IJ}  = - [ \mathfrak{B}_I , \mathfrak{A}_J ]_\mathrm{top}
\,  
\\
[\mathfrak{A}_I | \mathfrak{A}_J]_\mathrm{top} &= 0 \, 
\\
[\mathfrak{B}_I | \mathfrak{B}_J]_\mathrm{top} &= 0 \, ,
\end{aligned}\end{align}
where we have used the Kronecker delta $\delta_{IJ}$.
A choice of $\mathfrak{A}$- and $\mathfrak{B}$-cycles for a Riemann surface of genus $g$ is depicted in figure \ref{fig:HomologyCycles}.

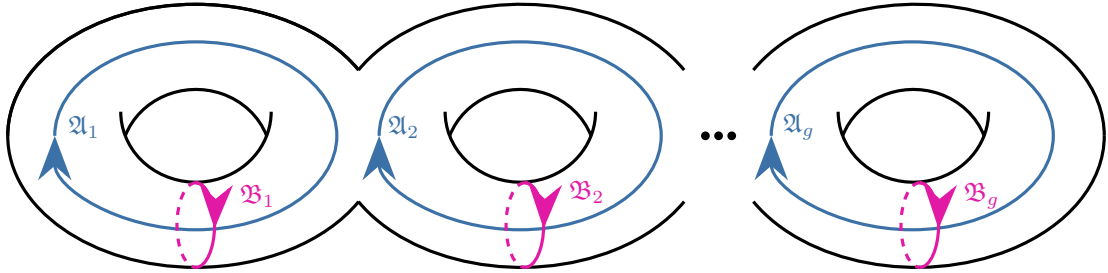
\begin{figure}
	\centering
	\begin{tikzpicture}[scale=1.25]
		 \coordinate (v0) at (0,0);
		\coordinate (vq0) at (0,0);
		\coordinate (vq1) at (3,0);
		\coordinate (vq1p5) at (4.5,2);
		\coordinate (vq2) at (6,0);
		\coordinate (vq2p5) at (8,0);
		\coordinate (vq3) at (11,0);
		
		\draw[very thick] (0,0) arc[start angle=180, end angle=30,x radius=2,y radius=1.4];
		\draw[very thick] (0,0) arc[start angle=-180, end angle=-30,x radius=2,y radius=1.4];
		\draw[very thick] (3.73,.7) arc[start angle=150, end angle=30,x radius=2,y radius=1.4];
		\draw[very thick] (3.73,-0.7) arc[start angle=180+30, end angle=330,x radius=2,y radius=1.4];
		\draw[very thick] (4.2+3.73,.7) arc[start angle=150, end angle=0,x radius=2,y radius=1.4];
		\draw[very thick] (4.2+3.73,-0.7) arc[start angle=180+30, end angle=360,x radius=2,y radius=1.4];
		\draw[very thick] (0,0) arc[start angle=180, end angle=30,x radius=2,y radius=1.4];
	
		\draw[very thick] (4.2+3.73+0.95,0) arc[start angle=160, end angle=20,x radius=.8,y radius=1.5/2];
		\draw[very thick] (4.2+3.73+0.95,0) arc[start angle=200, end angle=360,x radius=.8,y radius=1.5/2];
		\draw[very thick] (4.2+3.73+0.95,0) arc[start angle=200, end angle=180,x radius=.8,y radius=1.5/2];
		
		\draw[very thick] (1.25,0) arc[start angle=160, end angle=20,x radius=.8,y radius=1.5/2];
		\draw[very thick] (1.25,0) arc[start angle=200, end angle=360,x radius=.8,y radius=1.5/2];
		\draw[very thick] (1.25,0) arc[start angle=200, end angle=180,x radius=.8,y radius=1.5/2];
		
		\draw[very thick] (4.70,0) arc[start angle=160, end angle=20,x radius=.8,y radius=1.5/2];
		\draw[very thick] (4.70,0) arc[start angle=200, end angle=360,x radius=.8,y radius=1.5/2];
		\draw[very thick] (4.70,0) arc[start angle=200, end angle=180,x radius=.8,y radius=1.5/2];
		
		\draw[very thick,-{Stealth[scale=2]},aBlue] (0.5,0) arc[start angle=180, end angle=-180,x radius=1.5,y radius=1];
		
		\draw[very thick,-{Stealth[scale=2]},aBlue] (3.95,0) arc[start angle=180, end angle=-180,x radius=1.5,y radius=1];
		
		\draw[very thick,-{Stealth[scale=2]},aBlue] (8.12,0) arc[start angle=180, end angle=-180,x radius=1.5,y radius=1];
		
		\draw[very thick,-{Stealth[scale=2]},bMagenta] (2,-0.5) arc[start angle=90, end angle=-5,x radius=0.2,y radius=0.45];
		\draw[very thick,bMagenta] (2,-0.5) arc[start angle=90, end angle=-90,x radius=0.2,y radius=0.45];
		\draw[dashed,very thick,bMagenta] (2,-0.5) arc[start angle=90, end angle=270,x radius=0.2,y radius=0.45];

		\draw[very thick,-{Stealth[scale=2]},bMagenta] (5.5,-0.5) arc[start angle=90, end angle=-5,x radius=0.2,y radius=0.45];
		\draw[very thick,bMagenta] (5.5,-0.5) arc[start angle=90, end angle=-90,x radius=0.2,y radius=0.45];
		\draw[dashed,very thick,bMagenta] (5.5,-0.5) arc[start angle=90, end angle=270,x radius=0.2,y radius=0.45];

		\draw[very thick,-{Stealth[scale=2]},bMagenta] (9.7,-0.5) arc[start angle=90, end angle=-5,x radius=0.2,y radius=0.45];
		\draw[very thick,bMagenta] (9.7,-0.5) arc[start angle=90, end angle=-90,x radius=0.2,y radius=0.45];
		\draw[dashed,very thick,bMagenta] (9.7,-0.5) arc[start angle=90, end angle=270,x radius=0.2,y radius=0.45];
		
		\draw[thick,fill=black] (7.56+0.14,0) circle (1pt);
		\draw[thick,fill=black] (7.56,0) circle (1pt);
		\draw[thick,fill=black] (7.56-0.14,0) circle (1pt);
		\draw[aBlue] (0.55,.1) node [right]   {$\mathfrak{A}_1$};
		\draw[aBlue] (3.95,.1) node [right]   {$\mathfrak{A}_2$};
		\draw[aBlue] (8.15,.1) node [right]   {$\mathfrak{A}_g$};
		
		\draw[bMagenta] (0.55+1.7+.12,-0.64) node [right]   {$\mathfrak{B}_1$};
		\draw[bMagenta] (3.95+1.73+0.02++.17,-0.64+0.04) node [right]   {$\mathfrak{B}_2$};
		\draw[bMagenta] (8.15+1.7+.22,.-0.64) node [right]   {$\mathfrak{B}_g$};
		
	\end{tikzpicture}
	\caption{Canonical homology cycles of a Riemann surface $\Sigma_g$.}
	\label{fig:HomologyCycles}
\end{figure}

A compact Riemann surface of genus $g$ is a 1-dimensional complex manifold, and has $g$ linearly independent holomorphic differentials -- also called Abelian differentials of first kind. The space of Abelian differentials of first kind is a $\mathbb{C}$-vector space of dimension $g$. Given a choice of $\mathfrak{A}$- and $\mathfrak{B}$-cycles, we can choose a basis of Abelian differentials of first kind, $\{{\omega}_I\}_{I=1,\ldots,g}$, normalized such that their $\mathfrak{A}$-cycle integrals are%
\footnote{%
    We omit the explicit $z$-dependence of ${\omega}(z)$ here and whenever there is no confusion.
}:
\begin{align}
    \oint_{\mathfrak{A}_I} {\omega}_J 
    = \delta_{IJ} \, .
\end{align}
After normalizing the ${\omega}_I$ as above, the $\mathfrak{B}$-cycle integrals of Abelian differentials is given by:
\begin{align}
\oint_{\mathfrak{B}_I} {\omega}_J = \Omega_{IJ} \, ,
\end{align}
where $\Omega_{IJ}=\Omega_{JI}$ are the (non-trivial) components of the \textit{period matrix} -- these characterize the complex structure of our Riemann surface $\Sigma_g$. 
This is a symmetric matrix whose imaginary part is positive definite:
\begin{align}
\Im \,\Omega_{IJ}\succ 0\, .
\end{align}
Importantly, this means that $\Im \, \Omega_{IJ}$ is invertible.

We can define Abelian integrals $\nu_I(z)$ by integrating the Abelian differentials:
\begin{align}
    \nu_I(z)=\int_P^z {\omega}_I(z') \, ,
\end{align}
where we integrate $z'$ starting from the coordinates of some point $z'=P$ we keep fixed. Abelian integrals are multivalued functions on $\Sigma_g$, and we can read off their multivaluedness from the normalization of Abelian differentials:
\begin{align}
\nu_I(z+\mathfrak{A}_J) &=  \nu_I(z) + \delta_{IJ} \nn \\
\nu_I(z+\mathfrak{B}_J) &=  \nu_I(z) + \Omega_{IJ} \, .
\end{align}
Here, we use the notation $f(z+\mathfrak{A}_J)$ (resp.~$f(z+\mathfrak{B}_J)$) to denote the analytic continuation of $f(z)$ along a cycle homologous to $\mathfrak{A}_J$ (resp.~$\mathfrak{B}_J$). 

\subsection{The prime function $E(x,y)$}

To define rational functions on $\Sigma_g$ with prescribed zeroes and poles, we seek a function $E(x,y)$ on $\Sigma_g \times \Sigma_g$ that behaves like $x-y$ when $x$ and $y$ are close.

A candidate for such an object is the prime form, $\tilde{E}(x,y)$, whose transformation rules can be succinctly summarized by the shorthand: 
\begin{align}
    \tilde{E}(x,y) = \frac{E(x,y)}{\sqrt{\d x}\sqrt{\d y}}
    \,.
\end{align}
That is, $E(x,y)$ transforms as if it were the component function of a holomorphic differential form of degree $(-1/2)$ in both $x$ and $y$. 
We will refer to $E(x,y)$ as the prime function.%
\footnote{%
    More precisely, $E(x,y)$ is a section of a line bundle. Our naming convention follows \cite{hejhal1972theta}, but we remark that in the physics literature they denote $\tilde{E}(x,y)$ by $E(x,y)$, e.g. \cite[Eqn. (9)]{DHoker:2025dhv}.
}
The prime function is odd, $E(x,y)=-E(y,x)$, and, when $x$ and $y$ are close to each other, the prime function behaves as:
\begin{align}
\label{eq:prime_func_expansion}
E(x,y) = (x-y) + O\left(\left(x-y\right)^3\right) \, .
\end{align}
Moreover, the square of the prime function on $\Sigma_g$ can be uniquely characterized. 
But first, a choice of $\mathfrak{A}$ and $\mathfrak{B}$ cycles as well as a fixed cover $U$ of $\Sigma_g$ are required. 
Then,  $[E(x,y)]^2$ is  uniquely characterized by the following properties, for $n_I,m_I\in \mathbb{Z}$ \cite[Chapter~3]{hejhal1972theta}:
\begin{subequations}
\label{eq:primeSqDefn}
    \begin{align}
        [E(x,y)]^2& \textrm{ is analytic in $U$}\\
        [E(x,y)]^2 &=[E(y,x)]^2    \\
        [E(x,\gamma(x))]^2 &= 0\,, \, \, \textrm{of order 2 and zero nowhere else}  \\
         [E(\gamma(x),y)]^2 &= \gamma'(x) {\exp}\left[{-}2 \pi i\left(\sum_{I=1}^g n_I \Omega_{II} {+} 2\sum_{I=1}^{g} n_I\left(\nu_I(x){-}\nu_I(y)\right) \right) \right] 
            [E(x,y)]^2 \, ,
    \end{align}
\end{subequations}
where $\gamma(x)=x + n_I \mathfrak{A}_I+m_I \mathfrak{B}_I$, furnishes the transformation from different fundamental domains of $\Sigma_g$ in the chosen cover.

In this work, we choose the cover $U$ to be the one of Schottky uniformization. 
There, the $\gamma$ that encode $\mathfrak{B}-$cycle shift are M\"obius transformations, and $\mathfrak{A}$-cycle shifts don't require any M\"obius transformation -- they are in the same fundamental domain. Thus, if $\gamma(x) = \frac{a x +b}{c x + d}$, subject to $a d - bc=1$, then the factor $\gamma'(x)$ in \eqref{eq:primeSqDefn} is given by
\begin{align}
\gamma'(x) = \frac{d}{dx}\gamma(x) =\frac{1}{(cx +d)^2} \, . 
\end{align}
See appendix \ref{app:schottky} for more details about Schottky uniformization. The factor $\gamma'(x)$ appears because $[E(x,y)]^2$ transforms like the component function of a holomorphic form of degree $(-1,-1)$. 

Once we obtain $[E(x,y)]^2$ from the conditions in \eqref{eq:primeSqDefn} we can take the square root to obtain $E(x,y)$ where we pick the phase such that \eqref{eq:prime_func_expansion} holds.

\subsection{Ratios of prime functions and Abelian differentials}

With the prime function $E(x,y)$ we can define more meromorphic functions and differentials. 
The simplest example is a ratio of prime functions, $\frac{E(z,x)}{E(z,y)}$. 
We can readily check that this ratio behaves like a multivalued function of $z$ (i.e.~transforms as the component of a 0-form in $z$). 
Furthermore, it has $\mathfrak{B}$-cycle monodromies
\begin{align}
\frac{E(z+\mathfrak{B}_I,x)}{E(z+\mathfrak{B}_I,y)} = \exp\left[-2 \pi i \left(\nu_I(y)-\nu_I(x)\right)  \right] \frac{E(z,x)}{E(z,y)} \, ,
\end{align}
and no $\mathfrak{A}$-cycle monodromies.

If we further take a logarithm of this function, and then the exterior derivative $\d_z$, we obtain a meromorphic differential that is single-valued on $\Sigma_g/\{x,y\}$:
\begin{align}\label{eq:mero-diff-3rd-kind}
    {\omega}_{x,y}(z)=\d_z \log \frac{E(z,x)}{E(z,y)} \, .
\end{align}
This meromorphic differential ${\omega}_{x,y}(z)$  has poles at $z=x$ and $z=y$ with residues of $1$ and $-1$. The differential ${\omega}_{x,y}(z)$ is sometimes called an Abelian differential of the third kind.

We can also define a meromorphic differential with a second order pole and no residues. 
This is accomplished by taking the exterior derivatives of $\log E(x,y)$ with respect to $x$ and $y$. 
This way, we obtain a differential form ${\tau}(x,y)$:
\begin{align}
    {\tau}(x,y) = \d_x \d_y E(x,y)
    \, .
\end{align}
For our purposes we will simply need a component of the above differential, 
\begin{align}
    {\tau}_y(x) = \d_x \partial_y E(x,y) 
    \, .
\end{align}
This differential form is sometimes called an Abelian differential of second kind. 

We stress that the Abelian differentials of the first, second and third kind all are monodromy-free; they are single-valued 1-forms on a punctured Riemann surface.

\section{The genus-$g$ hypergeometric integral} \label{sec:GenusGHyper}

Let $\Sigma_g$ be a Riemann surface of genus $g\geq 1$ and  choose a set of $\mathfrak{A}$-cycles and $\mathfrak{B}$-cycles as well as canonically normalized Abelian differentials of the first kind $\{{\omega_I}\}_{I=1}^g$. 
This data determines the period matrix $\Omega_{IJ}$ 
as well as the corresponding Abelian integrals $\nu_I(z)=\int_P^z{\omega}_I(z')$ for $I=1,\ldots,g$, and the prime function%
\footnote{%
    By ``prime function'' we mean the component functions of prime forms or, equivalently, local sections of the prime forms.
} on $\Sigma_g$ as $E(x,y)$, for $(x,y)\in \Sigma_g\times \Sigma_g$.

Let $z_1,z_2,\ldots, z_n$ be coordinates of $n\geq 3$ points on $\Sigma_g$ and $z_1 \in \Sigma_g^{*}=\Sigma_g-\{z_2,z_3,\ldots,z_n\}$ be the punctured Riemann surface. 
Then, choose numbers $\{s_{1A_I}\}_{I=1}^{g} \in \mathbb{C}$ and $\{s_{1j}\}_{j=2}^n \in \mathbb{C} \backslash\mathbb{Z}$ subject to \textit{momentum conservation}%
\footnote{%
    One needs to introduce further conditions on the $s_{1j}$, for the convergence of the integral. For example, if $\varphi(z_1)$ is a holomorphic differential on $\Sigma_g$, we require that $-1<s_{1j}<1$, for $j=2,3,\ldots,n$ for convergence. 
    These conditions simply reflect that around the endpoints of integration, the integral locally behaves like $\int_0 z^{\epsilon-1}dz$, for $\epsilon>1$, where $z=0$ corresponds to the integration endpoint.
}: 
\begin{align}\label{eq:mom_cons}
\sum_{j=2}^n s_{1j}=0.
\end{align}
After choosing a single-valued 1-form  $\varphi(z_1)$ on $\Sigma_g^{*}$, we define the genus-$g$ hypergeometric integral
\begin{align}
\label{eq:RWgenusGdefn}
    I^\varphi_\gamma=I^\varphi_\gamma(z_i,s_{ij})=\int_\gamma \left(\prod_{j=2}^n [E(z_1,z_j)]^{s_{1j}}\right) \exp\left[2 \pi i \sum_{I=1}^g s_{1A_I} \nu_I(z_1)\right]        \varphi(z_1) 
    \, ,
\end{align}
where $\gamma$ is an integration contour for $z_1\in \Sigma_g^*$ that begins at $z_2$ and ends at some $z_j$ (possibly ending at $z_2$ after transversing an $\mathfrak{A}$- or $\mathfrak{B}$-cycle (see figure \ref{fig:cyclesG3N4}).

To keep track of the multivaluedness of the integrand \eqref{eq:RWgenusGdefn}, it is useful to collect all multivalued factors in the integrand into a single function, $T(z_1)$, called the twist:
\begin{align}
\label{eq:T1Def}
T(z_1)=  \left(\prod_{j=2}^n [E(z_1,z_j)]^{s_{1j}}\right) \exp\left[2 \pi i \sum_{I=1}^g s_{1A_I} \nu_I(z_1)\right]  \, .  
\end{align}
Crucially, $T(z_1)$ has local monodromies as $z_1$ goes around a puncture $z_j$ in a counterclockwise manner:
\begin{align}
    T^{M_{j}}(z_1) = e^{2 \pi i s_{1j}} T(z_1) \, .
\end{align}
It also has global monodromies, as we analytically continue $z_1$ around an $\mathfrak{A}$- or $\mathfrak{B}$-cycle. 
Explicitly, 
\begin{align} \label{eq:T_ABmonodromy}
    T(z_1+\mathfrak{A}_I) 
    &= e^{2 \pi i s_{1A_I}} T(z_1) 
    \, ,  &
    T(z_1+\mathfrak{B}_I) 
    &= e^{2 \pi i s_{1B_I}} T(z_1) 
    \, ,
\end{align}
where the numbers $s_{1B_I}\in \mathbb{C}$, are fixed from the definition of $T(z_1)$:
\begin{align}
\label{eq:s1B_defined}
s_{1B_I}= \sum_{J=1}^g{s_{1A_J}} \Omega_{IJ} +\sum_{j=2}^{n}s_{1j}\nu_I(z_j)
\,, &&\textrm{for }I=1,2,\ldots,g \, .
\end{align}
Moreover, we require that $T(z_1)$ vanishes on $\partial \gamma$, the boundary of the integration contour $\gamma$,%
\footnote{%
    For any fixed $\gamma$, there exists a choice of $s_{ij}$ which guarantees this vanishing condition. Integrals for other $s_{ij}$ are obtained from analytic continuation.
}
\begin{align} \label{eq:IBP}
    T(z_1) = 0 \, , \, \, \textrm{for $z_1 \in \partial \gamma$} \, .
\end{align}
Because of this, the contours from $z_2$ to $z_j$ are ``closed'' in the sense that there are no boundary terms when applying Stoke's theorem (more on this in section \ref{sec:hom}). 
The vector space of $1$-forms $\varphi(z_1)$ and 1-chains $\gamma$ is characterized by the twisted (co)homology. 
For now, we showcase some of the contours%
\footnote{%
    For the experts, we showcase integration cycles $\tilde{\gamma}$ belonging in the locally finite twisted homology, i.e.~starting and beginning in some $z_i$, for $i=2,\ldots,n$.
} 
$\gamma$ in figure \ref{fig:cyclesG3N4}, for the case $(g,n)=(3,4)$.

\begin{figure}
    \centering
    \includegraphics[scale=1.2]{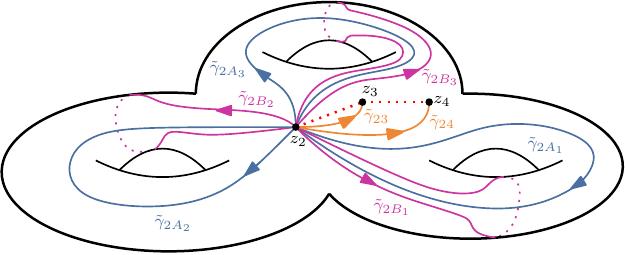}
    \caption{
        Integration cycles ($\tilde{\gamma}$, locally finite cycles) and branch cuts ({\color{red} in red}) for  $T(z_1)$, for $z_1\in \Sigma_g$, for $(g,n)=(3,4)$. Not pictured here: branch cuts for $\mathfrak{A}$- and $\mathfrak{B}$-cycle monodromies. 
        Note that we use $A$- and $B$- to refer to the twisted $\mathfrak{A}$- and $\mathfrak{B}$-cycles.
    }
    \label{fig:cyclesG3N4}
\end{figure}

\subsection{Twisted cohomology groups of genus $g$ hypergeometric integrals}

Watanabe \cite{watanabe2016twisted} describes a twisted cohomology on a punctured Riemann surface of genus $g$ with $(n-1)$ punctures\footnote{Compared to  \cite{watanabe2016twisted}, we use a different value of $n$. We write a function $T(z_1)$ for $z_1\in \Sigma_g^*-\{z_2,z_3,\ldots,z_n\}$. Meanwhile Watanabe uses $T(u)$ of $u \in \Sigma_g-\{p_1,p_2,\ldots,p_n\}$.}, $\Sigma_g^*=\Sigma_g-\{z_2,z_3,\ldots,z_n\}$. To do this, he introduces a multivalued function $T(z_1)$, which factors,
\begin{align}
T(z_1) = T_1(z_1) T_2(z_1) \, ,
\end{align}
such that $T_1(z_1)$ and $T_2(z_1)$ 
behave differently under analytic continuation. 
We will take the analytic continuation along paths in the fundamental group of $\Sigma_g^*$.

Let $p$ be a fixed point on $\Sigma_g^*$. The fundamental group on the punctured Riemann surface $\Sigma_g^*$ is $\pi_1(\Sigma_g^*,p)$; it is generated by loops around the $A$- and $B$-cycles as well as loops that enclose each puncture $\{z_j\}_{j=2}^n$. 
We denote these loops by $\alpha_I$, $\beta_I$ and $M_n$, respectively. 
These generators satisfy a single relation:
\begin{align} \label{eqn:FunGroup}
\prod_{i=I}^g\alpha_I \beta_I \alpha_I^{-1} \beta_I^{-1} = M_2 M_3 \ldots M_n \, .
\end{align}
Then, by denoting the analytic continuation of $z_1$ along a loop $\delta \in \pi_1(\Sigma_g^*)$ as  $T^\delta(z_1)$, we have:
\begin{align}
T^\delta(z_1) = \rho(\delta) T(z_1)\, ,
\end{align}
for some nonzero complex number $\rho(\delta)$. 
That is, we have a 1-dimensional representation for the first fundamental group 
\begin{align} \label{eq:piRep}
\rho : \pi_1(x) \rightarrow \mathbb{C}^* = \mathbb{C}\backslash\{0\}.
\end{align}
 $T_1(z_1)$ differs from $T_2(z_1)$ in that $T_2(z_1)$ has trivial monodromies under the loops $\{M_j\}_{j=2}^n$:
\begin{align}
T_2^{M_j}(z_1) = T_2(z_1) \, , \, \, j=2,3,\ldots,n \, .
\end{align}
In other words, all the nontrivial of monodromies of $T(z_1)$ for $z_1$ going around the other punctures is carried by $T_1(z_1)$:
\begin{align}
T_1^{M_j}(z_1) = e^{2 \pi i s_{1j}} T_{1}(z_1) \, , \, \, j=2,3,\ldots,n \, .
\end{align}
Combining the above equation with \eqref{eq:T_ABmonodromy}, the map $\rho$ can be made explicit
\begin{align}\label{eq:piRep_explicit}
    \rho(M_i) &:= e^{2\pi i s_{1i}}\,,
    &
    \rho(\mathfrak{A}_I) &:= e^{2\pi i s_{1A_I}} \,, 
    &
    \rho(\mathfrak{B}_I) &:= e^{2\pi i s_{1B_I}} \,,
\end{align}
where $s_{1B_I}$ is fixed in \eqref{eq:s1B_defined}.
Note that because the target space of $\rho$ is an Abelian group (the multiplicative group of nonzero complex numbers $\mathbb{C}^*$), equation \eqref{eqn:FunGroup} implies  
\begin{align}
\rho(M_2)\rho(M_3) \ldots \rho(M_n) = 1 \, .
\end{align}
This condition is equivalent to the momentum conservation condition in \eqref{eq:mom_cons}.

One can characterize $T_1(z)$ by its logarithmic differential \cite{watanabe2016twisted}:
\begin{align}
\d_{z_1} \log T_1(z_1) = \sum_{j=2}^{n-1}s_{1j} {\omega}(z_1)_{z_j,z_n} \, .
\end{align}
We claim that $T_1(z_1)$ can be written in terms of prime functions:
\begin{align}
T_1(z_1) = \prod_{j=2}^n [E(z_1,z_j)]^{s_{1j}} \, .
\end{align}
This can be readily seen by using momentum conservation, and the definition of the Abelian differential of the third kind:
\begin{align}
    \d_{z_1} \log T_1(z_1) 
    &= \d_{z_1} \log \prod_{j=2}^n [E(z_1,z_j)]^{s_{1j}} 
    =\d_{z_1} \log \prod_{j=2}^{n-1} \left[\frac{E(z_1,z_j)}{E(z_1,z_n)}\right]^{s_{1j}} \,  \nonumber \\
    &=\d_{z_1}  \sum_{j=2}^{n-1} s_{1j}\log \frac{E(z_1,z_j)}{E(z_1,z_n)} 
    =\sum_{j=2}^{n-1} s_{1j} {\omega}(z_1)_{z_j,z_n} 
    \,  .
\end{align}

Thus, after comparing with \eqref{eq:RWgenusGdefn}, we identify the multivalued function $T_2(z_1)$ with the exponential of Abelian integrals:
\begin{align}
T_2(z_1) = \exp\left[2 \pi i \sum_{I=1}^g s_{1A_I} \nu_I(z_1)\right] \, .
\end{align}
The logarithmic derivative of $T_2(z_1)$,
\begin{align}
    \d_{z_1} \log T_2(z_1) = 2 \pi i \sum_{I=1}^g s_{1A_I} {\omega}_I(z_1)\, ,
\end{align}
is also a linear combination of Abelian differentials of the first kind.

Next, we define the twisted\footnote{By synecdoche, we refer to both $T(z_1)$ and $\d_{z_1}T(z_1)$ as the {\em twist}.} differential, $\nabla_\omega:\Omega^k(\Sigma_g)\rightarrow \Omega^{k+1}(\Sigma_g)$ as:
\begin{align}
\label{eq:nabla_twist_genus_g}
    \nabla_\omega &:= \d_{z_1} + \omega \wedge 
    \,,
    &
    \omega &:= \d_{z_1}\log T(z_1)
    \, .
\end{align}
That is, given $\xi$, a $k$-form (single-valued on $\Sigma_g^*$), $\nabla_\omega  \xi=\d_{z_1} \xi+\d_{z_1} \log T(z_1)\wedge\xi$ is a $(k+1)$-form (single-valued on $\Sigma_g^*$).  
Since the twisted differential is nilpotent $\nabla_\omega^2=0$, we can define the $k$-th {\em twisted de Rham cohomology group} $H^k(\Sigma_g^*,\nabla_\omega)$:
\begin{align}
    H^k(\Sigma_g^*,\nabla_\omega) = \frac{\operatorname{Ker(\nabla_\omega:\Omega^k(\Sigma_g^*)\rightarrow\Omega^{k+1}(\Sigma_g^*))}}{\operatorname{Im(\nabla_\omega:\Omega^{k-1}(\Sigma_g^*)\rightarrow\Omega^k(\Sigma_g^*))}} \, .
\end{align}
By Watanabe \cite{watanabe2016twisted}, we have 
that $H^k(\Sigma_g^*,\nabla_\omega)$ is nontrivial only for $k=1$. He also provides a basis of this twisted cohomology group under certain conditions. 
\\

\begin{prop} \label{prop:cohomology_single_valued}
Under the condition $n>\max(2,2g-1)$, the first twisted cohomology group of the genus-$g$ punctured Riemann surface $\Sigma_g^*$, $H^1(\Sigma_g^*,\nabla_\omega)$, is spanned by the $(2g{+}n{-}3)$-many 1-forms:
\begin{align}
\label{eq:H1spannedSingleValued}
H^1(\Sigma_g^*,\nabla_\omega) 
={\operatorname{span}}(
&{\omega}_1,{\omega}_2,\ldots,
{\omega}_g, {\tau}_{z_2}(z_1),{\tau}_{z_3}(z_1),\ldots,{\tau}_{z_{g+1}}(z_1), 
\nonumber
\\
&{\omega}_{z_2,z_3}(z_1),{\omega}_{z_3,z_4}(z_1),\ldots,
{\omega}_{z_{n-2},z_{n-1}}(z_1)) \, \, .
\end{align}
Moreover, we have $\operatorname{dim} H^1(\Sigma_g^*,\nabla_\omega)= (2g+n-3)$. 
Therefore, the above spanning set constitutes a basis and this twisted cohomology can be spanned by the Abelian differentials of first, second and third kind where the latter two have poles at $z_1=z_j\in\{z_2,z_3,\ldots,z_n\}$. 
\end{prop}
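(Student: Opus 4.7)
The plan is to prove the proposition in two steps. First, establish that $\dim H^1(\Sigma_g^*,\nabla_\omega)=2g+n-3$. Second, show that the $2g+n-3$ forms listed in \eqref{eq:H1spannedSingleValued} span $H^1(\Sigma_g^*,\nabla_\omega)$; spanning together with the dimension count will then force them to be a basis.

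For the dimension, since $\Sigma_g^*$ is non-compact the top twisted cohomology vanishes, $H^2(\Sigma_g^*,\nabla_\omega)=0$. The assumption $s_{1j}\notin\mathbb{Z}$ forces $\rho(M_j)=e^{2\pi is_{1j}}\neq 1$, so any covariantly constant section of the local system must vanish and $H^0(\Sigma_g^*,\nabla_\omega)=0$. The Euler characteristic $\chi(\Sigma_g^*)=2-2g-(n-1)=3-2g-n$, combined with $\chi=\dim H^0-\dim H^1+\dim H^2$, then gives $\dim H^1(\Sigma_g^*,\nabla_\omega)=2g+n-3$, in agreement with Watanabe's theorem cited above.

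For the spanning statement, I would argue by a twisted Mittag--Leffler reduction tracking the polar data at each puncture. The key local computation is that if $f$ has a pole of order $k\geq 1$ at $z_j$, then $\nabla_\omega f=\d f+\omega f$ has a pole of order $k+1$ at $z_j$ with leading coefficient proportional to $s_{1j}-k$; since $s_{1j}\notin\mathbb{Z}$, this coefficient never vanishes, so one can iteratively cancel every pole of order $\geq 2$ at each puncture by subtracting an appropriate $\nabla_\omega f$. An arbitrary meromorphic representative of a class in $H^1(\Sigma_g^*,\nabla_\omega)$ is then reduced modulo exact forms to a sum of: a holomorphic part, absorbed into $\{\omega_I\}_{I=1}^g$; a simple-pole part, absorbed into $\{\omega_{z_j,z_{j+1}}(z_1)\}_{j=2}^{n-2}$ after using a global residue constraint; and a second-kind part built out of $\{\tau_{z_j}(z_1)\}_{j=2}^{g+1}$, which is needed because each $\tau_{z_j}$ carries residueless double-pole data not visible in the third-kind forms. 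The hypothesis $n>\max(2,2g-1)$ guarantees that enough punctures exist to host the $g$ second-kind representatives while leaving the remaining endpoints available for the $n-3$ third-kind representatives.

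The main obstacle I expect is the bookkeeping around the twist's action on residues. In the untwisted de Rham setting, residues at punctures are preserved by exact forms, but in the twisted setting $\nabla_\omega f$ picks up extra residue contributions from $\omega\wedge f$ that shift the leading polar data in a $\bs{s}$-dependent way. One therefore has to track a filtration by pole order carefully and invoke the non-vanishing of $s_{1j}-k$ at each step. A cleaner alternative, which becomes available once the dual basis of regularized twisted cycles is constructed in section 4, is to show that the period matrix pairing the listed cocycles against a suitable set of twisted homology cycles is non-singular; non-degeneracy of the period pairing then converts spanning (and independence) into a purely computational check.
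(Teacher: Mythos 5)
Your route is genuinely different from the paper's: the paper does not reprove the statement at all, it invokes Watanabe's theorem 4.1 (example 2, the ``holomorphically trivial'' case) and only checks that the extra factor $T_2(z_1)$ in the twist shifts Watanabe's connection by a linear combination of holomorphic differentials, hence does not alter the divisor entering his construction. Your first step, the dimension count via $H^0=0$ (nontriviality of the local system from $s_{1j}\notin\mathbb{Z}$), $H^2=0$ (non-compactness), and $\chi(\Sigma_g^*)=3-2g-n$, is correct and is indeed how $\dim H^1(\Sigma_g^*,\nabla_\omega)=2g+n-3$ is normally obtained.

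The genuine gap is in the spanning step. The claim that every pole of order $k+1\geq 2$ can be cancelled by subtracting $\nabla_\omega f$ with $f$ of pole order $k$ is only a local (formal Laurent) statement: in the meromorphic model of $H^1(\Sigma_g^*,\nabla_\omega)$, the $f$ you subtract must be a global meromorphic function on $\Sigma_g$ with poles only at the punctures, and on a genus $g\geq 1$ surface Riemann--Roch (Weierstrass gaps) obstructs the existence of an $f$ with a prescribed low-order pole at one puncture and no new poles elsewhere; allowing poles at other punctures makes the ``iteration'' raise pole orders there, so termination is not automatic. This failure is exactly why $g$ second-kind forms survive in the basis, and it is also where the hypothesis $n>\max(2,2g-1)$, i.e.\ $\deg(z_2+\cdots+z_n)=n-1>2g-2$ and hence non-specialty of the puncture divisor, actually does its work --- not, as you suggest, merely to ``host'' the $g$ double poles, which would only require $n\geq g+1$. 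As written, your sketch simultaneously asserts that all poles of order $\geq 2$ can be removed and that the $\tau_{z_j}$ are indispensable, without resolving this tension or explaining why precisely $g$ residueless double poles at $z_2,\ldots,z_{g+1}$ remain. A smaller omission: absorbing the simple-pole part into $\omega_{z_2,z_3},\ldots,\omega_{z_{n-2},z_{n-1}}$ requires killing the residue at $z_n$, which uses exactness of $\omega=\nabla_\omega 1$ together with $s_{1n}\notin\mathbb{Z}$, not just the global residue theorem. Filling these points essentially amounts to redoing Watanabe's Riemann--Roch-type argument, which the paper deliberately avoids by citing it; your fallback (non-degeneracy of the period pairing against the regularized cycles of section 4) could in principle replace it, but that non-degeneracy is itself an unproven computation in your proposal.
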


\begin{proof}
    This is the content of \cite[theorem 4.1, example 2]{watanabe2016twisted}; what Watanabe calls the {\em holomorphically trivial} case. 
    Nontrivially, what we call $\nabla_\omega$ differs from Watanabe's twisted differential  (which we call  $\nabla_\textrm{Watanabe}$) by: 
    \begin{align}
        \nabla_\textrm{Watanabe} - \nabla_\omega=-\nabla_{z_1} \log T_2(z_1) \, .
    \end{align}
    That is, the difference between the twisted differential s here and in \cite{watanabe2016twisted} is a linear combination of holomorphic differentials on  the (non-punctured) Riemann surface $\Sigma_g$. 
    This does not change the divisor $D$ in Watanabe's construction -- holomorphic differentials do not contribute to the logarithmic singularities generated by the logarithmic derivative of the twist -- and our claim follows. 
\end{proof}

We are interested in these twisted cohomology groups because the genus-$g$ hypergeometric integrals $I_\gamma^\varphi(z_i,s_{ij})$ of \eqref{eq:RWgenusGdefn} depend on $\varphi$ only by its equivalence class in the twisted cohomology, i.e.~$I_\gamma^\varphi(z_i,s_{ij})$ is a function of  $[\varphi] \in H^1(\Sigma_g^*,\nabla_\omega)$. 
This is a consequence of \eqref{eq:IBP} and Stokes' theorem. 
The integral $I_\gamma^\varphi(z_i,s_{ij})$ also only depends on the equivalence class of integration contours $\gamma$ in twisted homology. This is the subject of the next section.

\section{Twisted homology groups of genus-$g$ hypergeometric integrals}
\label{sec:hom}

To set up the twisted homology, we must first define the local system $\mathcal{L}_{\bs{s}}$ associated to the twist $T(z_1)$ 
where $\bs{s} = (s_{12},\dots,s_{1n},  \allowbreak s_{1A_1},\dots,s_{1A_g}, \allowbreak s_{1B_1}, \dots, s_{1B_g})$ are parameters controlling the monodromies of the twist \eqref{eq:piRep_explicit}.%
\footnote{%
    Note that in some literature, this local system is called the dual local system because it is annihilated by the dual twisted differential $\nabla_{-\omega}$. 
}
The local system is a locally trivial line bundle whose sections are $T(z_1)$. 
More plainly,
\begin{align}
    \mathcal{L}_{\bs{s}}  = \mathbb{C} T(z_1) 
    = \{\text{functions } u : \nabla_{-\omega}\;  u = 0 \}
\, .
\end{align}
The local system can also be thought of as the line bundle $\mathcal{L}_{\bs{s}} = (\tilde{\Sigma}_g^* \times \mathbb{C}^*)/\pi_1(\Sigma_g^*) \to \Sigma_g^*$ where $\rho$ (c.f., \eqref{eq:piRep}) forms a 1-dimensional representation for $\pi_1$ and $\tilde{\Sigma}_g^*$ is the universal cover of $\Sigma_g^*$. 
This viewpoint emphasizes the connection between the local system and the monodromies of the integrands. 

Let $C_m(\Sigma_g^*)$ be the space of $m$-chains on $\Sigma_g^*$. 
The $m$-chains $\Delta\in C_m(\Sigma_g^*)$ with coefficients over the local system, $C_m(\Sigma_g^*, \mathcal{L}_{\bs{s}} )$, are obtained by tensoring with $T(z_1)$:
\begin{align}
    \Delta\otimes T(z_1) \in C_m(\Sigma_g^*,\mathcal{L}_{\bs{s}} ) \, .
\end{align}
These are known as {\em loaded or twisted chains}. Let the {\em twisted boundary} operator $\partial_\omega$ be the boundary operator naturally defined for such twisted $m$-chains. 
Explicitly, 
\begin{align}
    \partial_\omega: \ C_m(\Sigma_g^*,\mathcal{L}_{\bs{s}}) & \rightarrow C_{m-1}(\Sigma_g^*,\mathcal{L}_{\bs{s}})  \nonumber \\
    \partial_\omega: \ \Delta\otimes T(z_1) & \mapsto \sum_{c_i \in \partial \Delta} c_i \otimes T(z_1)|_{c_i}\, \, ,
\end{align}
where the boundary of the $m$-chain $\Delta$ decomposes into a sum over $(m-1)$-chains $c_i\in C_{m-1}(\Sigma_g^*)$, and $T(z_1)|_{c_i}$ denotes the restriction of $T(z_1)$ to the boundary $c_i$. 

Since the twisted boundary operator is nilpotent ($\partial_\omega^2=0$), we can define the twisted homology groups:
\begin{align}
H_k(\Sigma_g^*,\mathcal{L}_{\bs{s}})= \frac{\operatorname{Ker}\left(\partial_\omega:C_m(\Sigma_g^*,\mathcal{L}_{\bs{s}})\rightarrow C_{m-1}(\Sigma_g^*,\mathcal{L}_{\bs{s}}))\right)}{\operatorname{Im}\left(\partial_\omega: C_{m+1}(\Sigma_g^*,\mathcal{L}_{\bs{s}})\rightarrow C_m(\Sigma_g^*,\mathcal{L}_{\bs{s}})\right)} \, .
\end{align}
Integration forms a non-degenerate pairing between the twisted homology and cohomology groups
\begin{equation}
\begin{aligned}
    I^\varphi_\gamma = [\gamma|\varphi\rangle : \  H^1(\Sigma_g^*,\nabla_\omega) &\times H_1(\Sigma_g^*,\mathcal{L}_{\bs{s}}) \!\!\! && \rightarrow && \mathbb{C} \,, 
    \\
    & (\varphi,\gamma) && \mapsto && [\gamma|\varphi\rangle 
    := \int_\gamma T(z_1)\varphi 
    \, .
\end{aligned}
\end{equation}
Here, the twisted cycle simply specifies that we use the branch choice (section) $T_\Delta$ for the twist:
\begin{equation}
\begin{aligned}
    \int_\gamma T(z_1)\varphi 
    = \int_{\Delta \otimes T_\Delta} T(z_1)\varphi 
    := \int_{\Delta} T_\Delta(z_1)\varphi 
    \, .
\end{aligned}    
\end{equation}
The fact that the integration pairing between twisted homology and cohomology groups is non-degenerate is the content of twisted de Rham duality. 
As a consequence, the twisted homology and cohomology groups are isomorphic \cite{aomoto2011theory}:
this implies the only nontrivial twisted homology group is $H_1(\Sigma_g^*,\mathcal{L}_{\bs{s}})$. 
Moreover, we also sometimes use the notation $H^1(\Sigma_g^*,\mathcal{L}_{\bs{s}})$ for $H^1(\Sigma_g^*,\nabla_\omega)$ to emphasize that it is (de Rham) dual to the homology group $H_1(\Sigma_g^*,\mathcal{L}_{\bs{s}})$.%
\footnote{%
    Technically, $H^1(\Sigma_g^*,\mathcal{L}_{\bs{s}})$ and  $H^1(\Sigma_g^*,\nabla_\omega)$ are different but isomorphic cohomology groups. See appendix A of \cite{bhardwaj2024double} and \cite{maat-thesis} for more details.  
    No such isomorphism exists for twisted homology groups. 
}

A closely related homology group to $H_1(\Sigma_g^*,\mathcal{L}_{\bs{s}})$ is the first locally finite twisted homology group, $H^{l\!f}_1(\Sigma_g^*,\mathcal{L}_{\bs{s}})$. 
The 1-cycles $\tilde{\gamma}\in H^{l\!f}_k(\Sigma_g^*,\mathcal{L}_{\bs{s}})$ are of the form:
\begin{align}
\tilde{\gamma} = \sum_{i\in I} k_i \Delta_i \otimes T(z_1)\, ,
\end{align}
where the set $I$ is allowed to be infinite, but there are only finitely many $i\in I$ where  the $\Delta_i$ intersect a compact subset of $\Sigma_g^*$. 
We remark that $\Sigma_g^*$ is non-compact, so representatives $[\tilde{\gamma}] \in H^{l\!f}_1(\Sigma_g^*,\mathcal{L}_{\bs{s}})$ are generically written as infinite sums over a cover $\{U_i\}_{i\in I}$ for $\Sigma_g^*$. 
We can get some intuition by drawing such cycles $\tilde{\gamma}$, as in figure \ref{fig:cyclesG3N4}. 
Locally finite homology seems to have a complicated definition, but the corresponding integration contours are easy to draw: it just happens that the start and end points of these contours are points that have been removed from the Riemann surface!

The singular and locally finite twisted homology groups are isomorphic. One direction of this isomorphism is via the natural inclusion:
\begin{align}
i:H_1(\Sigma_g^*,\mathcal{L}_{\bs{s}}) \xrightarrow{ \sim} H^{l\!f}_1(\Sigma_g^*,\mathcal{L}_{\bs{s}}) \, .
\end{align}
The inverse of the above map is called the regularization isomorphism 
\begin{align}
\reg : H^{l\!f}_1(\Sigma_g^*,\mathcal{L}_{\bs{s}})  \xrightarrow{ \sim} H_1(\Sigma_g^*,\mathcal{L}_{\bs{s}}) \, .
\end{align}
and is detailed in section \ref{sec:cyclereg}.

\subsection{Generators of twisted homology}

An advantage of twisted homology is that we can work with integrands that are multivalued on $\Sigma_{g}^*$ without needing to introduce its universal cover $\tilde{\Sigma}_{g}^*$. In order to do this, we need to specify branch choices and branch cuts of $T(z_1)$. 

The twist $T(z_1)$ has short and long branch cuts. The short branch cuts join $z_j$ to $z_{j+1}$, for $j=2,3,\ldots,n-1$ (red dashed lines in figure \ref{fig:cyclesG3N4}). 
The long branch cuts are homologous to $\mathfrak{A}$- and $\mathfrak{B}$- cycles. For ease of description, let's assume that the punctures $z_2,z_3,\ldots,z_n$ are close to each other and arranged such that, in a local chart, $0<\arg (z_{j+1}-z_j)<\pi/2$. Following \cite{ghazouani2016moduli}, we say that punctures like these are in a nice position.%
\footnote{%
    The branch cuts and choices are easier to describe in this nice position. If the punctures are not in a nice position, we simply deform the branch cut accordingly.
}

The locally finite cycles $\tilde{\gamma}_{2j}$ connecting $z_2$ to $z_j$ start and end {\em below} the short branch cut. These are shown in orange in figures \ref{fig:cyclesG3N4}. The locally finite cycles $\tilde{\gamma}_{2A_I}$ (respectively, $\tilde{\gamma}_{2B_J}$) start and end at the puncture $z_2$ and are homologous to the $\mathfrak{A}_I$-cycles (respectively $\mathfrak{B}_J$-cycles).

\begin{figure}
    \centering
    \includegraphics[scale=0.75]{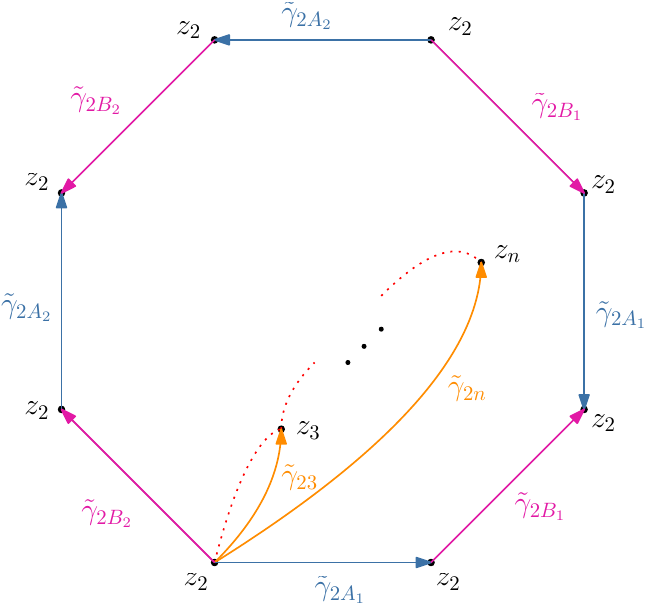}
    \caption{Structure of a genus two surface as the $z_1$-space cut open to an octagon, i.e.~by its canonical dissection. This is the integration domain for genus-two hypergeometric integrals with $n$ punctures.
    The red dotted lines show our branch cut residing inside this octagon. For any genus $g\geq1$, we require this branch cut to reside inside the $4g$-gon.}
    \label{fig:4ggon}
\end{figure}

The relative positions and orientations of the locally finite cycles boils down to the following: if we cut the surface $\Sigma_g$ along the $\tilde{\gamma}_{2A_I}$ and $\tilde{\gamma}_{2B_J}$ cycles, we obtain a canonical dissection of the surface (see e.g. \cite[figure 9]{hejhal1972theta}). 
Once cut along these cycles, $\Sigma_g$ makes a $4g$-gon where the puncture $z_2$ is repeated at each vertex. 
Moreover, the short branch cuts extend into the interior of this $4g$-gon. 
See figure \ref{fig:4ggon} for an example at genus two.

It turns out that the set of cycles pictured in figure \ref{fig:4ggon} are over-complete and there is one linear relation among them. 
To see this, let $A$ be the interior of the $4g$-gon, as a loaded $2$-simplex. 
Its twisted boundary is given by:
\begin{align}
\partial_\omega A &= \sum_{I=1}^{g}\left(1-e^{2 \pi i s_{1B_I}}\right)\tilde{\gamma}_{2A_{I}}+\sum_{I=1}^{g}
\left(e^{2 \pi i s_{1A_I}}-1\right)\tilde{\gamma}_{2B_{I}}
\nonumber \\
&\phantom{=}+
\sum_{j=2}^{n}e^{-2 \pi i (s_{12}+s_{13}+\ldots+s_{1j})}\left(e^{2 \pi i s_{1j}}-1\right)\tilde{\gamma}_{2j} \, \, . \label{eq:boundary}
\end{align}
The above implies the following linear relation in  twisted homology:
\begin{align}
    \label{eq:mon_rel_lf}
    0&=\sum_{I=1}^{g}\left(1-e^{2 \pi i s_{1B_I}}\right)[\tilde{\gamma}_{2A_{I}}]+\sum_{I=1}^{g}
    \left(e^{2 \pi i s_{1A_I}}-1\right)[\tilde{\gamma}_{2B_{I}}]
    \nonumber \\
    &\phantom{=}+
    \sum_{j=2}^{n}e^{-2 \pi i (s_{12}+s_{13}+\ldots+s_{1j})}\left(e^{2 \pi i s_{1j}}-1\right) [\tilde{\gamma}_{2j}] \,\, ,
\end{align}
for  $[\tilde{\gamma}_a] \in H^{l\!f}_1(\Sigma_g^*,\mathcal{L}_{\bs{s}})$. 
In the physics literature, relations like the above are known as {\em monodromy relations}~\cite{Casali:2019ihm,Casali:2020knc}. Due to the monodromy relations, we note that the twisted homology group $H^{l\!f}_1(\Sigma_g^*,\mathcal{L}_{\bs{s}})$ is spanned by
these $(2g+n-2)$ locally finite cycles $\tilde{\gamma}_a$ subject to one linear relation, given by \eqref{eq:mon_rel_lf}.
\\

\begin{clm}
The twisted homology group $H^{l\!f}_1(\Sigma_g^*,\mathcal{L}_{\bs{s}})$ is spanned by locally finite cycles:
\begin{align} \label{eq:hom_basis_lf}
    H^{l\!f}_1(\Sigma_g^*,\mathcal{L}_{\bs{s}}) = \mathrm{span} \{
        \tilde{\gamma}_{2A_1},\tilde{\gamma}_{2B_1},
        \tilde{\gamma}_{2A_2},\tilde{\gamma}_{2B_2},
        \ldots ,
        \tilde{\gamma}_{2A_g},\tilde{\gamma}_{2B_g} ,
        \tilde{\gamma}_{23} ,
        \tilde{\gamma}_{24}  ,
        \ldots ,
        \tilde{\gamma}_{2n} 
    \} \, .
\end{align}
subject to the single linear relation \eqref{eq:mon_rel_lf}. 
\end{clm}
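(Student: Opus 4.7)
My plan is to combine a dimension count with a direct verification of the relation \eqref{eq:mon_rel_lf}. The dimension of $H^{l\!f}_1(\Sigma_g^*,\mathcal{L}_{\bs{s}})$ is fixed by twisted de Rham duality and the regularization isomorphism: combining Proposition~\ref{prop:cohomology_single_valued} with the chain of isomorphisms $H^{l\!f}_1(\Sigma_g^*,\mathcal{L}_{\bs{s}}) \stackrel{\reg}{\cong} H_1(\Sigma_g^*,\mathcal{L}_{\bs{s}}) \cong H^1(\Sigma_g^*,\nabla_\omega)$ yields $\dim H^{l\!f}_1(\Sigma_g^*,\mathcal{L}_{\bs{s}}) = 2g+n-3$. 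Since there are $2g+n-2$ proposed generators in \eqref{eq:hom_basis_lf}, it suffices to (i) check that the relation \eqref{eq:mon_rel_lf} actually holds, (ii) verify that the listed cycles span $H^{l\!f}_1$, and (iii) conclude that \eqref{eq:mon_rel_lf} is the unique relation among them.

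For step (i), I would take the loaded $2$-simplex $A$ given by the interior of the canonical $4g$-gon dissection (figure~\ref{fig:4ggon}) equipped with a fixed branch of $T(z_1)$, and compute its twisted boundary $\partial_\omega A$. Going around the boundary of the $4g$-gon once, each edge $\tilde{\gamma}_{2A_I}$ is traversed in opposite directions on two sides of the polygon, but the two traversals differ by analytic continuation along the corresponding $\mathfrak{B}_I$ cycle, which multiplies the twist by $e^{2\pi i s_{1B_I}}$; this accounts for the $(1-e^{2\pi i s_{1B_I}})$ coefficient in \eqref{eq:boundary}, and analogously for the $\mathfrak{A}_I$-cycles. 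The contribution from the short branch cuts between $z_2$ and $z_j$ produces the $e^{-2\pi i(s_{12}+\cdots+s_{1j})}(e^{2\pi i s_{1j}}-1)$ coefficients once one tracks the product of local monodromies accumulated going from $z_2$ to $z_j$. Since $\partial_\omega A$ is an exact twisted boundary, the corresponding linear combination is zero in $H^{l\!f}_1$.

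For step (ii), the key point is that once we cut $\Sigma_g^*$ along all the cycles $\tilde{\gamma}_{2A_I}, \tilde{\gamma}_{2B_I}$ and along the short branch cuts $\tilde{\gamma}_{2j}$, what remains is the open $4g$-gon $A$ with the short cuts extended inside — a simply connected region on which the twist $T(z_1)$ restricts to a single-valued function. Any locally finite $1$-cycle $[\tilde{\gamma}]\in H^{l\!f}_1$ can then be represented by a chain supported on this dissection: by a standard deformation (filling in $2$-chains inside the simply connected region $A$), we can push any such cycle onto a $\mathbb{C}$-linear combination of the boundary pieces $\tilde{\gamma}_{2A_I}, \tilde{\gamma}_{2B_I}, \tilde{\gamma}_{2j}$. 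This is the approach already used at genus one in \cite{ghazouani2016moduli}, and its extension to genus $g$ is a direct application of the canonical dissection; I expect this step to be the main technical obstacle, as one has to argue cleanly that no other ``hidden'' locally finite cycles appear (e.g.\ cycles wrapping small loops around punctures are killed by the nontriviality of the local monodromies $e^{2\pi i s_{1j}}\neq 1$).

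For step (iii), once spanning and the relation \eqref{eq:mon_rel_lf} are established, the quotient of the $(2g+n-2)$-dimensional span of the proposed generators by any further relation would have dimension at most $2g+n-3$. But we have already matched this to $\dim H^{l\!f}_1$, so no additional independent relation is possible. This shows that the monodromy relation \eqref{eq:mon_rel_lf} is the unique relation, completing the proof.
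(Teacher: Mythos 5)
Your steps (i) and (iii) coincide with what the paper does: the relation \eqref{eq:mon_rel_lf} is obtained there exactly as you describe, from the twisted boundary \eqref{eq:boundary} of the loaded $4g$-gon interior, and the dimension $\dim H^{l\!f}_1(\Sigma_g^*,\mathcal{L}_{\bs{s}})=2g+n-3$ is likewise imported from Watanabe's cohomology result via de Rham duality and the regularization isomorphism. Where you genuinely diverge is the spanning/independence step (ii). The paper never attempts a direct topological spanning argument; instead it verifies the Claim in section \ref{sec:hom_int_num} by introducing an auxiliary family of locally finite dual cycles $\c{\gamma}_{nb}$ (figure \ref{fig:ggon_dual_basis}) whose intersection matrix $\mathbf{H}^{(2,n)}$ with the $\gamma_{2a}$ is diagonal with the explicitly non-vanishing determinant \eqref{eq:matrixDiagDeterminant}; this shows that $2g+n-3$ of the proposed cycles are linearly independent, hence by the dimension count they form a basis, which simultaneously yields spanning by the full list \eqref{eq:hom_basis_lf} and uniqueness of the relation. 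Your route would buy a computation-free, purely topological argument (the natural generalization of the genus-one treatment of Ghazouani--Pirio), whereas the paper's route is computational but fully explicit and produces intersection data it needs anyway for the double copy. Be aware, however, that your step (ii) is exactly the crux and is currently only sketched: to complete it you must set up a locally finite (Borel--Moore) cellular chain complex adapted to the canonical dissection together with the short cuts, show that every locally finite $1$-cycle --- including chains accumulating at the punctures --- can be pushed onto the $1$-skeleton, and use $e^{2\pi i s_{1j}}\neq 1$ to eliminate the local pieces near each $z_j$. As written, your proposal is a correct and coherent strategy whose decisive step is outlined but not established; the determinant computation \eqref{eq:matrixDiagDeterminant} is precisely what stands in for it in the paper.
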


We will verify this statement analytically in section \ref{sec:hom_int_num} by studying the determinants of specific matrices whose entries are homology intersection numbers. 
Additionally, we verify this statement by replacing $\gamma \to \int_\gamma T(z_1)\,\omega_I$ in \eqref{eq:mon_rel_lf} and numerically evaluating the resulting integrals. 
See the ancillary files for the numerical implementation.

\subsection{Regularization of cycles}\label{sec:cyclereg}

The homology intersection numbers are an integral piece of the double copy. 
In order to compute them, we needed a basis of the twisted homology group $H_1(\Sigma_g^*,\mathcal{L}_{\bs{s}})$ consisting of regulated cycles. 
To do this, we need to specify a cellular decomposition of $\Sigma_g^*$ by specifying certain points and $1$-chains. 

Let $P_0,P_1,\ldots,P_{4g-1}$ be points centered around the puncture $z_2$, at a small distance $\epsilon$ from $z_2$, numbered  counterclockwise. 
Also put $P_0$ below the short branch cut connecting $z_2$ to $z_3$ and $P_{1}$ above this short branch cut. 
We will also need  points $Q_j$,  located at a small distance $\epsilon$ from $z_j$ below the short branch cut, for $j=3,4,\ldots,n$. See figure \ref{fig:regularization_g2} for an example of the placements of these points on a genus-$2$ surface.

\begin{figure}
    \centering
    \includegraphics[scale=.76]{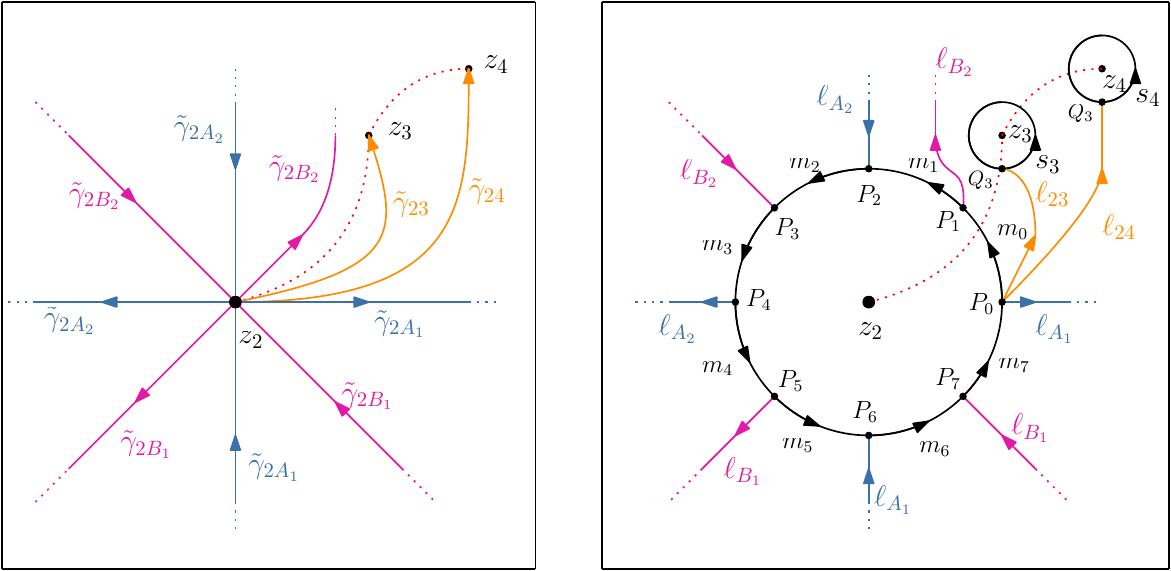}
    \caption{A local picture of a punctured genus-two surface $\Sigma_2^*$. The contours $\tilde{\gamma}_{2A_\bullet}$,$\tilde{\gamma}_{2B_\bullet}$,$\ell_{A_\bullet}$,$\ell_{B_\bullet}$ go around the corresponding homology cycles, and intersect outside this local picture.
    Left: Locally finite, spanning set of cycles. 
    Right: regulated cycles. For concreteness we use $n=4$ here. Note that the regularization procedure happens only around the punctures, and thus such a local picture like this suffices.}
    \label{fig:regularization_g2}
\end{figure}

With these points in place, let $m_j$ be $1$-chains  consisting of circular arcs centered around $z_2$ going counterclockwise from $P_j$ to $P_{j+1}$, except for $m_{4g-1}$ which goes from $P_{4g-1}$ to $P_0$. 
Let $s_j$ be the circular arc  centered around $z_j$, starting and ending at $Q_j$, for $j=3,4,\ldots,n$, and $\ell_{2j}$ be the $1$-chains going from $P_0$ to $Q_j$.
Furthermore, we define the long $1$-chains $\ell_{A_J}$ ($\ell_{B_J}$) that start at $P_{4g-4J+4}$ ($P_{4g-4J+1}$) and end at $P_{4g-4J+2}$ ($P_{4g-4J+3}$).
Here, all all subindices are taken modulo $4g$ and none of the long $1$-chains intersect each other. 
Lastly, let $\tilde{A}$ be the $2$-chain given by the connected exterior of the circles surrounding the punctures $z_2,z_3,\ldots,z_n$.
The disjoint union of the points $\{P_\bullet, Q_\bullet\}$, $1$-chains $\{\ell_\bullet, m_\bullet\}$, and the $2$-chain $\tilde{A}$, give the cellular decomposition of the punctured genus-$g$ Riemann surface $\Sigma_g^*$. 
Thus, we can use these chains to write down bases for the twisted homology group $H_1(\Sigma_g^*,\mathcal{L}_{\bs{s}})$.

To build the regularized cycles, it is useful to record the twisted boundaries of some $1$-chains:
\begin{align} \label{eq:twBd}
\partial_\omega s_j &= \left( e^{2 \pi i s_{1j}}-1\right) Q_j  \, ,  & j=3\ldots,n  \, ,
\nonumber \\
\partial_\omega m_0 &= e^{2 \pi i s_{12}}P_1 - P_0  \, ,
\nonumber \\
\partial_\omega m_j &=P_{j+1} - P_{j}  \, ,  &j \neq 0 \, ,
\nonumber \\
\partial_\omega (m_0+ e^{2 \pi i s_{12}} \sum_{j=1}^{4g} m_j) &= (e^{2 \pi i s_{12} }-1)P_0 \, ,
\nonumber \\
\partial_\omega \ell_{A_J} &= e^{2 \pi i s_{1A_J}} P_{4g-4J+2}-P_{4g-4J+4}  \, , & J = 1,2,\ldots,g   \, ,
\nonumber \\ 
\partial_\omega \ell_{B_J} &= e^{2 \pi i s_{1B_J}} P_{4g-4J+3}-P_{4g-4J+1}  \, , & J = 1,2,\ldots,g   \, ,
\end{align}
where the always take the subindex of $P_i$ to be modulo $4g$.
\\

\begin{thm}
A spanning set of regularized cycles $\gamma_a \in H_1(\Sigma_g^*,\mathcal{L}_{\bs{s}})$ is given by:
\begin{align} \label{eq:hom_basis_reg} 
[\gamma_{2j}] = \reg[\tilde{\gamma}_{2j}] &=\frac{m_0+e^{2 \pi i s_{12}}\sum_{k=1}^{4g-1}m_k}{e^{2 \pi i s_{12}}-1} 
+\ell_{2j}
-\frac{s_j}{e^{2 \pi i s_{1j}}-1} \, , &j=3,\ldots,n \,,
\nonumber \\
[\gamma_{2A_1}] = \reg[\tilde{\gamma}_{2A_1}] &= \frac{m_0+e^{2 \pi i s_{12}}\sum_{k=1}^{4g-1}m_k}{e^{2 \pi i s_{12}}-1} 
\nonumber \\
&\phantom{=}
+\ell_{A_1}  \nonumber \\
&\phantom{=}-e^{2 \pi i s_{1A_1}}\frac{m_0+m_{4g-1}+m_{4g-2}+e^{2 \pi i s_{12}}\sum_{k=1}^{4g-3}m_k}{e^{2 \pi i s_{12}}-1} \, , 
\nonumber \\
[\gamma_{2A_J}] = \reg[\tilde{\gamma}_{2A_J}] &= \frac{m_0+\sum_{k=4g-4J+4}^{4g-1}m_k+e^{2 \pi i s_{12}}\sum_{k=1}^{4g-4J+3}m_k}{e^{2 \pi i s_{12}}-1} 
\nonumber \\
&\phantom{=}
+\ell_{A_J}  \nonumber \\
&\phantom{=}-e^{2 \pi i s_{1A_J}}\frac{m_0
+\sum_{k=4g-4J+2}^{4g-1}m_k
+e^{2 \pi i s_{12}}\sum_{k=1}^{4g-4J+1}m_k}{e^{2 \pi i s_{12}}-1} \, , \, &J=2,\ldots,g \, ,
\nonumber \\
    [\gamma_{2B_{g}}] = \reg[\tilde{\gamma}_{2B_g}] 
    &= \frac{\sum_{k=0}^{4g-1}m_k}{e^{2 \pi i s_{12}}-1}
    \nonumber \\
    &\phantom{=} +\ell_{B_g}  
    \nonumber \\
    &\phantom{=}
    -e^{2 \pi i s_{1B_g}} 
    \frac{m_0+\sum_{k=3}^{4g-1}m_k + e^{2 \pi i s_{12}}(m_1+m_2)}{e^{2 \pi i s_{12}}-1} 
    \, , 
\\
    [\gamma_{2B_{J}}] = \reg[\tilde{\gamma}_{2B_J}] 
    &
    = \frac{m_0+\sum_{k=4g-4J+1}^{4g-1}m_k + e^{2 \pi i s_{12}}\sum_{k=1}^{4g-4J}m_k}{e^{2 \pi i s_{12}}-1}
    \nonumber \\ 
    &\phantom{=}
    +\ell_{B_J}  
    \nonumber \\
    &\phantom{=} 
    -e^{2 \pi i s_{1B_J}} \frac{m_0+\sum_{k=4g-4J+3}^{4g-1}m_k + e^{2 \pi i s_{12}}\sum_{k=1}^{4g-4J+2}m_k}{e^{2 \pi i s_{12}}-1} 
    \, ,  
    &J=1,\ldots,g-1 \, .
\end{align}
These cycles satisfy one linear relation and any collection of $2g+n-3$ of these cycles form a basis of the twisted homology for generic choices of $\{ s_{1j}, s_{1A_J}, s_{1B_J} \}v$, $j=2,\ldots,n$, $J=1,\ldots,g$. 
\end{thm}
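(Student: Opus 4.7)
The plan is to prove this via three ingredients: (i) verify by direct computation that each proposed $\gamma_a$ is indeed twisted-closed ($\partial_\omega \gamma_a = 0$), (ii) show that each $\gamma_a$ represents the image of $\tilde{\gamma}_a$ under the regularization isomorphism $\reg : H^{l\!f}_1 \xrightarrow{\sim} H_1$, and (iii) transfer the spanning set and single linear relation from the locally finite group to the regulated group, then invoke a dimension count to upgrade ``spanning'' to ``basis.''

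For step (i), the computation is mechanical but finicky. Take $\gamma_{2j}$ as a warm-up: using the boundary formulas \eqref{eq:twBd}, the piece $(m_0 + e^{2\pi i s_{12}}\sum_{k=1}^{4g-1}m_k)/(e^{2\pi i s_{12}}-1)$ has twisted boundary equal to $P_0$, the piece $\ell_{2j}$ has boundary $Q_j - P_0$, and $-s_j/(e^{2\pi i s_{1j}}-1)$ has boundary $-Q_j$, so they telescope to zero. For $\gamma_{2A_J}$ and $\gamma_{2B_J}$, the same telescoping happens but the bookkeeping of which $P_k$'s appear as boundaries of the sub-arcs is the main bit of accounting — the coefficient $e^{2\pi i s_{1A_J}}$ (resp. $e^{2\pi i s_{1B_J}}$) on the third summand is tuned precisely so that its endpoint $P_{4g-4J+2}$ (resp. $P_{4g-4J+3}$), together with the endpoint contributed by $\ell_{A_J}$ (resp. $\ell_{B_J}$), cancels against the appropriate partial sum of $\partial_\omega m_k$. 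I would carry this out once carefully for $\gamma_{2A_1}$ and then indicate that the other cases are analogous with the index shift $k \mapsto k - 4J + 4$.

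For step (ii), the key point is that the natural inclusion $i : H_1 \to H^{l\!f}_1$ sends each $[\gamma_a]$ to $[\tilde{\gamma}_a]$. This is visible from the construction in figure \ref{fig:regularization_g2}: outside a small $\epsilon$-neighbourhood of the punctures, $\gamma_a$ coincides with $\tilde{\gamma}_a$, and the added loops $m_k$ and $s_j$ become the regulating ``horns'' wrapping the punctures. Since $i$ and $\reg$ are mutual inverses by the regularization isomorphism, we have $\reg [\tilde{\gamma}_a] = [\gamma_a]$ as asserted.

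For step (iii), the existence of a single linear relation among the $[\gamma_a]$ is automatic: applying $\reg$ to the monodromy relation \eqref{eq:mon_rel_lf} produces a relation with the identical coefficients among the $[\gamma_a]$, and the spanning property for $H_1$ follows from that of $H^{l\!f}_1$ since $\reg$ is an isomorphism. For the basis claim, we count: there are $2g + n - 2$ cycles subject to one relation, giving dimension at most $2g + n - 3$, which matches $\dim H^1(\Sigma_g^*, \nabla_\omega) = 2g + n - 3$ from Proposition \ref{prop:cohomology_single_valued} together with twisted de Rham duality; hence the bound is saturated. For generic parameters, every coefficient $(1 - e^{2\pi i s_{1B_I}})$, $(e^{2\pi i s_{1A_I}} - 1)$, and $e^{-2\pi i(s_{12}+\cdots+s_{1j})}(e^{2\pi i s_{1j}} - 1)$ in \eqref{eq:mon_rel_lf} is nonzero, so the relation involves every cycle nontrivially, and any $2g+n-3$ of them are linearly independent and hence form a basis. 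The most delicate step is the boundary bookkeeping in (i); it is routine but requires care with the cyclic labelling of the $P_k$ (mod $4g$) and with the factor $e^{2\pi i s_{12}}$ that distinguishes the arc $m_0$ sitting on the branch cut from the other arcs $m_k$ with $k \neq 0$.
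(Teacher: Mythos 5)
Your step (i) is exactly the paper's argument: a direct check with \eqref{eq:twBd} that $\partial_\omega\gamma_\bullet=0$ (the paper also notes, as an alternative, that the regulated cycles are proportional to Pochhammer-type contours), and your telescoping bookkeeping for $\gamma_{2j}$ and $\gamma_{2A_1}$ is correct. Step (ii) is a reasonable gloss on the identification $\reg[\tilde\gamma_a]=[\gamma_a]$ that the paper leaves implicit.

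The genuine gap is in step (iii). You obtain the spanning property of the regulated cycles by transporting it from $H^{l\!f}_1(\Sigma_g^*,\mathcal{L}_{\bs{s}})$, i.e.\ from the statement that the locally finite cycles \eqref{eq:hom_basis_lf} span subject only to \eqref{eq:mon_rel_lf}. But in the paper that statement is only a \emph{Claim}, whose analytic verification is explicitly deferred to the intersection-number computation of section \ref{sec:hom_int_num}; it is not available as a proven lemma at this point. So your argument is circular within the paper's logic: the spanning/linear-independence content of the theorem is exactly what you are importing, unproved, from the Claim. (Your final observation — that once spanning and $\dim H_1=2g+n-3$ are known, the one-dimensionality of the relation space plus the non-vanishing of every coefficient in \eqref{eq:mon_rel_lf} for generic $s$ forces any $2g+n-3$ of the cycles to be independent — is a valid piece of linear algebra, but it only kicks in after spanning is established.) The paper closes this hole differently: it pairs the set $\{\gamma_{2A_J},\gamma_{2B_J},\gamma_{2j}\}$ against an auxiliary family of locally finite dual cycles $\check\gamma_{nb}$ and shows the resulting $(2g+n-3)\times(2g+n-3)$ intersection matrix is diagonal with non-vanishing determinant \eqref{eq:matrixDiagDeterminant}; together with $\dim H_1(\Sigma_g^*,\mathcal{L}_{\bs{s}})=2g+n-3$ this proves linear independence, hence the basis property, and in passing also verifies the Claim. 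To repair your proof you would either need to reproduce such an intersection-number (or other independent, e.g.\ cellular/Mayer--Vietoris) argument that the proposed cycles are linearly independent, or supply a genuine proof that the locally finite cycles of \eqref{eq:hom_basis_lf} generate $H^{l\!f}_1$; as written, that step is missing.
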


\begin{proof}

Using \eqref{eq:twBd}, one can explicitly verify that these are twisted cycles: $\partial_\omega \gamma_\bullet = 0$. 
Alternatively, one can show these regularized cycles are proportional to Pochhammer contours which are obviously closed contours. 

Moreover, any collection of $2g+n-3$ of these cycles form a basis. 
This follows from the fact that such a collection of cycles has cardinality equal to the dimension of the twisted homology group and that the determinant of the corresponding $(2g+n-3)\times(2g+n-3)$ intersection matrix is non-vanishing. 
We demonstrate that the determinant of the intersection matrix associated to the set $\{\gamma_{2A_J}, \gamma_{2B_J},\gamma_{2j}\}_{\substack{\hspace{-10pt} J=1,\dots, g \\ j=3,\dots, n-1}}$ 
is non-vanishing in equation \eqref{eq:matrixDiagDeterminant}.
\end{proof}

In \eqref{eq:hom_basis_reg} we have included the cycles $\gamma_{2A_1}$ and $\gamma_{2B_g}$ as separate cases for convenience of the reader. 
Moreover, each cycle $\gamma_{2A_J}$ and $\gamma_{2B_J}$ is written in three lines, where the terms in the first and third lines have simple twisted boundaries. This way, the reader can readily verify that these cycles have no twisted boundary.
\\

\begin{coro}
The regularized cycles also satisfy the linear relation \eqref{eq:mon_rel_lf}:
\begin{align}
\label{eq:mon_rel}
0&=\sum_{I=1}^{g}\left(1-e^{2 \pi i s_{1B_I}}\right)[\gamma_{2A_{I}}]+\sum_{I=1}^{g}
\left(e^{2 \pi i s_{1A_I}}-1\right)[{\gamma}_{2B_{I}}]
\nonumber \\
&\phantom{=}+
\sum_{j=2}^{n}e^{-2 \pi i (s_{12}+s_{13}+\ldots+s_{1j})}\left(e^{2 \pi i s_{1j}}-1\right) [{\gamma}_{2j}] \,\, .
\end{align}
\end{coro}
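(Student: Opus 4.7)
The plan is to exploit the fact that the regularization map
\[
\reg : H^{l\!f}_1(\Sigma_g^*,\mathcal{L}_{\bs{s}}) \xrightarrow{\sim} H_1(\Sigma_g^*,\mathcal{L}_{\bs{s}})
\]
is a $\mathbb{C}$-linear isomorphism. By the very definition of the regularized cycles in \eqref{eq:hom_basis_reg}, one has $[\gamma_\bullet] = \reg[\tilde{\gamma}_\bullet]$ for each of the symbols $\bullet \in \{2j,\,2A_I,\,2B_I\}$. Therefore, applying $\reg$ to both sides of the locally finite relation \eqref{eq:mon_rel_lf} and commuting $\reg$ past the (constant) coefficients immediately produces the claimed relation \eqref{eq:mon_rel}. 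In this sense the corollary is essentially automatic from the theorem together with \eqref{eq:mon_rel_lf}.

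For readers who would like a more hands-on check, I would supply a direct boundary argument paralleling the one that gave \eqref{eq:boundary}. Concretely, I would build a loaded $2$-chain $\tilde{A}_{\reg}$ by starting with the interior $A$ of the canonical $4g$-gon, removing small open disks of radius $\epsilon$ around each puncture $z_3,\ldots,z_n$, and extending the removal near $z_2$ through the cell data $\{P_i,\ell_\bullet,m_\bullet\}$ from section \ref{sec:cyclereg}. Computing $\partial_\omega \tilde{A}_{\reg}$ with the formulas \eqref{eq:twBd} then produces exactly the right-hand side of \eqref{eq:mon_rel}: each long edge contributes a term proportional to $\gamma_{2A_I}$ or $\gamma_{2B_I}$ with the phase factor $(1-e^{2\pi i s_{1B_I}})$ or $(e^{2\pi i s_{1A_I}}-1)$ as in \eqref{eq:boundary}, and each boundary circle around $z_j$ produces a $\gamma_{2j}$-term weighted by $e^{-2\pi i(s_{12}+\cdots+s_{1j})}(e^{2\pi i s_{1j}}-1)$, with the cumulative exponential prefactor recording the branch of $T(z_1)$ picked up while transporting the local branch choice from $z_2$ to $z_j$ along the edges of the $4g$-gon.

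The main obstacle of the direct verification is purely bookkeeping: one must track the branch of $T(z_1)$ on each of the many $1$-chains $m_k$, $\ell_{A_J}$, $\ell_{B_J}$, $\ell_{2j}$, $s_j$, and make sure the orientations and the placement of $P_0,\ldots,P_{4g-1}$ relative to the short branch cut (as in figure \ref{fig:regularization_g2}) are consistent with the conventions used in \eqref{eq:hom_basis_reg}. These are exactly the conventions under which the $[\gamma_\bullet]$ were built, so this verification necessarily closes. For this reason I would present the short linearity-of-$\reg$ argument as the proof, and at most relegate the explicit $\tilde{A}_{\reg}$ construction to a remark for the reader who prefers to see the $2$-chain that bounds the combination in \eqref{eq:mon_rel}.
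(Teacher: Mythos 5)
Your proposal is correct and matches the paper's (implicit) reasoning: the corollary is stated without proof precisely because it follows at once from applying the $\mathbb{C}$-linear regularization isomorphism $\reg$ to the locally finite relation \eqref{eq:mon_rel_lf}, using $[\gamma_\bullet]=\reg[\tilde{\gamma}_\bullet]$. The supplementary direct boundary check you sketch is consistent with the paper's derivation of \eqref{eq:boundary} but is not needed.
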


With these regularized cycles, we can compute the intersection pairing in twisted homology.

\subsection{Homology intersection numbers}
\label{sec:hom_int_num}

Let $\check{\mathcal{L}}_{\bs{s}}$ be the local system coming from the multivaluedness of $T(z_1)^{-1}$. 
We call this the dual local system of $\mathcal{L}_{\bs{s}}$. 
It is related to $\mathcal{L}_{\bs{s}}$ by changing the sign of all $s_{1j}$'s:
\begin{align}
    \c{\mathcal{L}}_{\bs{s}}
    := [\mathcal{L}_{\bs{s}}]^\vee 
    = \mathcal{L}_{-\bs{s}} 
    \, ,
\end{align}
where $(\bullet)^\vee := \bullet\vert_{s_{ij} \to - s_{ij}}$.
We can define an intersection pairing  $[\bullet|\bullet]$ between the twisted homology groups $H_1(\Sigma_g^*,\mathcal{L}_{\bs{s}})$ and $H_1(\Sigma_g^*,\mathcal{L}_{-\bs{s}})$:
\begin{align}
[\bullet|\bullet] : H_1(\Sigma_g^*,\mathcal{L}_{\bs{s}} ) \, \times \,  
H_1(\Sigma_g^*,\mathcal{L}_{-\bs{s}}) \rightarrow \mathbb{C} \, ,
\end{align}
given by
\begin{align} \label{eq:hom_int_num}
[\gamma \otimes T_\gamma(z_1)| \c{\gamma} \otimes T_{\c{\gamma}}(z_1)^{-1}] \mapsto \sum_{x\in \gamma \cap \c{\gamma}} \, \left(T_\gamma(z_1)T_{\c{\gamma}}(z_1)^{-1} |_x\right) [\gamma|\c{\gamma}]^{\textrm{top}}_x ,
\end{align}
where  the cycles $\gamma$ and $\c{\gamma}$ intersect a finite number of times\footnote{To ensure this finiteness condition, in practice we use a regularized cycle for at least one of $\gamma$, $\c{\gamma}$.},  $T_\gamma(z_1)$ refers to the local determination of $T(z_1)$ on the loaded cycle $\gamma$, and $T_\gamma(z_1)T_{\c{\gamma}}(z_1)^{-1} |_x$ evaluates to some phase for any given $x$. The topological intersection index $[\bullet|\bullet]_x^\textrm{top}$ evaluates to $+1$ or $-1$ depending on the relative orientation of $\gamma$ and $\c{\gamma}$. Following the conventions of \cite{Mizera:2017cqs,mimachi2002intersection} we use:
\begin{align}\label{fig:topological_intersection_numbers}
\begin{gathered}
    \includegraphics[scale=0.5]{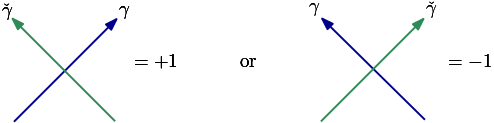}
\end{gathered}\,.     
\end{align}

\begin{rmk}
If $\c{\gamma}\in H_1(\Sigma_g^*,\mathcal{L}_{-\bs{s}})$ and ${\gamma}\in H_1(\Sigma_g^*,\mathcal{L}_{\bs{s}})$ are regularized cycles, i.e.~
\begin{align}
\c{\gamma} &= \reg \, {\c{\gamma}} \,,
& 
{\gamma}  &= \reg \, {\gamma} \, ,
\end{align}
for $\tilde{\c{\gamma}}\in H^{l\!f}_1(\Sigma_g^*,\mathcal{L}_{-\bs{s}})$ and $\tilde{\gamma}\in H^{l\!f}_1(\Sigma_g^*,\mathcal{L}_{\bs{s}})$, the homology intersection numbers $[\gamma|\c{\gamma}]$ can be computed by regulating at least one locally finite cycle:
\begin{align}
[\gamma|\c{\gamma}] = [\tilde{\gamma}|\c{\gamma}]= [\gamma|\tilde{\c{\gamma}}] \, .
\end{align}
That is, we are free to use one locally finite cycle when computing such intersection numbers, and the answer coincides with the result obtained from regularizing both locally finite cycles.
\end{rmk}

Conveniently, the homology intersection number behaves nicely with respect to the {\em dualizing} operation $(\bullet)^\vee$ via $\bullet \mapsto \bullet\vert_{s_{ij} \to - s_{ij}}$.  
This map is an involution on $\mathbb{C}(\exp(2 \pi i s_{1j}))$ and maps between the twisted homology groups 
\begin{align}
    ()^\vee : &H_1(\Sigma_g^*,\mathcal{L}_{\bs{s}}) \rightarrow H_1(\Sigma_g^*,\mathcal{L}_{-\bs{s}}) 
    \, , 
    \nonumber \\
    ()^\vee : &H_1(\Sigma_g^*,{\mathcal{L}}_{-\bs{s}}) \rightarrow H_1(\Sigma_g^*,\mathcal{L}_{\bs{s}}) 
    \, .
\end{align}
For example,  if $\gamma = \Delta \otimes T_\Delta  \in H_1(\Sigma_g^*,\mathcal{L}_{\bs{s}})$ is a twisted cycle. 
Its image under $()^\vee$ is $\check{\gamma} = \Delta \otimes T_\Delta^{-1} \in H_1(\Sigma_g^*,{\mathcal{L}}_{-\bs{s}})$; changing the signs of $s_{1j}$ simply changes the loading $T_\Delta(z_1) \to T_\Delta(z_1)^{-1}$. 
Then, the homology intersection number satisfies:
\begin{align}
\label{eq:antiself_dual}
    [\gamma_A|\c\gamma_B]^\vee 
    = -  [\c{\gamma}^\vee_B|\gamma^\vee_A] 
    = -  [\gamma_B|\check{\gamma}_A] 
    \, ,
\end{align}
where the intersection number is valued in $\mathbb{C}(\exp(2 \pi i s_{1j}))$. 
This property is convenient because it reduces the number of homology intersection numbers we need to compute.

The  homology intersection numbers among the regularized cycles of \eqref{eq:hom_basis_reg} are given by:
\begin{align} \label{eq:first_int_numbers}
[\gamma_{2j}|\check{\gamma}_{2k}]=
\begin{cases}
    -\dfrac{e^{2 \pi i s_{12}}
    }{e^{2 \pi i s_{12}}-1} \,,  &j<k     \vspace{5pt}
    \\
    -\dfrac{
    e^{2\pi i (s_{12}+s_{1j})}-1
    }{(e^{2 \pi i s_{12}}-1)(e^{2 \pi i s_{1j}}-1)} \,,  &j=k
     \vspace{5pt}
    \\
    -\dfrac{
    1
    }{e^{2 \pi i s_{12}}-1} \,,  &j>k
\end{cases}
\end{align}
\begin{align}
[\gamma_{2A_I}|\check{\gamma}_{2k}]&=
    -\dfrac{1-e^{2 \pi i s_{1A_I}}
    }{e^{2 \pi i s_{12}}-1} \,, 
    \\
[\gamma_{2B_I}|\check{\gamma}_{2k}]&=
    -\dfrac{1-e^{2 \pi i s_{1B_I}}
    }{e^{2 \pi i s_{12}}-1} \,. 
\end{align}
\begin{align}
[\gamma_{2A_I}|\check{\gamma}_{2A_{J}}]=
\begin{cases}
    \dfrac{e^{2 \pi i s_{12}}
    (1-e^{2 \pi i s_{1A_I}})
    (1-e^{-2 \pi i s_{1A_J}})
    }{e^{2 \pi i s_{12}}-1} \,,  &I<J
     \vspace{5pt}
    \\
    \dfrac{
    (e^{2 \pi i s_{1A_I}}-1)
    (e^{2 \pi i s_{1A_I}}-e^{2 \pi i s_{12}})
    }{(e^{2 \pi i s_{12}}-1)
    e^{2 \pi i s_{1A_I}}
    } \,,  &I=J
     \vspace{5pt}
    \\
    \dfrac{
    (1-e^{-2 \pi i s_{1A_I}})
    (1-e^{2 \pi i s_{1A_J}})
    }{e^{2 \pi i s_{12}}-1} \,,  &I>J
\end{cases}
\end{align}
\begin{align}
[\gamma_{2B_I}|\check{\gamma}_{2B_{J}}]=
\begin{cases}
    \dfrac{e^{2 \pi i s_{12}}
    (1-e^{2 \pi i s_{1B_I}})
    (1-e^{-2 \pi i s_{1B_J}})
    }{e^{2 \pi i s_{12}}-1} \,,  &I<J
     \vspace{5pt}
    \\
    \dfrac{
    (e^{-2 \pi i s_{1B_I}}-1)
    (e^{-2 \pi i s_{1B_I}}-e^{2 \pi i s_{12}})
    }{(e^{2 \pi i s_{12}}-1)
    e^{-2 \pi i s_{1B_I}}
    } \,,  &I=J
     \vspace{5pt}
    \\
    \dfrac{
    (1-e^{-2 \pi i s_{1B_I}})
    (1-e^{2 \pi i s_{1B_J}})
    }{e^{2 \pi i s_{12}}-1} \,,  &I>J
\end{cases}
\end{align}
\begin{align}
[\gamma_{2A_I}|\check{\gamma}_{2B_{J}}]=
\begin{cases}
    \frac{e^{2 \pi i s_{12}}
    (1-e^{2 \pi i s_{1A_I}})
    (1-e^{-2 \pi i s_{1B_J}})
    }{e^{2 \pi i s_{12}}-1} \,,  &I<J
     \vspace{5pt}
    \\
     - \frac{e^{2\pi i s_{12}} - e^{2\pi i ( s_{1 A_I} + s_{12} )} - e^{2\pi i ( s_{12} - s_{1 B_I}  )} + e^{2\pi i (s_{1A_I} - s_{1 B_I})}}{e^{2\pi i s_{12}} - 1} \,,  &I=J
     \vspace{5pt}
    \\
    \frac{
    (1-e^{-2 \pi i s_{1A_I}})
    (1-e^{2 \pi i s_{1B_J}})
    }{e^{2 \pi i s_{12}}-1} \,,  &I>J
\end{cases}
\end{align}
\begin{align} \label{eq:last_int_numbers}
[\gamma_{2B_I}|\check{\gamma}_{2A_{J}}]=
\begin{cases}
    \frac{e^{2 \pi i s_{12}}
    (1-e^{2 \pi i s_{1B_I}})
    (1-e^{-2 \pi i s_{1A_J}})
    }{e^{2 \pi i s_{12}}-1} \,,  &I<J
     \vspace{5pt}
    \\
    - \frac{1 - e^{2\pi i s_{1 B_I}} - e^{-2\pi i s_{1 A_I}} + e^{2\pi i (-s_{1A_I} + s_{12} + s_{1 B_I})}}{e^{2\pi i s_{12}} - 1} \,,  &I=J
     \vspace{5pt}
    \\
    \frac{
    (1-e^{-2 \pi i s_{1B_I}})
    (1-e^{2 \pi i s_{1A_J}})
    }{e^{2 \pi i s_{12}}-1} \,,  &I>J
\end{cases}
\end{align}
We include the computation of these intersection numbers in appendix \ref{sec:intNumComp}.

We will make use of homology intersection numbers to compute certain {\em cohomology} intersection numbers via twisted Riemann bilinear relations. However, we first need to be certain that we have a basis of twisted homology. 

As discussed earlier, the dimension of the twisted homology group $H_1(\Sigma_g^*,\mathcal{L}_{\bs{s}})$ is the same as the twisted cohomology group $H^1(\Sigma_g^*,\nabla_\omega)$, which was computed by Watanabe: 
\begin{align}
\dim H_1(\Sigma_g^*,\mathcal{L}_{\bs{s}}) = H^1(\Sigma_g^*,\nabla_\omega) = 2g+n-3 \, .
\end{align}
We have constructed a set of $(2g+n-2)$ twisted cycles $\{\gamma_{2a}\}_{a=3,4,\ldots,n,A_1,B_1,\ldots,A_g,B_g} \subset H_1(\Sigma_g^*,\mathcal{L}_{\bs{s}})$ that are subject to the linear relation \ref{eq:mon_rel}. 
We would like to show that \ref{eq:mon_rel} is the only linear relation these cycles satisfy. 
Then, any subset of cycles with cardinality $2g+n-3$  is a basis of the twisted homology.

\begin{figure}
    \centering
    \includegraphics[scale=.8]{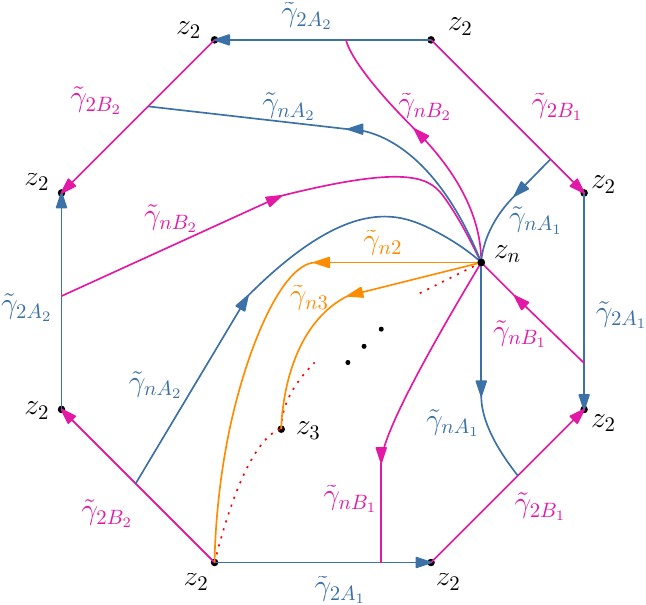}
    \caption{A set of locally finite twisted cycles, shown in the canonical dissection of figure \ref{fig:4ggon}. These twisted cycles are constructed such that the intersection numbers of these with the cycles $\tilde{\gamma}_{2\bullet}$ are simple to compute. Now the cycles $\tilde{\gamma}_{n j}$ start at puncture $z_n$ and go to puncture $j$, above the branch cut (dotted red line). The cycle $\tilde{\gamma}_{nB_1}$ goes from $z_2$ to $z_2$, is homologous to a $\mathfrak{B}_1$-cycle, and starts right below the branch cut. The orientation of the other $\tilde{\gamma}_{nA_\bullet}$ and $\tilde{\gamma}_{nB_\bullet}$ follows the order depicted here, according to the edges of the $4g$-gon.    } \label{fig:ggon_dual_basis}
\end{figure}

To show that there is only one linear relation among these cycles, we could look at the determinants of certain matrices whose elements are intersection numbers that we have already computed.
However, it turns out that it's more convenient to build another set of (locally finite) twisted cycles 
\begin{align}
    \{\c{\gamma}_{na}\}_{a=3,4,\ldots,n-1,A_1,B_1,\ldots,A_g,B_g} \in H^{l\!f}_{1}(\Sigma_g^*,\mathcal{L}_{-\bs{s}}) 
    \, ,
\end{align}
see figure \ref{fig:ggon_dual_basis}. 
Then, the homology intersection numbers $[\gamma_{2a}|\c{\gamma}_{nb}]$ are remarkably simple; most of them vanish. 
In fact, let $\mathbf{H}^{(2,n)}$ be the $(2g+n-3)\times (2g+n-3)$ matrix whose components $H^{(2,n)}_{a,b}$ are the intersection numbers:
\begin{align}
    H^{(2,n)}_{a,b}=[\gamma_{2a}|\c{\gamma}_{nb}] \,  , \qquad a & \in \{3,4,\ldots,n-1,A_1,B_1,A_2,B_2,\ldots,A_g,B_g\} \nonumber  \\
    b & \in\{3,4,\ldots,n-1,B_1,A_1,B_2,A_2,\ldots,B_g,A_g\}\, ,
\end{align}
where the indices are listed in order. Then $\mathbf{H}^{(2,n)}_{a,b}$ is a diagonal matrix whose determinant is given by:
\begin{align}
     \label{eq:matrixDiagDeterminant}
     \det \mathbf{H}^{(2,n)}_{a,b} 
     = \left[
        \prod_{j=3}^{n-1}
        \dfrac{ 
            e^{2 \pi i (s_{12}+\ldots+s_{1j})}
        }{
            e^{2 \pi i s_{1j}}-1
        }
    \right] 
     \times
     \left[
        \prod_{J=1}^{g}{e^{2 \pi i (s_{1A_J}-s_{1B_J})}}
    \right] 
    \times
    \left[
        e^{2 g \pi i s_{1n}} 
    \right]
     \, .
\end{align}
In \eqref{eq:matrixDiagDeterminant}, the term in the first square bracket comes from the intersection numbers among the cycles between different punctures, and the last 2 square brackets contain the contribution of the cycles $[\gamma_{2A_J}]$, $[\gamma_{2B_J}]$, $[\c{\gamma}_{nA_J}]$, $[\c{\gamma}_{nB_J}]$, for  $J=1,2,\ldots,g$. 
Note that (following the conventions%
\footnote{%
    See the intersection number $I_h([\gamma_{10}],[\gamma_{n \infty}^\vee])$ in \cite[Proposition 3.7]{Goto2022}, which is proportional to $e^{2 \pi i c_n}$. Note that we are using the notation of \cite{Goto2022} in this footnote; this means $c_n \leftrightarrow s_{1n}$ and $I_h$ denotes the homology intersection number.
} 
of \cite{Goto2022}) $\gamma_{2B_J}$ carries an extra phase $e^{2\pi i s_{1n}}$ compared to $\gamma_{2A_{J}}$, which yields the last factor above.%

In particular, we note that the determinant being non-singular in \eqref{eq:matrixDiagDeterminant} is a sufficient condition for {\em both} of the sets of twisted cycles appearing in the matrix $\mathbf{H}^{(2,n)}$ to be $\mathbb{C}(e^{2\pi i s_{1\bullet}})-$linearly independent:
\begin{align}
    s_{1j} \not \in \mathbb{Z} \, , \textrm{for } j=3,\ldots,n-1 \, .
\end{align}
But we have already assumed that 
\begin{align}
s_{1j} \not \in \mathbb{Z} \, , \textrm{for } j=2,\ldots,n \, ,
\end{align}
when writing down the regularized cycles in \eqref{eq:hom_basis_reg}. Therefore, under  these previous assumptions the sets of twisted cycles appearing in the matrix  $\mathbf{H}^{(2,n)}$ are bases in twisted homology.

In practice, we will make use of the matrix of homology intersection numbers $\mathbf{H}^{(2,2)}$, with components ${H}^{(2,2)}_{ab}$ given by:
\begin{align}
\label{eq:H22_matrix_of_int_nums_def}
H^{(2,2)}_{a,b}=&[\gamma_{2a}|\c{\gamma}_{2b}] \,  , &a,b\in \{3,4,\ldots,n-1,A_1,B_1,A_2,B_2,\ldots,A_g,B_g\} .
\end{align}
Because the set of cycles $[\gamma_{2a}]$, $a\in \{3,4,\ldots,n-1,A_1,B_1,A_2,B_2,\ldots,A_g,B_g\}$ is a basis of twisted homology, the matrix $\mathbf{H}^{(2,2)}$ is invertible. We denote the components of the inverse matrix ${[\mathbf{H}^{(2,2)}}]^{-1}$ raised indices $H_{(2,2)}^{ab}$. 
That is,
\begin{align}
\sum_{b =3,4,\ldots,n-1,A_1,B_1,\ldots,A_g,B_g} H^{(2,2)}_{a,b}H_{(2,2)}^{b,c} = \delta_a^c  \, ,
\end{align}
where $\delta_a^c$ is a Kronecker delta (i.e.~the components of the identity matrix).

\section{Cohomology intersection numbers and double copy relations}
\label{sec:cohomIntAndDoubleCopy}

In this section, we introduce cohomology intersection numbers  --  another ingredient in the double copy. 
To define the cohomology intersection number, we need to introduce the dual twisted cohomology group 
\begin{align}
    \big( H^1(\Sigma_g^*,\nabla_{\omega}) \big)^\vee
    = H^1(\Sigma_g^*,\nabla_{-\omega})
    \,,
\end{align}
where, as before, the map $()^\vee$ sends $s_{1j} \to -s_{1j}$. 
Consequently, the dual twist and dual connection are simply related to their non-dual counterparts
\begin{align}
    \big( T(z_1) \big)^\vee &= T(z_1)^{-1}
    \,,
    &
    (\omega)^\vee &= -\omega
    \,.
\end{align}
The map $()^\vee$ also maps cohomology classes $[\vphi]$ to dual cohomology classes $[\check{\vphi}]$ 
\begin{align}
    \c{\vphi} = (\vphi)^\vee = \vphi\vert_{s_{1j} \mapsto -s_{1j}}
    \,.
\end{align}
In particular, $\c{\vphi}$ is equal to $\vphi$ whenever $\vphi$ does not explicitly depend on $s_{1j}$.

Clearly, dual cohomology representatives $\c{\varphi}\in H^1(\Sigma_g^*,\nabla_{-\omega})$ 
pair (via integration) with dual cycles $\c{\gamma} \in H_1(\Sigma_g^*,\mathcal{L}_{-\bs{s}})$
to form {\em dual } genus-$g$ hypergeometric integrals: 
\begin{align}
    I^{\c{\varphi}}_{\c \gamma}(z_i,s_{ij}) 
    = \int_{\c\gamma} T(z_1)^{-1}{\c\varphi}(z_1) 
    = I^\varphi_{\gamma}(z_i,-s_{ij})  
    = \left(I^\varphi_{\gamma}(z_i,s_{ij})\right)^\vee
    \, .
\end{align}
Moreover, as displayed above, these integrals are simply related to those introduced in section \ref{sec:GenusGHyper}.

The cohomology intersection pairing is constructed in a analogously to the homology intersection pairing: by regularizing one of the cocycles. 
To do this, we note that the twisted de Rham cohomology is isomorphic to the compactly supported cohomology \cite{aomoto2011theory}
\begin{align}
    H^1(\Sigma^*_g,\nabla_{\pm\omega}) 
    \simeq H^1_c(\Sigma^*_g,\nabla_{\pm\omega})
    \,.
\end{align}
Similar to the twisted homology there is a natural inclusion map
\begin{align}
    i:
    H^1(\Sigma^*_g,\nabla_{\pm\omega})  
    \xrightarrow{ \sim} H^1_c(\Sigma^*_g,\nabla_{\pm\omega})
    \, ,
\end{align}
as well as an inverse called regularization:
\begin{align}
    \reg : 
    H^1_c(\Sigma^*_g,\nabla_{\pm\omega})  
    \xrightarrow{ \sim} H^1(\Sigma^*_g,\nabla_{\pm\omega}) 
    \, .
\end{align}
Then, the cohomology intersection number is defined to be  \cite{cho1995}:
\begin{equation}
\begin{aligned}
\label{eq:coho_int_num}
& \langle \bullet |\bullet\rangle:H^1(\Sigma_g^*,\nabla_{-\omega}) \times H^1(\Sigma_g^*,\nabla_{\omega})
&& \!\!\! \rightarrow \mathbb{C} \\
& \langle \bullet |\bullet\rangle: \qquad\qquad\quad (\c{\varphi}_k,\varphi_l) 
&& \!\!\! \mapsto  \langle \c\varphi_k |\varphi_l\rangle := -\int_{\Sigma_g^*} \reg[\c{\varphi}_k] \wedge \varphi_l
 \, ,
\end{aligned}
\end{equation}
where it only depends on the cohomology class of $\c\varphi_k$ and $\varphi_l$. 
The defining integral is also convergent since $\reg[\c{\varphi}_k]$ has compact support away from all singularities.
Note that like the homology intersection number, one is free to regulate either $\c{\vphi}$ or $\vphi$. 
For our purposes, we require a generalization of the above intersection number (detailed in the next section). 
This generalization does not require spelling out the details of the regularization map which are not presented here (the interested reader can consult \cite{aomoto2011theory}).

We remark that the cohomology intersection number $\la \c{\varphi}_k \vert \varphi_l \ra$ is non-zero if and only if there is a $z^* \in \Sigma_g$ such that 
$
    \mathrm{ord}_{z_1=z*}^\mathrm{pole}(\c{\varphi}_k)
    + \mathrm{ord}_{z_1=z*}^\mathrm{pole}({\varphi}_l)
    - \mathrm{ord}_{z_1=z*}^\mathrm{zero}(\c{\varphi}_k) 
    - \mathrm{ord}_{z_1=z*}^\mathrm{zero}({\varphi}_l)
    - 1 
    > 0
    \,,
$
Therefore, since the holomorphic differentials $\omega_I$ are free from poles and zeros, the intersection number vanishes 
\begin{align} \label{eq:cohomo_int_num_omegas}
    \la \check{\omega}_J \vert \omega_I \ra = 0 
    \qquad \forall \qquad 
    I,J \in \{1,\dots,g\}
    \,.
\end{align}
We will soon see how $\la \check{\omega}_J \vert \omega_I \ra = 0$ leads to quadratic relations that the integrals $I^\varphi_\gamma(z_1,s_{1j})$ and $I^\varphi_\gamma(z_1,-s_{1j})$ satisfy.

\subsection{Complex hypergeometric integrals}

Instead of dualizing the cohomology using the map $()^\vee$ we can use complex conjugation $\overline{()}$. 
As far as the local system is concerned, complex conjugation is equivalent to the map $()^\vee$ for real parameters $s_{1\bullet}\in \mathbb{R}$ with $\bullet\in \{2,3,\ldots,n,A_1,B_1\ldots,A_g,B_g\}$. 
The local system $\overline{\mathcal{L}_{\bs{s}}}$ is given by local determinations of $\overline{T(z_1)}$, and is naturally isomorphic to $\mathcal{L}_{-\bs{s}}$, since 
\begin{align}
    \overline{T(z_1)} =\frac{|T(z_1)|^2}{T(z_1)} \, .
\end{align}
As with $()^\vee$, complex conjugation also extends to an isomorphism of twisted (co)homology groups \cite{hanamura1999hodge}:
\begin{align}
    H_1\left(
        \Sigma_g^*,
        \overline{\mathcal{L}_{\bs{s}}}
    \right)
    &\cong H_1(\Sigma_g^*, \mathcal{L}_{-\bs{s}}) 
    \cong H_1(\Sigma_g^*, \mathcal{L}_{\bs{s}}) 
    \,,
    \\
    H^1\left(
        \Sigma_g^*,
        \nabla_{\overline{\omega}}
    \right)
    &\cong H^1(\Sigma_g^*,\nabla_{-\omega}) 
    \cong H^1(\Sigma_g^*,\nabla_{\omega}) 
    \,.
\end{align}
Conveniently, complex conjugation of any $\vphi \in H^1(\Sigma_g^*,\nabla_{\omega})$ or $\gamma = \Delta \otimes T_\Delta \in H_1(\Sigma_g^*, \mathcal{L}_{\bs{s}})$ 
yields a representative for a (co)homology class of $H^1(\Sigma_g^*,\nabla_{\overline{\omega}})$ or $H_1(\Sigma_g^*, \overline{\mathcal{L}_{\bs{s}}})$
\begin{align}\begin{aligned}
    \overline{()}:& \vphi \mapsto \overline{\vphi} 
    \in H^1(\Sigma_g^*,\nabla_{\overline{\omega}})
    \,,
    \\
    \overline{()}:& \gamma = \Delta \otimes T_\Delta \mapsto \overline{\gamma} 
    = \overline{\Delta} \otimes 
        \overline{T_\Delta}
    \in H_1(\Sigma_g^*, \overline{\mathcal{L}_{\bs{s}}})
    \,.
\end{aligned}\end{align}
The fact that elements of the dual and complex conjugate (co)homologies can be obtained by acting with $()^\vee$ and $\overline{()}$ on elements of $H^1(\Sigma_g^*, \nabla_{\omega})$ and $H_1(\Sigma_g^*, \mathcal{L}_{\bs{s}})$ is summarized in figure \ref{fig:iso}.

Naturally, one can pair the complex conjugated cohomology with the complex conjugated homology via integration to produce yet another genus-$g$ hypergeometric function
\begin{align}\begin{aligned}
    I^{\overline{\varphi}}_{\overline{\gamma}}
    = \la \overline{\vphi} \vert \overline{\gamma} ]
    := \int_{\overline{\gamma}}\; \overline{T(z_1)}\; \overline{\varphi} 
    = \overline{ \int_\gamma T(z_1) \varphi } 
    = \overline{ [ \gamma \vert \vphi \ra }
    = \overline{I^\varphi_\gamma}
    \, ,
\end{aligned}\end{align}
where one can only pull the complex conjugation from inside to outside the integration symbol when $s_{1j} \in \mathbb{R}$.
We can also pair homology with the complex conjugate homology via an intersection pairing $[\bullet|\bullet]:H_1(\Sigma_g^*,{\mathcal{L}_{\bs{s}}}) \times H_1(\Sigma_g^*,\overline{\mathcal{L}_{\bs{s}}})\rightarrow \mathbb{C}$. 
One peculiarity of this pairing is that it coincides with the one we have computed already \cite{mimachi2002intersection, mimachi2004intersection}:
\begin{align}
    [\gamma_a\vert\overline{\gamma}_b] 
    = [\gamma_a\vert\c{\gamma}_b]
    \,,
\end{align}
where $\c{\gamma}_b$ and $\overline{\gamma}_b$ are the image of $\gamma_b$ under $()^\vee$ and $\overline{()}$ (c.f., figure \ref{fig:iso}). 
Because of this equality, we use the notation $[{\gamma}_a|\c\gamma_b]$ for either of these pairings unless further clarification is needed.

On the other hand, the intersection number that pairs the cohomology and the complex conjugate cohomology  
$
    \langle\bullet\vert\bullet \rangle : 
    H^1(\Sigma_g^*, \overline{\mathcal{L}_{\bs{s}}}) 
    \times H^1({\Sigma_g^*},{\mathcal{L}_{\bs{s}}}) 
    \rightarrow \mathbb{C}
$
cannot be derived from knowledge of the usual intersection pairing \eqref{eq:coho_int_num}. 
Instead, one interprets the pairing between the cohomology and the complex conjugate cohomology as a single-valued or \emph{complex}%
\footnote{%
    We follow the naming convention of Aomoto in \cite{Aomoto87}. 
    In string theory parlance, $I^\varphi_\gamma$ are open-string integrals, while the $J^{\overline{\varphi}_k , \varphi_l }$ introduced here are closed-string integrals.
}
hypergeometric genus-$g$ integral:
\begin{align}\label{eq:complex_rw_genus_g}
    J^{\overline{\varphi}_k,\varphi_l}(z_1,s_{1j})
    = \la \overline{\varphi}_k | \varphi_l \ra
    :=  \int_{\Sigma_g^*} |T(z_1)|^2\;  \varphi_l  \wedge \overline{\varphi_k} 
    \, .
\end{align}
At least at genus-0, the leading term in the $s_{1\bullet} \to 0$ expansion of $J^{\varphi_k,\varphi_l}$ coincides with the intersection pairing \eqref{eq:coho_int_num}; we expect this to hold true at higher genus as well.

\begin{figure}
    \centering
    \includegraphics[]{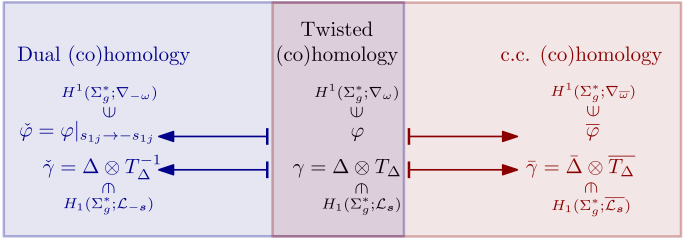}
    \caption{%
         Summary of how the maps $()^\vee$ and $\overline{()}$ descend to maps  in (co)homology.
         That is, one only needs to understand  how to construct the elements of $H^1(\Sigma_g^*;\nabla_\omega)$ and $H_1(\Sigma_g^*;\mathcal{L}_{\bs{s}})$. 
         The elements of the dual and complex conjugated (co)homologies trivially follow. 
    }
    \label{fig:iso}
\end{figure}

\subsection{Twisted Riemann bilinear relations and the double copy}

\begin{figure}
    \centering
    \includegraphics[]{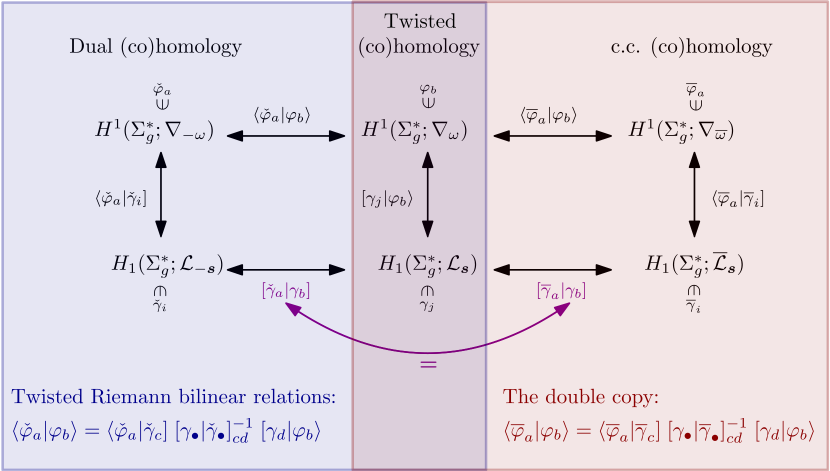}
    \caption{%
         Summary of the introduced twisted (co)homology groups and the various pairings between them: vertical pairings are integration, horizontal pairings are intersection numbers. 
         In particular, the intersection numbers in {\color{Purple} purple} are equal (when $\check{\gamma}$ and $\overline{\gamma}$ are obtained from $\gamma$ as in figure \ref{fig:iso}). 
    }
    \label{fig:pairings}
\end{figure}

We have described several twisted (co)homology groups and the various non-degenerate bilinear pairings among them; this is summarized in figure \ref{fig:pairings}.
Because these bilinear pairings are compatible with each other, we can use linear algebra to write one in terms of the others. 
The \emph{twisted Riemann bilinear relations} \cite[theorem 2]{cho1995} are the twisted version of the usual Riemann bilinear relations (see proposition \ref{prop:TwistedRiemannBilinear}); they relate the various pairings between the twisted (co)homologies and their duals (left/{\color{Blue}blue} side of figure \ref{fig:pairings}). 
On the other hand, the \emph{double copy} (proposition \ref{prop:doubleCopy}) is an analogue of the twisted Riemann bilinear relations where one replaces the dual twisted (co)homologies with the complex conjugated twisted (co)homologies (right/{\color{BrickRed}red} side of figure \ref{fig:iso}). 
\\

\begin{prop} \label{prop:TwistedRiemannBilinear}
    Twisted Riemann bilinear relations \cite[theorem 2]{cho1995}. Let $\c{\varphi}\in H^1(\Sigma_g^*,\nabla_{-\omega})$, $\psi\in H^1(\Sigma_g^*,\nabla_{\omega})$ be twisted 1-forms and $\langle \varphi \vert \psi\rangle$ their cohomology intersection number. Let $\{{\gamma}_a\}_{a \in \mathcal{K}}$, $\{\c{\gamma}_b\}_{b \in \mathcal{J}}$ be bases  for their respective twisted homology groups, $\gamma_a \in H_1(\Sigma_g^*,\mathcal{L}_{\bs{s}})$, $\c\gamma_b \in H_1(\Sigma_g^*,\mathcal{L}_{-\bs{s}})$. Denote by $H_{a,b}=[\gamma_a,\c\gamma_b]$ the matrix of homology intersection numbers, and  $H^{a,b}$ the $(a,b)$-entries of its matrix inverse. Then the twisted Riemann relations hold:
\begin{align}
    \langle \c{\varphi} \vert \psi\rangle = \sum_{a \in \mathcal{K},b\in \mathcal{J}} H^{a,b} I^\varphi_{\gamma_b}(z_1,-s_{1j})I^\psi_{\gamma_a}(z_1,s_{1j}).
\end{align}
\end{prop}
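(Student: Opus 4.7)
The plan is to recast the claim as a matrix identity, reduce it to basis elements, and derive the resulting formula from Poincaré--Lefschetz duality between compactly supported and ordinary twisted (co)homology on $\Sigma_g^*$.

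First, I would fix bases $\{\psi_l\}$ of $H^1(\Sigma_g^*,\nabla_\omega)$ and $\{\c\psi_k\}$ of $H^1(\Sigma_g^*,\nabla_{-\omega})$, and introduce period matrices $P_{la}:=I^{\psi_l}_{\gamma_a}$, $\c P_{kb}:=I^{\c\psi_k}_{\c\gamma_b}$, together with the cohomology intersection matrix $C_{kl}:=\la\c\psi_k\vert\psi_l\ra$. Both sides of the claimed formula are bilinear and depend only on the cohomology classes of $\c\vphi$ and $\psi$, so it suffices to check the identity for $\c\vphi=\c\psi_k$, $\psi=\psi_l$, in which case it reduces to a matrix sandwich identity involving $\mathbf{C}$, $\c{\mathbf{P}}$, $\mathbf{P}$ and $\mathbf{H}^{-1}$, where $\mathbf{H}=(H_{a,b})$ is the homology intersection matrix.

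Next, I would invoke the regularization isomorphism $\reg: H_c^1(\Sigma_g^*,\nabla_{-\omega}) \xrightarrow{\sim} H^1(\Sigma_g^*,\nabla_{-\omega})$ to pick a compactly supported representative $\c\psi_k^c$ cohomologous to $\c\psi_k$, so that $\la\c\psi_k\vert\psi_l\ra = -\int_{\Sigma_g^*}\c\psi_k^c\wedge\psi_l$ is absolutely convergent. The key input is Poincaré--Lefschetz duality on the open surface $\Sigma_g^*$: wedging-and-integrating is a perfect pairing $H_c^1(\Sigma_g^*,\nabla_{-\omega}) \times H^1(\Sigma_g^*,\nabla_\omega) \to \mathbb{C}$, and, composed with twisted de~Rham duality and the regularization iso $H_1 \simeq H_1^{l\!f}$, it identifies $[\c\psi_k^c]$ with a homology class $\eta_k = \sum_a \eta_{ka}\gamma_a \in H_1(\Sigma_g^*,\mathcal{L}_{\bs{s}})$ characterized by the property that $-\int_{\Sigma_g^*}\c\psi_k^c\wedge\psi_l = \sum_a \eta_{ka}\,P_{la}$ for every $\psi_l$.

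To determine $\eta_{ka}$, I would pair $\eta_k$ with each $\c\gamma_b$: on one side this gives $\sum_a \eta_{ka} H_{a,b}$, while on the other the Poincaré--Lefschetz identification computes the same intersection number as the period $\int_{\c\gamma_b} T^{-1}\c\psi_k^c = \c P_{kb}$ (the equality with $\int_{\c\gamma_b}T^{-1}\c\psi_k$ follows because $\c\psi_k^c - \c\psi_k$ is $\nabla_{-\omega}$-exact and $\c\gamma_b$ is a twisted cycle). Inverting $\mathbf{H}$ then expresses each $\eta_{ka}$ as a linear combination of the $\c P_{kb}$ with coefficients from $\mathbf{H}^{-1}$, and substituting into $C_{kl}=\sum_a \eta_{ka} P_{la}$ yields the stated identity. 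The delicate step --- the substance of \cite[Theorem~2]{cho1995} --- is justifying this Poincaré--Lefschetz compatibility: that the wedge-and-integrate pairing on $H_c^1(\nabla_{-\omega}) \times H^1(\nabla_\omega)$ corresponds, under regularization and de~Rham duality, to the matrix inverse of the homology intersection form. This is best established by a cellular or Čech model of $H_c^1(\Sigma_g^*,\nabla_{-\omega})$ whose cochains are explicitly dual (with respect to integration) to the basis cycles $\{\gamma_a\}$; the regularization map then admits a concrete, computable description in that model.
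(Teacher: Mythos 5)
Your proposal is correct and takes essentially the same route as the paper: the paper's proof simply defers to \cite[theorem 2]{cho1995} and remarks that, conceptually, the identity is the linear-algebra compatibility of four non-degenerate pairings among finite-dimensional vector spaces (cf.\ \cite[Proposition 2.1]{Mizera:2019gea}), which is precisely the reduction you spell out with period matrices and $\mathbf{H}^{-1}$. The single genuinely nontrivial ingredient --- that the wedge pairing of the regularized compactly supported representative is compatible, via de Rham duality, with the homology intersection form --- you correctly isolate as the crux and, like the paper, ultimately attribute to \cite[theorem 2]{cho1995}.
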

\begin{proof}
    The reader can look at the proof of \cite[theorem 2]{cho1995}. More conceptually, we can think of this as a statement in linear algebra: we have 4 finite-dimensional vector spaces with non-degenerate bilinear pairings among them. The matrix element of any one of such pairing can then be written in terms of the others. See \cite[Proposition 2.1]{Mizera:2019gea} for the computation in this linear algebra setting.
\end{proof}

As a first example of these relations, consider the cohomology intersection number $\langle\c{\varphi}_k\vert\varphi_l\rangle$. Let $\mathcal{K}$ be the following index set of a basis of twisted homology $H_1(\Sigma_g^*,\mathcal{L}_{\bs{s}})$:
\begin{align}
\mathcal{K}=\{3,4,\ldots,(n-1),A_1,B_1,A_2,B_2,\ldots,A_g,B_g\} \, ,
\end{align}
then, the matrix of homology intersection numbers $\mathbf{H}^{(2,2)}$ introduced in \eqref{eq:H22_matrix_of_int_nums_def} has components given by 
\begin{align}
H^{(2,2)}_{a,b} = [{\gamma}_{2a}|\c{\gamma}_{2b}], &&a,b\in \mathcal{K} \, .
\end{align}
Then, an example of the Riemann bilinear relations applied on the cohomology intersection number $\la\c{\varphi}_k\vert\varphi_l\ra$ is:
\begin{align}
\label{eq:RiemannBilinearRelations}
\la\c{\varphi}_k\vert\varphi_l\ra
 = \sum_{a,b\in \mathcal{K}}H^{a,b}_{(2,2)} I^{\varphi_k}_{\gamma_b}(z_1,-s_{1j}) I^{\varphi_l}_{\gamma_a}(z_1,s_{1j}) \, ,
\end{align}
where $H^{a,b}_{(2,2)}$ is the $a,b$- component of the inverse matrix $[\mathbf{H}^{(2,2)}]^{-1}$.  We remark that we could have chosen any basis of homology  cycles in the formula in \eqref{eq:RiemannBilinearRelations}, and we use this symmetric choice 
so that the basis of dual integrals is simply related to the basis of non-dual integrals (i.e.~ minimize the number of integrals one has to compute).

As a more concrete example, consider the holomorphic differentials $[\omega_I]\in H^1(\Sigma_g^*,\nabla_\omega)$, which have a vanishing cohomology intersection numbers with each other \eqref{eq:cohomo_int_num_omegas}. Then, the twisted Riemann bilinear relations imply that a certain bilinear combination of hypergeometric integrals $I^{\omega_J}_{\gamma_a}(z_i,s_{1j})$ and their duals $I^{\omega_I}_{\gamma_a}(z_i,-s_{1j})$ vanishes:
\begin{align}
\label{eq:riemann_bilinear_rels_omegas}
\la\omega_I|\omega_J\ra = 0 = \sum_{a,b\in \mathcal{K}}H^{a,b}_{(2,2)} \bigg[\int_{{\gamma}^\vee_b}T(z_1)^{-1}\, \omega_I\bigg] \bigg[\int_{{\gamma}_a}T(z_1)\, \omega_J\bigg]  \, , &&I,J=1,2,\ldots,g \, .
\end{align}
The twisted Riemann bilinear relations above are the most natural to define, and in the genus-one case have been used to compute differential equations of hypergeometric integrals \cite{Mano2012,Goto2022,bhardwaj2024double}. In our work, we found equation \eqref{eq:riemann_bilinear_rels_omegas} useful in performing nontrivial checks of both the homology intersection numbers and numerical implementation of the hypergeometric functions.

We have also introduced the complex hypergeometric integrals $J^{\overline{\varphi_k},\varphi_l}$, which have an interpretation as cohomology intersection numbers when all the $s_{1\bullet}$ are real, i.e.
\begin{align}
\label{eq:reality_condition}
s_{1\bullet}\in\mathbb{R}, &&\textrm{for }\bullet\in\{2,3,\ldots,n,A_1,B_1,A_2,B_2,\ldots,A_g,B_g\} \, .
\end{align}
Thus, there are another four non-degenerate bilinear pairings among twisted homology and cohomology groups. 
Naturally, one expects an analogue of the twisted Riemann bilinear relations for these.
\\

\begin{prop} \label{prop:doubleCopy}
    Double copy relations \cite[equation (5.1)]{hanamura1999hodge}.   Let ${\varphi},\psi\in H^1(\Sigma_g^*,\nabla_{\omega})$, be twisted 1-forms and $\langle \overline{\varphi},\psi\rangle=J^{\overline{\varphi},\psi}$, and let all the $s_{1\bullet}\in \mathbb{R}$. Let $\{{\gamma}_a\}_{a \in \mathcal{K}}$, $\{\c{\gamma}_b\}_{b \in \mathcal{J}}$ be bases  for their respective twisted homology groups, $\gamma_a \in H_1(\Sigma_g^*,\mathcal{L}_{\bs{s}})$, $\c\gamma_b \in H_1(\Sigma_g^*,\mathcal{L}_{-\bs{s}})$. Denote by $H_{a,b}=[\gamma_a,\c\gamma_b]$ the matrix of homology intersection numbers, and  $H^{a,b}$ the $(a,b)$-entries of its matrix inverse. Then the double copy relations hold:
\begin{align}
\langle \overline{\varphi},\psi\rangle = \sum_{a \in \mathcal{K},b\in \mathcal{J}} H^{a,b} \overline{I^\varphi_{\gamma_b}(z_1,s_{1j})}I^\psi_{\gamma_a}(z_1,s_{1j}).
\end{align}
\end{prop}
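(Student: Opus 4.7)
The plan is to mirror the proof of the twisted Riemann bilinear relations (Proposition \ref{prop:TwistedRiemannBilinear}) in the setting where the dual twisted (co)homology is replaced by the complex-conjugated twisted (co)homology. That is, I would apply the abstract linear-algebraic identity of \cite[Proposition 2.1]{Mizera:2019gea} to the right/red side of figure \ref{fig:pairings} instead of the left/blue side. This works provided all four pairings on that side exist, are non-degenerate, and are compatible in the same sense as in the dual setup.

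First, I would check that the four ingredients needed by the linear-algebraic identity are present. The integration pairings
\begin{align*}
    H^1(\Sigma_g^*, \nabla_\omega) \times H_1(\Sigma_g^*, \mathcal{L}_{\bs{s}}) \to \mathbb{C}, \qquad
    H^1(\Sigma_g^*, \nabla_{\overline{\omega}}) \times H_1(\Sigma_g^*, \overline{\mathcal{L}_{\bs{s}}}) \to \mathbb{C}
\end{align*}
are non-degenerate by twisted de Rham duality (applied directly and after complex conjugation). The cohomological pairing $\langle \overline{\varphi}|\psi\rangle = J^{\overline{\varphi},\psi}$ is non-degenerate (this is a higher-genus analogue of the statement used in \cite{hanamura1999hodge, Aomoto87}), and the homological pairing $[\gamma_a | \overline{\gamma}_b]$ is non-degenerate following \cite{mimachi2002intersection, mimachi2004intersection}.

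Second, I would invoke the linear algebraic version of the Riemann bilinear identity exactly as in the proof of Proposition \ref{prop:TwistedRiemannBilinear}: pick bases $\{\gamma_a\}_{a\in\mathcal{K}}$ and $\{\overline{\gamma}_b\}_{b\in\mathcal{J}}$, assemble the matrix of homology intersection numbers, and invert it to express the cohomology pairing as a bilinear in integration pairings. This yields a preliminary version of the identity,
\begin{align*}
    \langle \overline{\varphi}|\psi\rangle = \sum_{a\in\mathcal{K},\,b\in\mathcal{J}} \widetilde{H}^{a,b}\; I^{\overline{\varphi}}_{\overline{\gamma}_b}\; I^{\psi}_{\gamma_a},
\end{align*}
where $\widetilde{H}_{a,b} := [\gamma_a | \overline{\gamma}_b]$ and $\widetilde{H}^{a,b}$ are the components of its matrix inverse.

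Third, I would use the hypothesis $s_{1\bullet}\in\mathbb{R}$ to rewrite this expression in terms of the original intersection matrix $H$ and the original integrals $I^\varphi_{\gamma_b}$. Reality of the $s_{1\bullet}$'s means $\overline{T(z_1)} = |T(z_1)|^2/T(z_1)$, so complex conjugation of a twisted cycle differs from its dual (under $()^\vee$) only by a single-valued, globally defined factor; consequently the local phases that enter the intersection number \eqref{eq:hom_int_num} agree in both cases. Hence $\widetilde{H}_{a,b} = [\gamma_a|\overline{\gamma}_b] = [\gamma_a|\check{\gamma}_b] = H_{a,b}$, as noted in the excerpt, and $I^{\overline{\varphi}}_{\overline{\gamma}_b} = \overline{I^{\varphi}_{\gamma_b}}$ since one may pull the complex conjugation inside the integral. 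Substituting these identifications yields the claim.

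The main obstacle I anticipate is the careful verification of the identity $[\gamma_a|\overline{\gamma}_b] = [\gamma_a|\check{\gamma}_b]$ at higher genus. While this is standard in the genus-zero and genus-one settings treated in \cite{mimachi2002intersection, mimachi2004intersection, ghazouani2016moduli}, one has to track the branch-choice phases on the long $\mathfrak{A}$- and $\mathfrak{B}$-cycle arcs and show that complex conjugation at real $s_{1\bullet}$ really does implement $s_{1j}\mapsto -s_{1j}$ at the level of local intersection contributions. Once this is established, the rest of the proposition is a direct specialization of the twisted Riemann bilinear relations.
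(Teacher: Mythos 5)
Your proposal is correct and follows essentially the same route as the paper: the paper proves this proposition by specializing the twisted Riemann bilinear relations of Proposition \ref{prop:TwistedRiemannBilinear}, using exactly the ingredients you assemble (the equality $[\gamma_a\vert\overline{\gamma}_b]=[\gamma_a\vert\check{\gamma}_b]$ from \cite{mimachi2002intersection,mimachi2004intersection} and $I^{\overline{\varphi}}_{\overline{\gamma}_b}=\overline{I^{\varphi}_{\gamma_b}}$ for real $s_{1\bullet}$, both stated in section 5.1). The paper additionally supplies an independent elementary Stokes-theorem verification in appendix \ref{app:proofDC}, but that is a supplement rather than the proof of the proposition itself.
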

\begin{proof}
This is just a specialization of the twisted Riemann bilinear relations, and the proof follows similarly.
\end{proof}

Using the same homology basis labeled by $\mathcal{K}$,  the double copy relations are:
\begin{align} \label{eq:RiemannBilinearRelations_complex}
    J^{\overline{\varphi_k},\varphi_l}(z_i,s_{1j}) 
    = \sum_{a,b\in \mathcal{K}}H^{a,b}_{(2,2)} \overline{I^{\varphi_k}_{\gamma_b}(z_i,s_{1j})} I^{\varphi_l}_{\gamma_a}(z_i,s_{1j}) 
    \, .
\end{align}
Whenever the integrals 
$J^{\overline{\varphi_k},\varphi_l}(z_i,s_{1j})$, $I^{\varphi_k}_{\gamma_b}(z_i,s_{1j})$ 
and $I^{\varphi_l}_{\gamma_a}(z_i,s_{1j})$ 
converge for all  
$a, b \in \{ 3, 4, \ldots, n, \linebreak A_1, B_1, \ldots, A_g, B_g \}$. 
There is an elementary proof of this statement in appendix \ref{app:proofDC}, with no need for the setup of \cite{hanamura1999hodge} or twisted (co)homology.

In the physics literature, such bilinear relations are known as KLT relations -- following Kawai, Lewellen and Tye \cite{Kawai:1985xq} -- or {\em double copy relations}. 
We will use the latter term to refer to these bilinear relations.
A first concrete family of examples of the double copy relations in \eqref{eq:RiemannBilinearRelations_complex} is obtained, again, from holomorphic differentials $\omega_I \in H^1(\Sigma_g^*,\nabla_\omega)$:
\begin{align}
\label{eq:RiemannBilinearRelations_complex_omegas}
\int_{\Sigma_g^*}|T(z_1)|^2\omega_J \wedge \overline{\omega_I}= \sum_{a,b\in \mathcal{K}}H^{a,b}_{(2,2)} \overline{\bigg[\int_{{\gamma}_a}T(z_1)\, \omega_I\bigg] } \bigg[\int_{{\gamma}_b}T(z_1)\, \omega_J\bigg] \, .
\end{align}
See the ancillary files for numerical verification of the above relation at genus two. 

\subsection{Connection to higher-genus string integrals}

In this section we make manifest the connection between our ``mathematical'' twist $T(z_1)$ and the Koba-Nielsen factor in the chiral splitting formalism of string theory. 
The Koba-Nielsen $\mathcal{I}_n(z_i,s_{ij},p_I)$ is an universal factor appearing in the computation of string amplitudes, see e.g. \cite{DHoker:1988pdl}, and is given by\footnote{This is equation (3.11) in \cite{DHoker:2020prr}, but written with $k_i \cdot k_j$ in place of $s_{ij}$.}:
\begin{align}\begin{aligned}
    \mathcal{I}_n(z_i,s_{ij},p_I) 
    &= \exp\bigg\{
        i \pi \sum_{I,J=1}^g\Omega_{IJ}p^I \cdot p^J
        + \sum_{I=1}^g \sum_{j=1}^n 2\pi i p^I \cdot k_i 
            \big(\nu_I(z_i)-\nu_I(z_0)\big) 
        \\
        &\phantom{\exp\bigg\{}
        + \sum_{1\leq i<j\leq n} k_i\cdot k_j
        \log E(z_i,z_j)   
    \bigg\} 
    \, .
\end{aligned}\end{align}
Here, the $p^I \in \mathbb{R}^{10}$ with $I=1,\ldots,g$ are {\em{loop momenta}}, the $k_i\in \mathbb{R}^{10}$ with $i=1,\ldots,n$ are the external momenta of the scattered strings, $\cdot$ is the Lorentzian inner product,  $z_j \in \Sigma_g$ are points on a Riemann surface of genus $g$ and $z_0 \in \Sigma_g$ is an arbitrary reference point. 
Moreover, the $k_i$ satisfy the momentum conservation condition:
\begin{align}
    \sum_{j=1}^n k_j =0  
    \, .
\end{align}

The twist $T(z_1)$ is highly reminiscent of the chiral splitting Koba-Nielsen factor $\mathcal{I}_n$. 
More specifically, after isolating every term in $\mathcal{I}_n$ that contains $z_1$, one finds
\begin{align}
    \mathcal{I}_n(z_1) =\exp \bigg\{\sum_{I=1}^g\sum_{j=1}^n 2\pi i\, p^I \cdot k_i\, \nu_I(z_1) 
    + \sum_{j=2}^n s_{1j} \log E(z_1,z_j)   \bigg\} \, .
\end{align}
This coincides with $T(z_1)$ after identifying
\begin{align}
    s_{ij}& \leftrightarrow k_{i} \cdot k_{j}   
    \, ,
    \\
    \label{eq:loop_momenta}
    p^I \cdot k_i & \leftrightarrow s_{1A_I} 
    \, .
\end{align}
Next, recall the condition \eqref{eq:reality_condition} for the reality of the $s_{1\bullet}$. 
For the $s_{1j}\in \mathbb{R}$, for $j=2,3,\ldots,n$ this is a natural condition since the momenta $k_i$ vectors in a real vector space. 
On the other hand, one often has to complexity the loop momentum $p_I$ in order to make sense of string integrals and their analytic continuations. 
Moreover, the loop momentum is usually unconstrained. 
Therefore, it is not so obvious how to interpret the reality of $s_{1A_I}$ and $s_{1B_I}$ in the string theory context.

To find a physical interpretation of the mathematical reality requirements, take the imaginary part of $s_{1B_J}$ (c.f., \eqref{eq:s1B_defined}), equate it to zero, and solve for $s_{1A_I}$:
\begin{align} \label{eq:reality_condition_solving}
    s_{1A_I} 
    = - \sum_{J=1}^g Y^{IJ}\sum_{j=2}^ns_{1j} \Im[\nu_J({z_j})]  \, ,
    \quad\forall \, I \in \{1,\cdots g\}
    \, .
\end{align}
Here, $\mathbf{Y}$ is the imaginary part of the $g\times g$ period matrix $Y_{IJ} = \Im(\Omega_{IJ})$. 
It is a positive definite matrix with an inverse whose components are denoted with raised indices $Y^{IJ} = (Y^{-1})_{IJ}$.
In view of \eqref{eq:loop_momenta}, the reality condition \eqref{eq:reality_condition_solving} translates to the following condition on the loop momenta $p^I$:
\begin{align} \label{eq:loopMomLoc}
    p^I \rightarrow  
    -\sum_{J=1}^g Y^{IJ} \sum_{j=2}^n k_j \Im[\nu_J(z_j)] 
    \, ,
\end{align}
This condition on the loop momentum is exactly the leading substitution rule in string perturbation theory when one performs the {\em{loop momentum integration}} after massaging the integrand into a Gaussian form. 
In other words, \eqref{eq:loopMomLoc} corresponds to the saddle point of a Gaussian integral. 
See first line of \cite[equation (7.13)]{Mafra:2018pll} for a genus-one example. 

To see how the reality of $s_{1A_I}$ and $s_{1B_I}$ effect the double copy, substitute \eqref{eq:reality_condition_solving} into the factor of $|T(z_1)|^2$ in the integrand of the complex hypergeometric integral $J^{\overline{\varphi_k}\varphi_l}$. 
Before substituting, we do a small rewriting of $|T(z_1)|^2$:
\begin{align}
    |T(z_1)|^2 
    &=\exp\bigg\{
        -4 \pi  \sum_{J=1}^gs_{1A_J} \Im[\nu_J(z_1)]
        + \sum_{j=2}^n s_{1j} \log |E(z_1,z_j)|^2 
    \bigg\}  
    \, .
\end{align}
Next, we substitute \eqref{eq:reality_condition_solving} into $|T(z_1)|^2$:
\begin{align}
    |T(z_1)|^2\bigg|_\eqref{eq:reality_condition_solving} 
    &= \exp\bigg\{
        4 \pi  \sum_{I,J=1}^g \sum_{j=2}^n s_{1j} Y^{IJ}\Im[\nu_I(z_j)]\Im[\nu_J(z_1)]
        + \sum_{j=2}^n s_{1j}\log  |E(z_1,z_j)|^2 
    \bigg\}   
\nonumber \\
    &=\exp\left\{ 
        \sum_{j=2}^n s_{1j}
            \bigg[
                \log  |E(z_1,z_j)|^2 
                + 4 \pi \sum_{I,J=1}^g  Y^{IJ} \Im[\nu_I(z_j)]\Im[\nu_J(z_1)]  
            \bigg] 
    \right\}   
	\nonumber \\
    &=\exp\left\{ 
        \sum_{j=2}^n s_{1j}
        \bigg[
            G(z_1,z_j)
            + 2 \pi \sum_{I,J=1}^g Y^{IJ} 
                \Im[\nu_I(z_j)]\Im[\nu_J(z_j)]  
        \bigg] 
    \right\}   \, ,
\label{eq:Tofz1Sq_after_substitution}
\end{align}
where we have introduced the {\em{string Green's function}}\footnote{We follow the conventions of \cite{DHoker:2023vax} -- see equation (3.11) therein. The authors of loc. cit. also introduce the Arakelov Green's function $\mathcal{G}(z_1,z_j)$. One can use further algebraic manipulations to write $|T(z_1)|^2$ in terms of this $\mathcal{G}(z_1,z_j)$.} $G(z_1,z_j)$, which is single-valued for $z_1,z_j \in \Sigma_g$:
\begin{align}
G(z_1,z_j)=\log|E(z_1,z_j)|^2-2 \pi \sum_{I,J=1}^g Y^{IJ}\Im[\nu_I(z_1)-\nu_I(z_j)]\Im[\nu_J(z_1)-\nu_J(z_j)] \, .
\label{eq:TofZ1sq_in_GreenFunc}
\end{align}
Note that we have ``completed the square'' in the computation of \eqref{eq:TofZ1sq_in_GreenFunc}, and we further used momentum conservation to show that the $j$-independent term vanishes:
\begin{align}
\sum_{j=2}^n s_{1j}\sum_{I,J=1}^g 2 \pi \sum_{I,J=1}^g Y^{IJ} \Im[\nu_I(z_1)]\Im[\nu_J(z_1)] = 0  \, .
\end{align}
In view of this computation, we can write the double-copy relations in \eqref{eq:RiemannBilinearRelations_complex} in terms of the string Green's function, after multiplying both sides by some $z_1$-independent factors:
\begin{align}
    \int_{z_1\in\Sigma_g^*} \varphi_l \wedge \overline\varphi_k \exp\bigg[\sum_{j=2}^ns_{1j}G(z_1,z_j)\bigg] 
    =& \exp\bigg[-2 \pi \sum_{j=2}^n s_{1j}\ \sum_{I,J=1}^g Y^{IJ} \Im[\nu_I(z_j)]\Im[\nu_J(z_j)]  \bigg] 
\nonumber \\
    & \phantom{=}\times 
    \sum_{a,b\in \mathcal{K}}H^{a,b}_{(2,2)} \overline{\bigg[\int_{{\gamma}_b}T(z_1)\, \varphi_k\bigg] } \bigg[\int_{{\gamma}_a}T(z_1)\, \varphi_l\bigg] \bigg|_\eqref{eq:reality_condition_solving}\, .
\end{align}
Thus, we have obtained a double-copy formula  at genus $g$, closely related to string amplitudes, coming from twisted (co)homology.

\section{Abelian Kronecker forms and higher genus polylogarithms}
\label{sec:AKronAndPolylogs}

In the previous section, we derived double-copy formulas involving single-valued $1$-forms $\varphi$. 
So far, the $\varphi$ are either holomorphic differentials or meromorphic differentials with poles of order one or two (c.f., \eqref{eq:H1spannedSingleValued}). 
However, for string theory practitioners, these are not expected to be enough.
To address this, various groups in the physics and mathematics literature \cite{Enriquez:2011np,enriquez2021construction,DHoker:2023vax,Baune:2024biq} have introduced a richer family of $1$-forms, called Enriquez kernels, ${g^{I_1,I_2,\ldots,I_r}}_J(z_1,z_j)$, for $r\in\mathbb{Z}_{\geq0}$, with the convention that ${g^\varnothing}_J(z_1,z_j)=\omega_J(z_{1}) \d z_1$ is a normalized holomorphic differential.
The $1$-forms, ${g^{I_1,I_2,\ldots,I_r}}_J(z_1,z_j)$ are meromorphic but not single-valued  $1$-forms on $z_1\in\Sigma_g$, and $0$-forms in $z_j \in \Sigma_g$. 
They provide {\em integration kernels} for polylogarithms on Riemann surfaces of genus $g\geq 1$. 
For a friendly introduction to these kernels, see \cite{DHoker:2023vax,Baune:2024biq,DHoker:2025dhv}. 
Using the integration kernels ${g^{I_1,I_2,\ldots,I_r}}_J(z_1,z_j)$, we introduce an alternative version of the genus-$g$ hypergeometric integrals in this section.

The $1$-forms ${g^{I_1,I_2,\ldots,I_r}}_J(z_1,z_j)$  can be obtained from $J=1,\ldots,g$ form-valued generating functions, $F_J(z_1,z_j,\alpha_I)$:
\begin{align}
F_J(z_1,z_j,\alpha_I) \, \alpha_J= \sum_{r\geq 0}\sum_{I_1,I_2,\ldots,I_r=1}^g  {g^{I_1,I_2,\ldots,I_r}}_J(z_1,z_j) \alpha_{I_1}\alpha_{I_2} \ldots \alpha_{I_r} \, ,
\end{align}
where the $\alpha_j$, $j=1,\ldots,g$ are formal, non-commutative variables \cite[(5.23)]{Baune:2024biq}. 
These $1$-forms $F_J(z_1,z_j)$ have prescribed monodromies 
\begin{align}
    F_J(z_1+\mathfrak{A}_I,z_j,\alpha_I)
    &= F_J(z_1,z_j,\alpha_I) \, ,
    &F_J(z_1+\mathfrak{B}_I,z_j,\alpha_I)&= \exp(-2 \pi i \alpha_I)F_J(z_1,z_j,\alpha_I) \, , 
\label{eq:Kronecker_form_properties1}
\end{align}
and $\mathfrak{A}$-cycle periods 
\begin{align}
    \oint_{\mathfrak{A}_I}F_J(z_1,z_j,\alpha_I) &=\frac{\delta_{IJ}}{\exp(2 \pi i \alpha_J)-1}\, , 
    &\operatornamewithlimits{res}_{(z_1,z_j)} F_J(z_1,z_j,\alpha_j) &= 1 \, ,
\label{eq:Kronecker_form_properties2}
\end{align}
where $I,J=1\ldots,g$ \cite{Lisitsyn_masters_thesis}.
The $1$-forms $F_J(z_1,z_j,\alpha_I)$ were first introduced in \cite{Baune:2024biq}, denoted by $S_j(z_1,z_j,\alpha_I)$ and called {\em{Schottky-Kronecker}} forms. 
In this work, we refer to the $1$-forms $F_J(z_1,z_j,\alpha_I)$ as Kronecker forms. 
Closely related connections on punctured Riemann surfaces have also appeared in other recent works in both physics and mathematics \cite{bernard1988wess,DHoker:2023vax}.

From this point onward, we specialize to {\em Abelian} Kronecker forms: genus-$g$ generalizations of the genus-one Kronecker-Eisenstein series \cite{Broedel:2014vla}.
The Abelian Kronecker forms are obtained from the non-Abelian versions by interpreting the $\alpha_J$ as commuting variables or setting $\alpha_J\in\mathbb{C}$. 
These forms define another family of genus $g$ hypergeometric integrals with a direct, albeit somewhat limited, connection to higher genus polylogarithms. 
For example, we cannot extract ${g(z_1,z_j)^{1,2}}_2$ from the Abelian Kronecker form by isolating the $\alpha_1 \alpha_2$-coefficient of $F_2(z_1,z_j,\alpha_I) \,\alpha_2$. 
Instead, one obtains the linear combination 
\begin{align}
    \alpha_1 \alpha_2\textrm{- coefficient of }F_2(z_1,z_j,\alpha_I)\alpha_2:{g(z_1,z_j)^{1,2}}_2+{g(z_1,z_j)^{2,1}}_2 \, .
\end{align}
On the other hand, we can isolate the term ${g(z_1,z_j)^{I}}_J$, $I=1,\ldots,g$ unambiguously from Abelian Kronecker forms $F_J(z_1,z_j,\alpha_I) \,\alpha_J$.

\subsection{Twisted (co)homology with Abelian Kronecker forms}
\label{eq:kroneckerIntegrals}

To widen the applicability of the techniques and results presented so far, we generalize the genus-$g$ hypergeometric functions and their associated twisted (co)homology to allow for Abelian Kronecker forms in the integrand. 
This creates a direct link to certain higher genus polylogarithms.

Let $\alpha_J,J=1,2,\ldots,g$ be complex-valued parameters and let  $\vphi^{\bs{\alpha}}$ be quasiperiodic forms on $\Sigma_g^*$ that satisfy 
\begin{align}
    \varphi^{\bs{\alpha}}(z_1 + \mathfrak{A}_J) 
    &= \varphi^{\bs{\alpha}}(z_1 ) 
    \, ,  
    && J=1,2,\ldots,g 
    \,,
    \nonumber \\
    \varphi^{\bs{\alpha}}(z_1 + \mathfrak{B}_J) 
    &= \exp( - 2 \pi i \alpha_J)\varphi^{\bs{\alpha}}(z_1 ) 
    \, ,
    && J=1,2,\ldots,g \,.
\end{align}
That is, the $\varphi^{\bs{\alpha}}$ have the same quasiperiodicity as Abelian Kronecker forms, $F_j(z_1,z_j,\alpha_I)$ (c.f., \eqref{eq:Kronecker_form_properties1}).
In this section, we consider the twisted cohomology associated to the integrals
\begin{align} \label{eq:alphaInt}
    I^{\varphi}_\gamma(z_i,s_{ij},\alpha_I) 
    = \int_{\gamma^{\bs{\alpha}}} T(z_1) \vphi^{\bs{\alpha}}(z_1) \, .
\end{align}
Once we have characterized this (co)homology, the double copy construction of section \ref{sec:cohomIntAndDoubleCopy} straightforwardly generalizes. 

Let 
$\mathcal{L}_{\bs{\alpha}}$ 
 be the local system encoding the multivaluedness of the $1$-forms $\vphi^{\bs{\alpha}}$ and 
$\mathcal{L}_{\bs{s}}\otimes\mathcal{L}_{\bs{\alpha}}$
the local system encoding the multivaluedness of the integrand $T(z_1)\vphi^{\bs{\alpha}}$. 
Then, we can think of the integral \eqref{eq:alphaInt} as a pairing between the associated twisted homology and twisted cohomology 
\begin{align}
    \gamma^{\bs{\alpha}} \in H_1(\Sigma_g^*, \mathcal{L}_{\bs{s}} \otimes \mathcal{L}_{\alpha_I})
    \,,
    \qquad 
    \vphi^{\bs{\alpha}} \in H^1( \Sigma^*_g, \mathcal{L}_{\bs{\alpha}}, \nabla_{\omega} )
    \,.
\end{align}
The structure of twisted homology is essentially the same as before, just with a new local system that keeps track of the multivaluedness of twist and quasiperiodicity of the differential form ($T\vphi^{\bs{\alpha}}$): $\mathcal{L}_{\bs{s}} \otimes \mathcal{L}_{\bs{\alpha}}$. 
For the most part, the structure of the twisted cohomology is unchanged. 
We still use the twisted differential $\nabla_\omega$. 
It just acts on 1-forms $\vphi^{\bs{\alpha}}$ valued in $\mathcal{L}_{\bs{\alpha}}$ (quasiperiodic along the $B$-cycles): $\vphi^{\bs{\alpha}} \in \Omega^1(\Sigma_g^*,\mathcal{L}_{\bs{\alpha}})$. 
Explicitly, 
\begin{align}
    H^k(\Sigma_g^*,\mathcal{L}_{\alpha_I},\nabla_\omega) 
    = \frac{\operatorname{
    Ker\left( \nabla_\omega: \Omega^k(\Sigma_g^*,\mathcal{L}_{\bs{\alpha}}) \rightarrow \Omega^{k+1}(\Sigma_g^*,\mathcal{L}_{\bs{\alpha}}) 
    \right)
    }}{\operatorname{
        Im\left(
            \nabla_\omega: \Omega^{k-1}(\Sigma_g^*,\mathcal{L}_{\bs{\alpha}}) \rightarrow \Omega^k(\Sigma_g^*,\mathcal{L}_{\bs{\alpha}})
        \right)
    }} \, ,
\end{align}
where $\Omega^k(\Sigma_g^*,\mathcal{L}_{\bs{\alpha}})$ denotes the $\alpha_I$-quasiperiodic $k$-forms on the punctured Riemann surface $\Sigma_g^*$. 
Following Watanabe, we can describe this twisted cohomology group.
\\

\begin{prop}\label{prop:cohomology_multivalued}
Under the condition $n>\max(2,2g-1)$, the first twisted cohomology group of a punctured Riemann surface of genus $g$, $H^1(\Sigma_g^*,\mathcal{L}_{\bs{\alpha}},\nabla_\omega)$, is spanned by ${\bs{\alpha}}$-quasiperiodic 1-forms:
\begin{align}
\label{eq:H1spannedNotSingleValued}
    H^1(\Sigma_g^*,\mathcal{L}_{\bs{\alpha}},\nabla_\omega) 
    ={\operatorname{span}}(
    &{\omega}^{\bs{\alpha}}_1,
    {\omega}^{{\bs{\alpha}}}_2,
    \ldots,
    {\omega}^{{\bs{\alpha}}}_{g-1},
    {\tau}^{{\bs{\alpha}}}_{z_2}(z_1),
    {\tau}^{{\bs{\alpha}}}_{z_3}(z_1),
    \ldots,
    {\tau}^{{\bs{\alpha}}}_{z_{g}}(z_1), 
    \nonumber
    \\
    &{\sigma}^{{\bs{\alpha}}}_{z_2}(z_1),
    {\sigma}^{\bs{\alpha}}_{z_3}(z_1),
    \ldots,
    {\sigma}^{\bs{\alpha}}_{z_{n}}(z_1)
    ) 
    \, . 
\end{align}
Here, the ${\omega}^{\bs{\alpha}}_j$ are holomorphic ${\bs{\alpha}}$-quasiperiodic differentials, ${\tau}^{\bs{\alpha}}_{z_j}(z_1)$ are meromorphic ${\bs{\alpha}}$-quasiperiodic differentials with a unique pole of order 2 at $z_j$, and ${\sigma}^{\bs{\alpha}}_{z_j}(z_1)$ are meromorphic $\bs{\alpha}$-quasiperiodic differentials with a unique pole of order 1 at $z_j$, with unit residue. In particular, we can take ${\sigma}^{\bs{\alpha}}_{z_j}(z_1) = F(z_1,z_j,\alpha_I)$. That is, for $n>\max(2,2g-1)$, we have $\operatorname{dim} H^1(\Sigma_g^*,\mathcal{L}_{\bs{\alpha}},\nabla_\omega)= (2g+n-3)$, and this twisted cohomology can be spanned by ${\bs{\alpha}}$-quasiperiodic differentials that include the Abelian Kronecker forms. 
\end{prop}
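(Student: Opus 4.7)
The plan is to follow the same strategy as the proof of Proposition \ref{prop:cohomology_single_valued}, reducing the dimension computation to Watanabe's theorem by absorbing the $\bs{\alpha}$-quasiperiodicity into the local system of the twist. Forms in $\Omega^*(\Sigma_g^*,\mathcal{L}_{\bs{\alpha}})$ together with $\nabla_\omega$ can be traded, up to a local gauge choice, for single-valued forms with an effective twist whose monodromies are $e^{2\pi i s_{1j}}$ around each puncture, $e^{2\pi i s_{1A_I}}$ along $\mathfrak{A}_I$, and $e^{2\pi i (s_{1B_I}-\alpha_I)}$ along $\mathfrak{B}_I$. The key point is that this shift only modifies the global monodromies and leaves the puncture divisor of the log-derivative unchanged, and that divisor is the only datum entering Watanabe's dimension formula (as already used in the proof of Proposition \ref{prop:cohomology_single_valued}). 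Consequently $\dim H^1(\Sigma_g^*,\mathcal{L}_{\bs{\alpha}},\nabla_\omega) = 2g+n-3$, and it remains to exhibit a basis of this size.

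For the basis, I would first verify the cardinality $(g-1)+(g-1)+(n-1) = 2g+n-3$ of the proposed spanning set and then construct explicit representatives. The Kronecker forms directly furnish the first-kind representatives: by \eqref{eq:Kronecker_form_properties1}--\eqref{eq:Kronecker_form_properties2}, $F_J(z_1,z_j,\alpha_I)\,\alpha_J$ is $\bs{\alpha}$-quasiperiodic with a simple pole of unit residue at $z_1=z_j$, so one may take $\sigma^{\bs{\alpha}}_{z_j}$ to be this form (for some fixed choice of $J$). Differentiating in the second argument produces the second-kind $\tau^{\bs{\alpha}}_{z_j}$ with a double pole at $z_j$ and no residue. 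Holomorphic $\bs{\alpha}$-quasiperiodic differentials exist for generic $\bs{\alpha}$ by Riemann--Roch applied to the corresponding line bundle, or can be written explicitly using prime functions and theta-like prefactors.

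Linear independence then follows from a pole-order stratification: the $\sigma^{\bs{\alpha}}_{z_j}$, the $\tau^{\bs{\alpha}}_{z_j}$, and the $\omega^{\bs{\alpha}}_I$ are separated by their orders of singularity, and within each family by the location of the pole (for the meromorphic families) or by $\mathfrak{A}$-period evaluations (for the holomorphic family). The subtlest point, and where I would expect to spend most of the effort, is justifying the counts $g-1$ in both the holomorphic and the second-kind families rather than $g$ each (as in Proposition \ref{prop:cohomology_single_valued}). This discrepancy is exactly balanced by the fact that we now have $n-1$ first-kind forms instead of $n-3$, and it reflects the availability, in the quasiperiodic setting, of genuinely single-pole meromorphic forms (the Kronecker forms): some combinations of holomorphic and second-kind quasiperiodic forms can be rewritten modulo $\nabla_\omega$-exact terms in terms of Kronecker forms, a phenomenon with no single-valued analogue. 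Identifying these exact primitives explicitly — i.e.\ producing the two $\bs{\alpha}$-quasiperiodic functions whose twisted differentials implement the reductions from $g$ to $g-1$ — is what turns the abstract dimension count into an honest basis statement, and I expect this to be the main technical obstacle.
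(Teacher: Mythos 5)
There is a genuine gap, and it sits exactly where you flag it. Your reduction of the dimension count to a rank-one local system with $\mathfrak{B}_I$-monodromies $e^{2\pi i(s_{1B_I}-\alpha_I)}$ is fine as far as $\dim H^1 = 2g+n-3$ goes, but the proposition is a statement about a \emph{specific spanning set}, and your route to it does not close. First, the explanation you give for the counts $g-1$ is a misdiagnosis: the drop from $g$ to $g-1$ holomorphic forms is not a cohomological reduction implemented by $\nabla_\omega$-exact terms built from Kronecker forms. It is a statement about sections of line bundles: for generic $\bs{\alpha}$ the bundle $\mathcal{L}_{\bs{\alpha}}$ is a nontrivial degree-zero bundle, so $h^0(K_{\Sigma_g}\otimes\mathcal{L}_{\bs{\alpha}})=g-1$ by Riemann--Roch (this is the fact quoted after Proposition \ref{prop:cohomology_multivalued} from section 6.3 of \cite{Lisitsyn_masters_thesis}); likewise, the absence of a residue-sum constraint for $\bs{\alpha}$-quasiperiodic forms is what makes the $n-1$ single-pole forms $F_J(z_1,z_j,\alpha_I)$ available at all. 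Searching for ``two exact primitives that reduce $g$ to $g-1$'' is therefore chasing objects that the argument does not need and that do not exist in the form you describe. Second, your independence argument is insufficient: pole-order stratification shows the representatives are independent \emph{as forms}, but classes in $H^1(\Sigma_g^*,\mathcal{L}_{\bs{\alpha}},\nabla_\omega)$ are taken modulo $\nabla_\omega\xi=\d\xi+\omega\wedge\xi$ with $\xi$ an $\bs{\alpha}$-quasiperiodic function that may itself have poles at the punctures, so exact forms can reproduce any pole structure; independence in cohomology needs a separate argument (e.g.\ the algebraic de Rham computation, or nondegenerate pairing against a homology basis).

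Note also that the distinction between the holomorphically trivial and nontrivial cases is precisely the point at which the analogy with the proof of Proposition \ref{prop:cohomology_single_valued} breaks: there the difference of connections was a holomorphic differential on the \emph{same} trivial bundle, whereas here the holomorphic structure of the bundle genuinely changes, which is why the basis has the shape \eqref{eq:H1spannedNotSingleValued} rather than that of \eqref{eq:H1spannedSingleValued}. The paper's proof avoids all of these issues by invoking \cite[theorem 4.1 and example 1]{watanabe2016twisted} directly: Watanabe's theorem is stated for forms valued in an arbitrary line bundle $P$ of trivial Chern class, so taking $P=\mathcal{L}_{\bs{\alpha}}$ yields both the dimension $2g+n-3$ and the explicit spanning set with $g-1$ holomorphic, $g-1$ second-kind and $n-1$ simple-pole quasiperiodic differentials in one stroke. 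If you want a self-contained argument along your lines, the missing ingredients are (i) Riemann--Roch for $K\otimes\mathcal{L}_{\bs{\alpha}}$ and its pole-allowing variants to produce the forms with the stated counts, and (ii) an honest quotient-by-exact-forms computation (or duality argument) to promote form-level independence to cohomological independence.
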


\begin{proof}
    This is essentially the content of \cite[theorem 4.1 and example 1]{watanabe2016twisted}. 
    Watanabe proves a theorem with $k$-forms valued in a line bundle $P$, and if we use $P=\mathcal{L}_{\bs{\alpha}}$ the proof goes through exactly the same way. 
    Because $\mathcal{L}_{\bs{\alpha}}$ has trivial Chern class, the twisted cohomology is described explicitly as in \cite[example 1]{watanabe2016twisted}.
\end{proof}

As an interesting remark, we note that the space of holomorphic $\alpha_I$-quasiperiodic 1-forms on a compact Riemann surface of genus $g$ has dimension $g-1$. This explains why only $(g-1)$ such forms appear in the basis of $H^1(\Sigma_g^*,\mathcal{L}_{\alpha_I},\nabla_\omega)$. A nice proof of this fact can be found in section 6.3 of \cite{Lisitsyn_masters_thesis}.

The locally finite twisted homology $H_1^{l\!f}(\Sigma_g^*,\mathcal{L}_{\bs{s}}\otimes \mathcal{L}_{\bs{\alpha}})$ is easy to obtain from $H_1^{l\!f}(\Sigma_g^*,\mathcal{L}_{\bs{s}})$. 
It is spanned by the same set of twisted cycles (c.f., \eqref{eq:hom_basis_lf}) where one replaces the loading from $\mathcal{L}_{\bs{s}}$ with one from $\mathcal{L}_{\bs{s}} \otimes \mathcal{L}_{\bs{\alpha}}$. 
Similarly, the regularized twisted cycles of $H_1(\Sigma_g^*,\mathcal{L}_{\bs{s}}\otimes \mathcal{L}_{\bs{\alpha}})$ are simply related to those of $H_1(\Sigma_g^*,\mathcal{L}_{\bs{s}})$. 
To see this, recall that the local systems $\mathcal{L}_{\bs{s}}\otimes \mathcal{L}_{\bs{\alpha}}$ and $\mathcal{L}_{\bs{s}}$ are $\pi_1$-bundles where we use the 1-dimensional representation for the monodromies $\rho: \pi_1(\Sigma_g^*,*) \to \mathbb{C}^*$ (c.f., \eqref{eq:piRep_explicit}).
Thus, the representation for the local system $\mathcal{L}_{\bs{s}}\otimes \mathcal{L}_{\bs{\alpha}}$ is given by the product of representations for $\mathcal{L}_{\bs{s}}$ and $\mathcal{L}_{\bs{\alpha}}$: $\rho^{\bs{s},\bs{\alpha}} = \rho \times \rho^{\bs{\alpha}}$. 
Explicitly, 
\begin{align}\label{eq:piRepAlpha}
    \rho^{\bs{s},\bs{\alpha}}(M_i) 
    &:= e^{2\pi i s_{1i}}\,,
    &
    \rho^{\bs{s},\bs{\alpha}}(\mathfrak{A}_I) 
    &:= e^{2\pi i s_{1A_I}} \,, 
    &
    \rho^{\bs{s},\bs{\alpha}}(\mathfrak{B}_I) 
    &:= e^{ 2\pi i (s_{1B_I} - \alpha_I)  } \,,
\end{align}
where $i=2,\dots,n$, $I = 1, \dots, g$, and $\rho^{\bs{\alpha}}(\mathfrak{B}_I) = e^{-2\pi \alpha_I}$ is the only non-trivial monodromy associated to $\mathcal{L}_{\bs{\alpha}}$.
The twisted homology $H_1(\Sigma_g^*,\mathcal{L}_{\bs{s}}\otimes \mathcal{L}_{\bs{\alpha}})$ is spanned by the cycles in \eqref{eq:hom_basis_reg} after making the substitution
\begin{align}
    \label{eq:substitution_rule_homology}
    s_{1B_J} \rightarrow s_{1B_J} - \alpha_J \, , 
    &&J=1,\ldots,g \, .
\end{align}
We note that this is also the case in the genus-one case, as seen in \cite{Mano2012}.

There is also the dual twisted (co)homology groups:
\begin{align}
    \big(
        H_1( \Sigma_g^*, \mathcal{L}_{\bs{s}} \otimes \mathcal{L}_{\bs{\alpha}} ) 
    \big)^\vee
    := H_1(\Sigma_g^*,\mathcal{L}_{-\bs{s}} \otimes \mathcal{L}_{-\bs{\alpha}}) 
    \, , 
    \qquad
    \big(
        H^1( \Sigma_g^*, \mathcal{L}_{\bs{\alpha}}, \nabla_{\omega} ) 
    \big)^\vee
    := H_1(\Sigma_g^*, \mathcal{L}_{-\bs{\alpha}}, \nabla_{-\omega})
\end{align}
where the dualizing operator $()^\vee$ also changes the sign of the all the $\alpha_I$: $(\bullet)^\vee = \bullet\vert_{s_{1j} \to -s_{1j}, \alpha_I \to -\alpha_I}$. 
Using the dual (co)homology one can define the usual intersection pairings \eqref{eq:hom_int_num} and \eqref{eq:coho_int_num}. 
Fortunately, the intersection numbers of the $\bs{\alpha}$-quasiperiodic twisted (co)homology are obtained by simply applying the substitution \eqref{eq:substitution_rule_homology}: 
\begin{align}
    \la \c{\vphi}^{\bs{\alpha}}_k \vert \vphi^{\bs{\alpha}}_l \ra
    &= \la \c{\vphi}_k \vert \vphi_l \ra 
        \vert_{s_{1B_{J}} \to s_{1B_J} - \alpha_J}
    \,,
    &
    [ \gamma^{\bs{\alpha}}_k \vert \c{\gamma}^{\bs{\alpha}}_l ]
    &= [ \gamma_k \vert \c{\gamma}_l ]
        \vert_{s_{1B_{J}} \to s_{1B_J} - \alpha_J}
    \,.
\end{align}

\subsection{Homology intersection numbers and complex hypergeometric integrals}

We also introduce complex hypergeometric integrals with $\bs{\alpha}$-quasiperiodic forms, of the form:
\begin{align}
    \label{eq:complex-quasiperiodic-integral}
    J^{\overline{\varphi_{k}},\varphi_l}(z_j,s_{ij},\alpha_I) = \int_{\Sigma_g^*} |T(z_1)|^2 \,   \vphi^{\bs{\alpha}}_l \wedge \overline{\vphi^{\bs{\alpha}}_k} \, .
\end{align}
In the remainder of this section, we describe how to understand  \eqref{eq:complex-quasiperiodic-integral} as a cohomology intersection number. 
Then, in the next section, we derive a double copy for these integrals. 

To do this, we need to introduce the complex conjugated local system $\overline{\mathcal{L}_{\bs{s}}\otimes {\mathcal{L}_{\bs{\alpha}}}}$, which takes into account the multivaluedness of $\overline{T(z_1) \vphi^{\bs{\alpha}}}$. 
This local system is naturally isomorphic to $\mathcal{L}_{-\bs{s}}\otimes {\mathcal{L}_{-\bs{\alpha}}}$  whenever:
\begin{align}
    \label{eq:reality_Kronecker}
     (s_{1B_{J}} - \alpha_J) \in \mathbb{R} \, , &&J=1,2,\ldots, g\, .
\end{align}
Thus, assuming \eqref{eq:reality_Kronecker} the integral \eqref{eq:complex-quasiperiodic-integral} forms a non-degenerate pairing between the twisted cohomology groups $H^1({\Sigma_g^*}, \mathcal{L}_{\bs{\alpha}},\nabla_{\omega})$ and $H^1(\Sigma_g^*,\overline{ \mathcal{L}_{\bs{\alpha}} }, \nabla_{\overline{\omega}})$ whenever it converges (analogous to \eqref{eq:coho_int_num}). 
Thus, we interpret the complex hypergeometric integral of $\alpha_I$-quasiperiodic $1$-forms as a pairing of twisted cohomology groups:
\begin{align}
    \langle \overline{\bullet}|\bullet\rangle:H^1(\Sigma_g^*,\overline{\mathcal{L}_{\bs{s}}\otimes \mathcal{L}_{\alpha}})  \times H^1({\Sigma_g^*},{\mathcal{L}_{\bs{s}}\otimes \mathcal{L}_{\alpha}})  \rightarrow \mathbb{C} \, 
    \quad\text{via}\quad
    (\overline{\vphi^{\bs{\alpha}}_k},\vphi^{\bs{\alpha}}_l)
    \mapsto 
    \langle \overline{\vphi^{\bs{\alpha}}_k}| \vphi^{\bs{\alpha}}\rangle  \, ,
\end{align}
where 
\begin{align}
    J^{\overline{\varphi_{k}},\varphi_l}(z_j,s_{ij},\alpha_I)
    = \la \overline{\vphi^{\bs{\alpha}}_k}| \vphi^{\bs{\alpha}}_l\rangle 
    := \int_{\Sigma_g^*} |T(z_1)|^2 \,   \vphi^{\bs{\alpha}}_l \wedge \overline{\vphi^{\bs{\alpha}}_k} 
    = J^{\overline{\vphi_k},\vphi_l}
    \,.
\end{align}

Of course, there is also the complex conjugated twisted homology $H_1(\Sigma_g^*,\overline{\mathcal{L}_{\bs{s}}\otimes\mathcal{L}_{\bs{\alpha}}})$ which can be paired with $H^1(\Sigma_g^*,\overline{\mathcal{L}_{\bs{\alpha}}},\nabla_{\overline{\omega}})$ in the usual way 
\begin{align}
    \overline{I^\varphi_\gamma}:& H_1(\Sigma_g^*,\overline{\mathcal{\c{L}\otimes}\mathcal{L}_{\alpha_I}}) \times H^1(\Sigma_g^*,\overline{\mathcal{L}_{\bs{s}}\otimes \mathcal{L}_{\alpha}}) 
     \rightarrow \mathbb{C} \, ,
    \nonumber 
\end{align}
via
\begin{align}
    (\gamma,\overline{\varphi}) 
    & \mapsto
    \la \overline{ \vphi^{\bs{\alpha}} } \vert \overline{\gamma} ]
    := \int_{\overline{\gamma}} \overline{\vphi^{\bs{\alpha}}}
    = \overline{\int_\gamma T(z_1) \vphi^{\bs{\alpha}}} 
    = \overline{ 
        [\gamma^{\bs{\alpha}} \vert \vphi^{\bs{\alpha}} \ra 
    }
    \, ,
\end{align}
whenever the condition \eqref{eq:reality_Kronecker} is satisfied.
In particular, recall that the homology intersection numbers of complex conjugated cycles are equal to those of the dualized cycles 
\begin{align}
    [\gamma_l \vert \overline{\gamma}_k]
    &= [\gamma_l \vert \c{\gamma}_k]
    \,.
\end{align}
Therefore, we can avoid performing any new calculations to construct the double copy of Abelian Kronecker forms.

\subsection{The double copy of Abelian Kronecker forms}

In this section we flesh out the double copy for the complex hypergeometric integral of $\bs{\alpha}$-quasiperiodic 1-forms in \eqref{eq:complex-quasiperiodic-integral}. 
These relations are, once again, obtained from the compatibility of the non-degenerate pairings among homology and cohomology groups (see figure \ref{fig:pairings}).
Explicitly,
\begin{align}
\label{eq:RiemannBilinearRelations_complex_alpha}
 \int_{\Sigma_g^*} |T(z_1)|^2 \,   \vphi^{\bs{\alpha}}_l \wedge \overline{\vphi^{\bs{\alpha}}_k}   = \sum_{a,b\in \mathcal{K}}\tilde{H}^{a,b}_{(2,2)} \overline{\bigg[\int_{{\gamma}_b}T(z_1)\, \vphi^{\bs{\alpha}}_k\bigg] } \bigg[\int_{{\gamma}_a}T(z_1)\, \vphi^{\bs{\alpha}}_l\bigg] \, ,
\end{align}
where, the entries $\tilde{H}^{a,b}_{(2,2)}$ are obtained from the ones that enter in \eqref{eq:RiemannBilinearRelations} after applying the substitution  in \eqref{eq:substitution_rule_homology}, and requiring that $(s_{1B_J} - \alpha_J) \in \mathbb{R}$ for $J=1,\ldots,g$ . 

For a concrete example, consider the case of Abelian Kronecker forms:
\begin{align}
\label{eq:RiemannBilinearRelations_complex_alpha_Kronecker}
 \int_{\Sigma_g^*} |T(z_1)|^2& \,   {F_K(z_1,z_3,\alpha_I)}  \wedge \overline{F_J(z_1,z_2,\alpha_I)} 
\nonumber \\ 
 &= \sum_{a,b\in \mathcal{K}}\tilde{H}^{a,b}_{(2,2)} \overline{\bigg[\int_{{\gamma}_b}T(z_1)\,F_J(z_1,z_2,\alpha_I)\bigg] } \bigg[\int_{{\gamma}_a}T(z_1)\, F_K(z_1,z_3,\alpha_I)\bigg] \, .
\end{align}
We can ensure the reality of $s_{1B_J}-\alpha_J$ by setting%
\footnote{%
    Note that if we want to use $\alpha_J$ as a bookkeeping variable, we want to understand $\Im(\alpha_J)$ as $\Im(\alpha_J) = \frac{\alpha_J-\overline{\alpha_J}}{2i} \,$ for $J=1,\ldots,g$.
}
\begin{align}
    \label{eq:reality_condition_solving_alpha}
    s_{1A_I} = \sum_{J=1}^g Y^{IJ}\Im(\alpha_J) 
    - \sum_{J=1}^gY^{IJ}\sum_{j=2}^n s_{1j}\Im \big[\nu_J(z_j)\big]\,  \,\,I=1,\ldots,g
    \, .
\end{align}
This condition is analogous to \eqref{eq:reality_condition_solving}. 
After specializing $s_{1A_I}$, $|T(z_1)|^2$ is given by  \eqref{eq:Tofz1Sq_after_substitution} with an extra $\alpha_I$-dependent factor:
\begin{align}
    |T(z_1)|^2 \bigg|_\eqref{eq:reality_condition_solving_alpha}  &= (\textrm{RHS of \eqref{eq:Tofz1Sq_after_substitution}})\times \exp\bigg[ - 4 \pi \sum_{I,J=1}^gY^{IJ}\Im[\alpha_J]\Im[\nu_I(z_1)]
    \bigg] 
    \,,
    \nonumber \\
    &=(\textrm{RHS of \eqref{eq:Tofz1Sq_after_substitution}})\times \exp\bigg[  2 \pi  i \sum_{I,J=1}^gY^{IJ}\Im[\nu_I(z_1)] (\alpha_J - \overline{\alpha_J}) 
    \bigg] \, .
\end{align}
Then, note that the combination 
\begin{align}
    \label{eq:abelian_kronecker_form}
    \Omega_J(z_1,z_j,\alpha_I) = \exp\bigg[2 \pi i \sum_{M,N=1}^g Y^{MN}\Im[\nu_M(z_1) ]\alpha_N \bigg]F_J(z_1,z_j,\alpha_I) \, ,
\end{align}
is a single-valued (in $z_1$) version of the Abelian Kronecker form.%
\footnote{Such single-valued Abelian Kronecker forms are introduced in equation (7.11) of \cite{Lisitsyn_masters_thesis}.} 
Often in the string theory literature, it is advantageous to trade meromorphic forms (e.g., $F_J(z_{1},z_{j},\alpha_{I})$) for non-meromorphic but single-valued forms (e.g., $ \Omega_J(z_1,z_j,\alpha_I)$).

Using $\Omega_J(z_1,z_j,\alpha_I)$ in the double copy formula \eqref{eq:RiemannBilinearRelations_complex_alpha_Kronecker}, after using the substitution \eqref{eq:reality_condition_solving_alpha},  yields
\begin{align}
\label{eq:RiemannBilinearRelations_complex_alpha_Kronecker_SingleValued}
 \int_{\Sigma_g^*} &   {\Omega_K(z_1,z_3,\alpha_I)}
 \wedge 
 \overline{\Omega_J(z_1,z_2,\alpha_I)_J}
 \exp\bigg[\sum_{j=2}^ns_{1j}G(z_1,z_j)\bigg] 
\nonumber \\
&=\exp\bigg[-2 \pi \sum_{j=2}^n s_{1j}\ \sum_{I,J=1}^g Y^{IJ} \Im[\nu_I(z_j)]\Im[\nu_J(z_j)]  \bigg] 
\nonumber
\\
& \phantom{=}\times 
\sum_{a,b\in \mathcal{K}}\tilde{H}^{a,b}_{(2,2)} \overline{\bigg[\int_{{\gamma}_b}T(z_1)\,F_J(z_1,z_2,\alpha_I)\bigg] } \bigg[\int_{{\gamma}_a}T(z_1)\, F_K(z_1,z_3,\alpha_I)\bigg] \bigg|_\eqref{eq:reality_condition_solving_alpha}\, .
\end{align}
This is an explicit double copy formula for integrals of single-valued Abelian Kronecker forms, $\Omega_J(z_1,z_j,\alpha_I)$, and the string Green's functions, $G(z_1,z_j)$.

\section{examples in genus 2}

In this section we present two explicit examples of the double-copy relations of hypergeometric integrals at genus two.  
For simplicity, we set $n=3$ in both examples \footnote{%
    Note that this violates the condition $n>\max(2,2g-1)$ of the Propositions \ref{prop:cohomology_single_valued} and \ref{prop:cohomology_multivalued}. 
    This is fine. 
    Those conditions guarantee that the bases of cohomology is as stated in these propositions. 
    In the following examples, we are not trying to span the twisted cohomology when using the double copy relations; we only need a basis for the twisted homology. 
    The spanning contours are the same for $n=3$ and $n>3$.
    Moreover, the formula for the dimension of the cohomology holds even when $n=3$ since by \cite[Lemma 1.1]{watanabe2016twisted} we have that $\dim H^1(\Sigma_g^*,\nabla_\omega)=(2g+n-3)$, as long as $n>2$.
}. 
That is, $\Sigma_2^*$ is a  genus-two Riemann surface two punctures, $z_2,z_3$. 
With  $(n,g)=(3,2)$, the relevant twisted (co)homology groups are $(2(2)+3-3=4)$-dimensional  (c.f., \eqref{eq:H1spannedSingleValued} and \eqref{eq:H1spannedNotSingleValued}). 
In the first example, we present the double copy of holomorphic differentials in the context of section \ref{sec:cohomIntAndDoubleCopy}. 
We also spell out the matrix $[\mathbf{H}^{(22)}]^{-1}$.
After making the substitution \eqref{eq:substitution_rule_homology}, this matrix becomes the (the analogue of the) KLT kernel for our second example involving the $\bs{\alpha}$-quasiperiodic integrands. 
Moreover, we make a connection to an integration kernel of higher-genus polylogarithms.

\subsection{A genus-two double copy formula with holomorphic differentials}

Consider a genus two Riemann surface with two punctures $z_2,z_3$: $\Sigma_g^*$. 
With  $(n,g)=(3,2)$, the relevant twisted (co)homology groups are 4-dimensional. 
The twisted homology is spanned by five-cycles
\begin{align}
    H_1(\Sigma_g^*,\mathcal{L}_{\bs{s}})
    = \mathrm{span} \{ \gamma_{2A_I},\gamma_{2B_I}, \gamma_{23} \}
    \,,
    \qquad  I=1,2
    \,,
\end{align}
subject to the relation 
\begin{align}
    0 &= \sum_{I=1}^{2}\left(1-e^{2 \pi i s_{1B_I}}\right)[{\gamma}_{2A_{I}}]
    + \sum_{I=1}^{2}
    \left(e^{2 \pi i s_{1A_I}}-1\right) [{\gamma}_{2B_{I}}]
    + \left(e^{-2 \pi i s_{12}}-1\right) [{\gamma}_{23}] 
    \,\, ,
\end{align}
(c.f., equations \eqref{eq:hom_basis_reg}, \eqref{eq:mon_rel_lf}, \eqref{eq:hom_basis_lf} and \eqref{eq:mon_rel}). 
In this and the next example, we choose a basis consisting of only the $A_I$- and $B_I$-cycles. 
A basis for cohomology is not needed, since we restrict the example to exemplifying the double copy of holomorphic forms $J^{(g=2)}(z_j,s_{ij}) = \la \overline{\omega}_1 \vert \omega_2 \ra$. 

Explicitly, $ J^{(g=2)}(z_j,s_{ij})$ is
\begin{align}
\label{eq:J_genus2_def}
    J^{(g=2)}(z_j,s_{ij}) &= 
    \phantom{=}\int_{\Sigma_2^*} \omega_2 \wedge \overline{\omega_{1}} \,  |T(z_1)|^2 \, \nonumber \\
    &=\int_{\Sigma_2^*}  \omega_2 \wedge \overline{\omega_{1}} \, \bigg|\exp \bigg(2 \pi i\big[s_{1A_1}\nu_1(z_1)+s_{1A_2}\nu(z_1)\big]\bigg)\bigg|^2 \bigg|\frac{E(z_1,z_2)}{E(z_1,z_3)} \bigg|^{2 s_{12}} \, ,
\end{align}
where we use momentum conservation $s_{13}=-s_{12}$ and fix our branch cuts so that 
\begin{align}
    T(z_1) = e^{2 \pi i [ s_{1A_1}\nu_2(z_1)+s_{1A_1}\nu_2(z_1)]} 
    \bigg[\frac{E(z_1,z_2)}{E(z_1,z_3)}\bigg]^{s_{12}} 
    \,  .
\end{align}
Furthermore, $ J^{(g=2)}(z_j,s_{ij})$ converges for $-1<s_{12}<1$. 
To use equation \eqref{eq:RiemannBilinearRelations_complex_omegas}, on the genus-two integral $J^{(g=2)}(z_j,s_{ij})$, we need the inverse of the intersection matrix $\mathbf{H}^{(2,2)}$ as well as the vectors of integrals
\begin{align}
\mathbf{I}_1^{(g=2)} = \begin{pmatrix}
\int_{\gamma_{2A_1}}T(z_1) \, \omega_1 \\
\int_{\gamma_{2B_1}}T(z_1) \, \omega_1 \\
\int_{\gamma_{2A_2}}T(z_1) \, \omega_1 \\
\int_{\gamma_{2B_2}}T(z_1) \, \omega_1
\end{pmatrix} \,,
&&
\mathbf{I}_2^{(g=2)} = \begin{pmatrix}
\int_{\gamma_{2A_1}}T(z_1) \, \omega_2 \\
\int_{\gamma_{2B_1}}T(z_1) \, \omega_2 \\
\int_{\gamma_{2A_2}}T(z_1) \, \omega_2 \\
\int_{\gamma_{2B_2}}T(z_1) \, \omega_2 
\end{pmatrix}
\,.
\end{align}
Here, each integral in $\mathbf{I}_1^{(g=2)}$ and $\mathbf{I}_2^{(g=2)}$ converge for $s_{12} > -1$. 
These integrals are computed numerically, while the inverse intersection matrix is know analytically. 

To present the inverse intersection matrix in a readable manner, we use the abbreviation for the monodromies $\rho(\delta)$:
\begin{align}
    \rho_{2} &= e^{2 \pi i s_{12}} = \rho(\delta_{23}) 
    \, , 
    &
    \rho_{A_I} &= e^{2 \pi i s_{1A_I}} 
     = \rho(\mathfrak{A}_I)
    \, ,  
    &
    \rho_{B_I} 
    &= e^{2 \pi i s_{1B_I}} = \rho(\mathfrak{B}_I)
    \, ,
\end{align}
where $I=1,2$. 
Then, the intersection matrix $\mathbf{H}^{(2,2)}$ is
\begin{align} \label{eq:homIntMat_ex1}
    \mathbf{H}^{(2,2)} 
    = \frac{1}{\rho_2-1}&\left( \begin{matrix}
    (\rho_{A_1}-1)(\rho_{A_1}-\rho_2)\rho_{A_1}^{-1} &
    \rho_2 \rho_{A_1}+\rho_2 \rho_{B_1}^{-1}-\rho_{A_1}\rho_{B_1}^{-1}-\rho_2
    & \\
     \rho_{B_1}-1+\rho_{A_1}^{-1}-\rho_2\rho_{A_1}^{-1}\rho_{B_1}  & \rho_{B_1}(\rho_{B_1}^{-1}-1)(\rho_{B_1}^{-1}-\rho_2)  \\
     (\rho_{A_1}^{-1}-1)(1-\rho_{A_2})&(\rho_{A_2}^{-1}-1)(1-\rho_{B_1}) \\
     (\rho_{A_1}^{-1}-1)(1-\rho_{B_2})&(\rho_{B_1}^{-1}-1)(1-\rho_{B_2})
    \end{matrix}
    \right. 
    \ldots \nonumber \\
    &\qquad\qquad\ldots \left. 
    \begin{matrix}
    \rho_2 (1-\rho_{A_1})(\rho_{A_2}^{-1}-1) & \rho_2 (\rho_{A_1}-1)(1-\rho_{B_2}^{-1}) \\
    \rho_2(1-\rho_{B_1})(\rho_{A_2}^{-1}-1)& \rho_2(\rho_{B_1}-1)(1-\rho_{B_2}^{-1}) \\
    (\rho_{A_2}-1)(\rho_{A_2}-\rho_2)\rho_{A_2}^{-1} & \rho_2 \rho_{A_2}+\rho_2 \rho_{B_2}^{-1}-\rho_{A_2}\rho_{B_2}^{-1}-\rho_2\\
    \rho_{B_2}-1+\rho_{A_2}^{-1}-\rho_2\rho_{A_2}^{-1}\rho_{B_2} & \rho_{B_2}(\rho_{B_2}^{-1})(\rho_{B_2}^{-1}-\rho_2)
    \end{matrix} \right) \, .
\end{align}
We note that $\det \mathbf{H}^{(2,2)}  = 1$ and therefore this matrix is invertable
\begin{align}
    {[\mathbf{H}^{(2,2)}]}^{-1} = \frac{1}{\rho_2-1}&\left( \begin{matrix}
    \rho_{B_1}^{-1}(\rho_{B_1}-1)(\rho_2\rho_{B_1}-1)&\rho_{A_1}\rho_{B_1}^{-1}-\rho_2\rho_{B_1}^{-1}+\rho_2-\rho_2 \rho_{A_1} \\
    1-\rho_{A_1}^{-1}+\rho_2 \rho_{A_1}^{-1}\rho_{B_1}-\rho_{B_1}&\rho_{A_1}^{-1}(\rho_{A_1}-1)(\rho_{2}-\rho_{A_1})  \\
    \rho_{B_2}^{-1}(\rho_{B_1}-1)(\rho_{B_2}-1)&-\rho_{B_2}^{-1}(\rho_{A_1}-1)(\rho_{B_2}-1)\\
    -\rho_{A_2}(\rho_{A_2}-1)(\rho_{B_1}-1)&\rho_{A_2}^{-1}(\rho_{A_1}-1)(\rho_{A_2}-1)
    \end{matrix}
    \right. 
    \ldots \nonumber \\
    &\qquad\qquad \ldots 
    \left. 
    \begin{matrix}
    \rho_{B_1}^{-1}\rho_2(\rho_{B_1}-1)(\rho_{B_2}-1) &-\rho_{B_1}^{-1}\rho_2(\rho_{A_2}-1)(\rho_{B_1}-1)  \\
    -\rho_{A_1}^{-1}\rho_{2}(\rho_{A_1}-1)(\rho_{B_2}-1)&\rho_{A_1}^{-1} \rho_2(\rho_{A_1}-1)(\rho_{A_2}-1) \\
    \rho_{B_2}^{-1}(\rho_{B_2}-1)(\rho_2 \rho_{B_2}-1)&\rho_2-\rho_2\rho_{B_2}^{-1}+\rho_{A_2}\rho_{B_2}^{-1}-\rho_2\rho_{A_2}\\
    1-\rho_{A_2}^{-1}+\rho_2\rho_{A_2}^{-1}\rho_{B_2}-\rho_{B_2}&\rho_{A_2}^{-1}(\rho_{A_2}-1)(\rho_{A_2}-\rho_2)
    \end{matrix} \right) \, .
\end{align}
With the inverse intersection matrix know, we can make the double-copy of the complex genus-two hypergeometric integral in \eqref{eq:J_genus2_def} explicit:
\begin{align} \label{eq:genus_2_double_copy_example}
    \int_{\Sigma_2^*}  \omega_2 \wedge \overline{\omega_1} \,  |T(z_1)|^2 = \big[{\mathbf{I}_2^{(g=2)}}\big]^{T}[\mathbf{H}^{2,2}]^{-1} \overline{{\mathbf{I}_1^{(g=2)}}} 
    \,.
\end{align}
Recall that we have to choose values of $s_{1A_1}$ and $s_{1A_2}$ such that $s_{1B_1}, s_{1B_2} \in \mathbb{R}$. 
A way of choosing such values is via equation \eqref{eq:reality_condition_solving}.

In figure \ref{fig:double_copy_example} we show numerical values for the complex hypergeometric integral $J^{(g=2)}(z_j,s_{ij})$. For the data obtained, we used the numerical method outlined in section \ref{subsec:CrowdyMarshall}, with Schottky group parameters $\{\delta_1=0.5,\delta_2=0.5 i ,q_1=0.15,q_2=0.1\}$ and puncture positions $\{z_2=1+2i,z_3=1.5+2 i\}$. We vary the value of $s_{12}\in(0.5,2.5)$ and we fix $s_{1A_1}$ and  $s_{1A_2}$ using equation \eqref{eq:reality_condition_solving}. 
We show the result of this computation by \textit{direct integration} -- i.e.~by numerically integrating over the surface $\Sigma_2^*$, and also by the double copy formula. 
We note that according to the definition of  $J^{(g=2)}(z_j,s_{ij})$, by the integral over $\Sigma_2^*$, it converges only for $-1<s_{12}<1$. 
On the other hand, the double copy provides an analytic continuation of this integral for $s_{12}>1$.

\begin{figure}[h]
    \centering
    \includegraphics[scale=.82
    ]{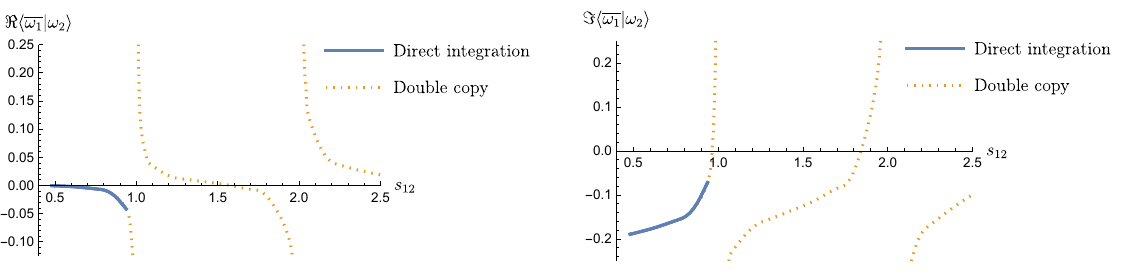}
    \caption{Real and imaginary values for the complex hypergeometric integral $\langle \overline{\omega}_1|\omega_2\rangle = J^{(g=2)}(z_j,s_{ij})$. In blue, we plot the values obtained by direct integration, i.e.~by performing the integral on the volume of $\Sigma_2^*$. In dashed orange lines we plot the values obtained by using the double copy formula. See the text for the values of the punctures and the moduli of $\Sigma_2^*$.} \label{fig:double_copy_example}
\end{figure}

\subsection{A  double copy formula with a genus-two polylog kernel}

In this example, our goal is to exemplify a double copy for the integral
\begin{align}
\label{eq:J_genus2_def_polylog}
J^{(g=2)}_{\textrm{polylog}}(z_j,s_{ij}) := 
\int_{\Sigma_2^*} & {\tilde{f}^1}_{\;\,2}(z_1) \wedge \overline{\omega_{1}} \,  \exp\big[s_{12}G(z_1,z_2)-s_{12}G(z_1,z_3)\big] \, ,
\end{align}
where ${{\tilde{f}}^1}_{\;\,2}$ is a single-valued, bounded and non-holomorphic 1-form on $\Sigma_2$ given by the combination%
\footnote{%
    We note that there is a closely-related ${f^1}_2(z_1)$ which also has good modular properties, see equation (E.28) of \cite{DHoker:2025szl} for the relation of this ${f^1}_2(z_1)$ with ${g^1}_2(z_1)$.
}
\begin{align}\label{eq:f12Tilde_def}
    {\tilde{f}^1}_{\;\,2}(z_1)={g^1}_2(z_1) + 2 \pi i\, \omega_2(z_1) \sum_{J=1}^2 Y^{1J}\Im(\nu_J(z_1)) \, . 
\end{align}
This 1-form is readily obtained by taking the $\alpha_1$ coefficient of $\alpha_2\Omega_2(z_1,z_2,\alpha_I)$, where $\Omega_2(z_1,z_2,\alpha_I)$ is the single-valued Abelian Kronecker form \eqref{eq:abelian_kronecker_form}, whose first few coefficients are:
\begin{align}
    \alpha_2\Omega_2(z_1,z_2,\alpha_I) &= \omega_2 + \alpha_2\, \bigg[{{g^2}_2(z_1,z_2)+2 \pi i  \omega_2 \sum_{J=1}^2 Y^{2J}  \Im(\nu_J(z_1))}\bigg] \nonumber \\
    &\phantom{=} + \alpha_1\bigg[{{g^1}_2(z_1)+2 \pi i  \omega_2 \sum_{J=1}^2 Y^{1J} \Im(\nu_J(z_1))}\bigg]  + \mathcal{O}(\alpha_I^2) \, .
\end{align}
Therefore, we obtain a formula for $J^{(g=2)}_{\textrm{polylog}}(z_j,s_{ij})$ from $J^{(g=2)}(z_j,s_{ij},\alpha_I)$, by isolating the $\overline{\alpha}_1^0\overline{\alpha}_2^0 \alpha_1^1\alpha_2^0$-coefficient of
\begin{align}
    \label{eq:genus-two-example-alphaJ}
    J^{(g=2)}(z_j,s_{ij}.\alpha_I) = \overline{\alpha}_1 \alpha_2 \int_{\Sigma_2^*}  \Omega_2(z_1,z_2,\alpha_I) \wedge \overline{\Omega_1(z_1,z_2,\alpha_I)}  \,\exp[s_{12} G(z_1,z_2)-s_{12}G(z_1,z_3)] \, .
\end{align}

We have already derived a general double-copy formula for the integral in the right-hand-side of \eqref{eq:genus-two-example-alphaJ}: equation \eqref{eq:RiemannBilinearRelations_complex_alpha_Kronecker_SingleValued}. 
For this double copy, we need the flowing vectors of integrals
\begin{align}
    \mathbf{I}_1^{(\alpha_I,g=2)} = \begin{pmatrix}
    \int_{\gamma_{2A_1}}T(z_1) \, F_1(z_1,z_2,\alpha_I) \\
    \int_{\gamma_{2B_1}}T(z_1) \, F_1(z_1,z_2,\alpha_I) \\
    \int_{\gamma_{2A_2}}T(z_1) \, F_1(z_1,z_2,\alpha_I) \\
    \int_{\gamma_{2B_2}}T(z_1) \, F_1(z_1,z_2,\alpha_I)
    \end{pmatrix} &&
    \mathbf{I}_2^{(\alpha_I,g=2)} = \begin{pmatrix}
    \int_{\gamma_{2A_1}}T(z_1) \, F_2(z_1,z_2,\alpha_I) \\
    \int_{\gamma_{2B_1}}T(z_1) \, F_2(z_1,z_2,\alpha_I) \\
    \int_{\gamma_{2A_2}}T(z_1) \, F_2(z_1,z_2,\alpha_I) \\
    \int_{\gamma_{2B_2}}T(z_1) \, F_2(z_1,z_2,\alpha_I) \, ,
\end{pmatrix}
\end{align}
where the $F_J(z_1,z_2,\alpha_I)$ are Abelian Kronecker forms; see \eqref{eq:abelian_kronecker_form}. Note that we understand these integrals formally, as generating functions, where the $\alpha_I$ are seen as expansion parameters (and affected by complex conjugation). 
Then, the double copy of \eqref{eq:genus-two-example-alphaJ} is
\begin{align}
    \label{eq:polylog_double_copy_example}
    J^{(g=2)}(z_j,s_{ij},\alpha_I) &= \overline{\alpha}_1 \alpha_2 \exp\bigg[-2 \pi \sum_{j=2}^3\ s_{1j} \sum_{I,J=1}^2 Y^{IJ} \Im[\nu_I(z_j)]\Im[\nu_J(z_j)]  \bigg] 
    \nonumber
    \\
    & \phantom{=}\times 
    \big[{\mathbf{I}_2^{(\alpha_I,g=2)}}\big]^{T}[\mathbf{H}^{2,2}_{\bs{\alpha}}]^{-1} \overline{{\mathbf{I}_1^{(\alpha_I,g=2)}}} \bigg|_\eqref{eq:substitution_rules_example}\,
\end{align}
where $\mathbf{H}^{2,2}_{\bs{\alpha}}$, $\mathbf{I}_{a=1,2}^{(\alpha_I,g=2)}$ are obtained from $\mathbf{H}^{2,2}$ (c.f., \eqref{eq:homIntMat_ex1}) and $\mathbf{I}_2^{(g=2)}$ using the substitution rule $s_{1B_I} \to s_{1B_I} - \alpha_I$ (equation \eqref{eq:substitution_rule_homology}). 
Then we set $s_{1B_I}$ as in equation \eqref{eq:s1B_defined}, and make the substitution
\begin{align}
    \label{eq:substitution_rules_example}
    s_{1A_I}&\rightarrow \frac{1}{2i}\sum_{J=1}^2Y^{IJ}(\alpha_J-\overline{\alpha}_J) -\sum_{J=1}^2Y^{IJ}s_{12}\Im\big[\nu_J(z_2)-\nu_J(z_3)\big] \, .
\end{align}
This substitution rule ensures that the $s_{1B_J}$ are self-conjugate.%
\footnote{%
    The way we write the substitution rules in \eqref{eq:substitution_rules_example} takes into account that the $\alpha_I,\overline{\alpha}_i$ are bookkeeping variables.
}

Finally, the double copy of $J^{(g=2)}_\textrm{polylog}(z_j,s_{ij})$, is given by taking the corresponding coefficient of $J^{(g=2)}$:
\begin{align}
    \label{eq:JpolylogDoubleCopy}
    J^{(g=2)}_\textrm{polylog}(z_j,s_{ij})=\overline{\alpha}_1^0\overline{\alpha}_2^0 \alpha_1^1\alpha_2^0\textrm{-coefficient of }\eqref{eq:polylog_double_copy_example} \, . 
\end{align} 
We remark that this last formula is just a relation among integrals, and there is no $\alpha_I$-dependence in it. It simply is a double-copy formula for an integral of a single-valued, non-holomorphic and bounded $g=2$ polylogarithm kernel ${\tilde{f}^{1}}_{\;\,2}(z_1)$, which we defined in equation \eqref{eq:f12Tilde_def}.
See ancillary file for a numeric verification of \eqref{eq:JpolylogDoubleCopy}.

\section{Conclusion and future directions}
\label{sec:conclusion}

In this work, we have presented two families of hypergeometric functions on punctured Riemann surfaces of genus $g$, whose linear and quadratic relations -- monodromy relations and double copy relations in physics parlance -- follow from twisted homology and cohomology. 
More precisely, we present concrete examples of integrals that arise from pairing certain genus-$g$ twisted cohomology (introduced by Watanabe \cite{watanabe2016twisted}) and homology groups (introduced in this work). 
Exploiting pairings between these and closely related twisted (co)homology groups, this work culminates in a double copy formula for these families of genus-$g$ hypergeometric functions. 
This double copy is verified numerically using recently developed techniques \cite{crowdy2007computing, crowdy2016schottky}.

The hypergeometric integrals introduced in this work are natural generalizations of genus-one integrals known as Riemann-Wirtinger integrals \cite{Mano2012}. 
These integrals come in two flavors: either associated to single-valued differential forms or $\bs{\alpha}$-quasiperiodic Abelian Kronecker forms. 
The Kronecker form is a generating function in non-commutative variables for the so-called Enriquez integration kernels of genus-$g$ polylogarithms \cite{enriquez2021construction, DHoker:2023vax, Baune:2024biq}.
In particular, the double copy for certain symmetric linear combinations of higher-genus polylogarithms follows from the double copy of the $\bs{\alpha}$-quasiperiodic hypergeometric integrals introduced in this work.

We limit the scope to Abelian Kronecker forms in order to have a rank-1 local system; a homomorphism $\rho:\pi_1(\Sigma_g^*,*)\rightarrow U(1)$. 
This is because the rank $r=1$ case is well studied and we have computational control over the web of pairings in figure \ref{fig:pairings} that lead to the twisted Riemann bilinear
relations and the double copy. 
We expect that a better understanding of twisted (co)homology for rank $r>1$ will facilitate the double copy of (non-Abelian) Kronecker forms. 
This would extend the applicability of twisted (co)homology and the double copy to a wider array of higher genus polylogarithm kernels which are finding more and more applications in string theory \cite{DHoker:2025jgb}, and hopefully will find application in
quantum field theory. 
Perhaps the recent discovery of a double copy for string amplitudes in Anti-de Sitter spacetimes \cite[equation (1.10)]{Alday:2025bjp} offers a physically well-motivated starting point for investigating twisted (co)homology of higher rank local systems. 

Another physical motivation for understanding the twisted (co)homology of higher rank local systems is to provide a twisted (co)homology framework that more closely describes higher-genus string integrals -- where two or more punctures are integrated over.
We have limited the integrals in this work to one-fold integrals over a punctured genus-$g$ Riemann surface $\Sigma_g^*$. 
But in order to compute string amplitudes at $g$-loops, we need to integrate over the configuration space of $n$-punctures on a genus $g$ surface: $\operatorname{Conf}(n,\Sigma_g)$.  
At genus-one, such integrals have been shown to involve local systems of rank $r>1$ \cite[section 7]{bhardwaj2024double}. 
The double copy relations on $\operatorname{Conf}(n,\Sigma_g)$ would have a physical interpretation of ``$g$-loop KLT relations''. 
See \cite{Stieberger:2022lss,Stieberger:2023nol} for examples of 1-loop KLT relations in the literature.
Tantalizingly, recent work by Mazloumi and Stieberger \cite{Mazloumi:2024wys} manages to relate the 1-loop KLT relations in \cite{Stieberger:2022lss,Stieberger:2023nol} to the double-copy of Riemann-Wirtinger integrals. 
In light of this, we hope that  the methods of \cite{Mazloumi:2024wys} could point directly to $g$-loop KLT relations, starting from the double copy studied here.

This work has also benefited from the numerical methods of Crowdy and Marshall \cite{crowdy2007computing,crowdy2016schottky} to evaluate functions and integrals on a punctured Riemann surface $\Sigma_g^*$. 
In this work, we have also extended these methods to evaluate the kernels of higher-genus polylogs. 
In the ancillary files, we share our numerical implementation, which includes the first publicly available implementation for the evaluation of a genus-two polylogarithmic kernel, namely ${g^1}_2(z_1)$. 
We remark that the numerical method used here is conceptually different from the one described in \cite{Baune:2024biq,SchottkyTools} even though both rely on Schottky uniformization. 
It would be interesting to see how these methods compare in the numerical evaluation of the higher-genus polylogarithms and their special values (for example, the higher-genus multiple zeta values recently introduced in \cite{Baune:2025sfy}). 
We leave these comparisons for future work.

\section*{Acknowledgments}

The authors would like to thank Johannes Broedel and Oliver Schlotterer for useful discussion and comments.  
This research was supported by the Munich Institute for Astro-, Particle and BioPhysics (MIAPbP) which is funded by the Deutsche Forschungsgemeinschaft (DFG, German Research Foundation) under Germany's Excellence Strategy – EXC-2094 – 390783311. We give special thanks to the organizers of the 2024 event ``Special Functions: From Geometry to Fundamental Interactions'' at MIAPbP. 
AP was supported in part by the US Department of Energy under contract DESC0010010
Task F.
LR is supported by the Royal Society, via a University Research Fellowship and Newton International Fellowship. LR is also supported by the UK’s Science and Technology Facilities Council (STFC) Consolidated Grants ST/X00063X/1 “Amplitudes, Strings \& Duality”.
CR is funded by the European Union (ERC, UNIVERSE PLUS, 101118787). Views and opinions expressed are however those of the author(s) only and do not necessarily reflect those of the European Union or the European Research Council Executive Agency. Neither the European Union nor the granting authority can be held responsible for them.

\appendix

\section{Computation of homology intersection numbers}
\label{sec:intNumComp}

Here, we briefly describe a minimal set of homology intersection numbers so that, using the duality relations of equation \eqref{eq:antiself_dual}, one can write down all the intersection numbers in equations \eqref{eq:first_int_numbers} to \eqref{eq:last_int_numbers}. In writing down the intersection numbers, we use the common convention that a homology intersection number $[\gamma_a|\c{\gamma_b}]$ is written as sum of terms, where each term has three factors: (1) The coefficient in each $1$-chain $\Delta$ (for example, the factor $(e^{2 \pi i s_{1j}}-1)^{-1}$ in $[\gamma_{2j}]$). (2) Phases coming from comparing loadings -- i.e.~crossing branch cuts. 
And (3) a factor of $\pm1$ coming form the relative orientation of each intersection.

In every computation here, we compute homology intersection numbers using unregularized contours in the second entry.

\subsection{Intersection numbers involving $[\gamma_{2j}]$}

There are three types of intersection numbers involving $[\gamma_{2j}]$: 
$[\gamma_{2j}|\c\gamma_{2k}]$,  $[\gamma_{2A_I}|\c\gamma_{2j}]$, and $[\gamma_{2B_I}|\c\gamma_{2j}]$. 
The unregularized and regularized cycles are depicted in  figure \ref{fig:4ggon}. 

The intersection numbers $[\gamma_{2j}|\c\gamma_{2k}]$ are the simplest and have been computed before in many contexts. 
For example, Ghazouani-Pirio and Goto compute these at genus-one \cite{ghazouani2016moduli,Goto2022} and are the same for any genus $g$. 
Conceptually, the fact we can reuse these intersection numbers from the genus-one case\footnote{These are actually  sphere
intersection numbers, i.e.~$g=0$ intersection numbers. See the comments below equation (4.18) in \cite{Mazloumi:2024wys}.} comes from the branch cuts being short. 
That is, the branch cuts fit inside a $4g$-gon as in figure \ref{fig:4ggon}.

For each pair $(j,k)$ with $j,k\in \{3,\dots,n-1\}$, the contours $\c\gamma_{2k}$ can be deformed such that  $\gamma_{2j}$ and $\c\gamma_{2k}$ intersect at one point if $k<j$ and two points if $k\geq j$. 
By deforming $\c\gamma_{2k}$, one can ensure that it intersects $m_0 \subset \gamma_{2j}$ below the branch cut for all $j$ and $k$. 
This intersection point contributes the term $\left((e^{2 \pi i s_{12}}-1)^{-1}\right) \cdot (1) \cdot (1)$ and is universal to all pairs $(j,k)$. 
Similarly, we choose $\c\gamma_{2k}$ such that it does not intersect $\ell_{2j} \subset \gamma_{2j}$ when $j>k$. 
With these conventions, $\c\gamma_{2k}$ also intersects $s_j \subset \gamma_{2j}$ when $j=k$ and contributes the term $\left(-(e^{2 \pi i s_{1k}}-1 )^{-1}\right) \cdot (e^{2 \pi i s_{1k}}) \cdot (-1)$. 
When $k>j$, $\c\gamma_{2k}$ also intersects $\ell_{2j} \subset \gamma_{2j}$ and contributes a term $(1) \cdot(1) \cdot (1)$
Putting this all together, one finds
\begin{align}
[\gamma_{2j}|\c\gamma_{2k}] =     
\begin{cases}
    -\dfrac{e^{2 \pi i s_{12}}
    }{e^{2 \pi i s_{12}}-1} \,,  
    & j<k \,,    
    \\
    - \dfrac{
    e^{2\pi i (s_{12}+s_{1j})}-1
    }{(e^{2 \pi i s_{12}}-1)(e^{2 \pi i s_{1j}}-1)} \,,  
    & j=k \,,
    \\
    -\dfrac{
    1
    }{e^{2 \pi i s_{12}}-1} \,,  
    & j>k \,,
\end{cases}
\end{align}
These intersection numbers $[\gamma_{2j}|\c\gamma_{2k}]$ agree with \cite{Goto2022} and \cite[equation (5.19)]{bhardwaj2024double}.

The intersection numbers $[\gamma_{2A_I}|\c\gamma_{2j}]$ and $[\gamma_{2A_J}|\c\gamma_{2j}]$ are new for $g>1$. 
Again, we deform $[\c\gamma_{2j}]$ so that it intersects the 1-chain $m_0$ but never $\ell_{A_I} \subset \gamma_{2A_I}$ or $\ell_{B_I} \subset \gamma_{2B_I}$. 
In both cases, the intersection number is solely  generated by the intersection with $m_0$:
\begin{align} 
    \label{eq:App_int_num_AI}
    [\gamma_{2A_I}|\c\gamma_{2j}] = \left(\frac{1-e^{2 \pi i s_{1A_I}}}{e^{2 \pi i s_{12}}-1}\right)\cdot (1)\cdot (-1) \, .
    \\ 
    \label{eq:App_int_num_BI}
    [\gamma_{2B_I}|\c\gamma_{2j}] = \left(\frac{1-e^{2 \pi i s_{1B_I}}}{e^{2 \pi i s_{12}}-1}\right)\cdot (1)\cdot (-1) \, .
\end{align}
Up to the sign coming from relative orientation, these intersection numbers are simply the $m_0$ coefficient in  $[\gamma_{2A_I}]$ or $[\gamma_{2B_I}]$ (c.f., \eqref{eq:hom_basis_reg}).

\subsection{Intersection numbers involving $[\gamma_{2A_I}]$, $[\gamma_{2B_J}]$, with $I=J$}

While these are genuinely new homology intersection numbers, they are easy to compute because the genus-one results can be recycled.

We will  refer to the genus-one case of the local system simply by $s_{1A}$ and $s_{1B}$. Similarly, we denote the genus-one intersection numbers by, $[\gamma_{2A}|\c \gamma_{2B}]$,  $[\gamma_{2A},\c \gamma_{2A}]$ and $[\gamma_{2B}|\c \gamma_{2B}]$. i.e.~the $g=1$ $[\gamma_{2A_1}|\c \gamma_{2B_1}]=[\gamma_{2A}|\c \gamma_{2B}]$. Then, we claim\footnote{Auspiciously, the intersection numbers of equations \eqref{eq:App_int_num_AI} and \eqref{eq:App_int_num_BI} also follow the rule here.}:
\begin{align}
[\gamma_{2A_I}|\c \gamma_{2B_I}] &= [\gamma_{2A}|\c \gamma_{2B}]\big|_{(s_{1A},s_{1B})\rightarrow (s_{1A_I},s_{1B_I})}   \,\, , \nonumber
\\
[\gamma_{2A_I}|\c \gamma_{2A_I}] &= [\gamma_{2A}|\c \gamma_{2A}]\big|_{s_{1A}\rightarrow s_{1A_I}}   \,\, ,\nonumber
\\
[\gamma_{2B_I}|\c \gamma_{2B_I}] &= [\gamma_{2B}|\c \gamma_{2B}]\big|_{s_{1B}\rightarrow s_{1B_I}}   \,\, ,
\end{align}
where we are using the  notation $f(x)\big|_{x\rightarrow y}$ to stand for $f(y)$, i.e.~replace the $x$ to $y$ in such expression.

\begin{figure}
    \centering
    \includegraphics[]{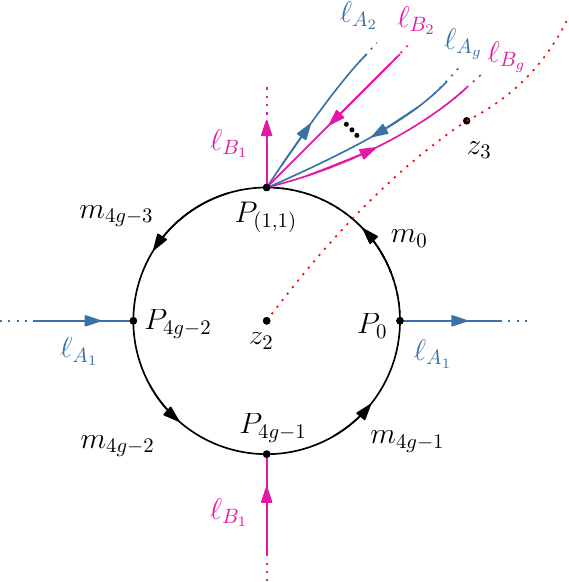}
    \caption{%
         When computing the intersection numbers involving $[\gamma_{2A_1}]$  and $[\gamma_{2B_1}]$ only, we can use the following configuration, which resembles the genus-one case. We have collapsed all the $1$-simplices $m_1,m_2,\ldots,m_{4g-4}$ to the point $P_{(1,1)}$.
    }
    \label{fig:collapse_IJ_equals_1}
\end{figure}

\begin{figure}
    \centering
    \includegraphics[]{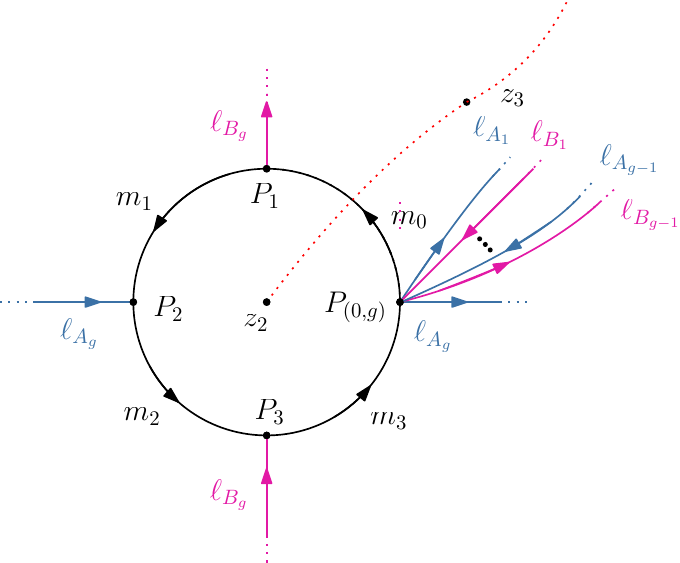}
    \caption{%
         When computing the intersection numbers involving $[\gamma_{2A_g}]$  and $[\gamma_{2B_g}]$ only, we can use the following configuration, which resembles the genus-one case. We have collapsed all the $1$-simplices $m_4,m_5,\ldots,m_{4g-1}$ to the point $P_{(0,g)}$.
    }
    \label{fig:collapse_IJ_equals_g}
\end{figure}

This is simplest to prove for $I=1$. The main idea is that when computing $[\gamma_{2A_1}\vert\c \gamma_{2B_1}]$ we could just squeeze all the 1-simplices  $m_1,m_2,\ldots,m_{4g-4}$ to a point $P_{(1,1)}$. 
Then, all the cycles $\ell_{A_I}$, $\ell_{B_I}$, for $I=2,3,\ldots,g$ now all start and end at $P_{(1,1)}$. 
But now, we obtain a picture of the 1-simplices relevant to the genus-$g $ intersection number $[\gamma_{2A_1}|\c \gamma_{2B_1}]$ that looks just like the genus-one case. 
See figure \ref{fig:collapse_IJ_equals_1}. 
The case $I=g$ is proved in a similar way: for $I=g$, we instead collapse the 1-simplices $m_4,m_5,\ldots,m_{4g-1}$ to a point we call $P_{(0,g)}$. 
See figure \ref{fig:collapse_IJ_equals_g}.

\begin{figure}
    \centering
    \includegraphics[]{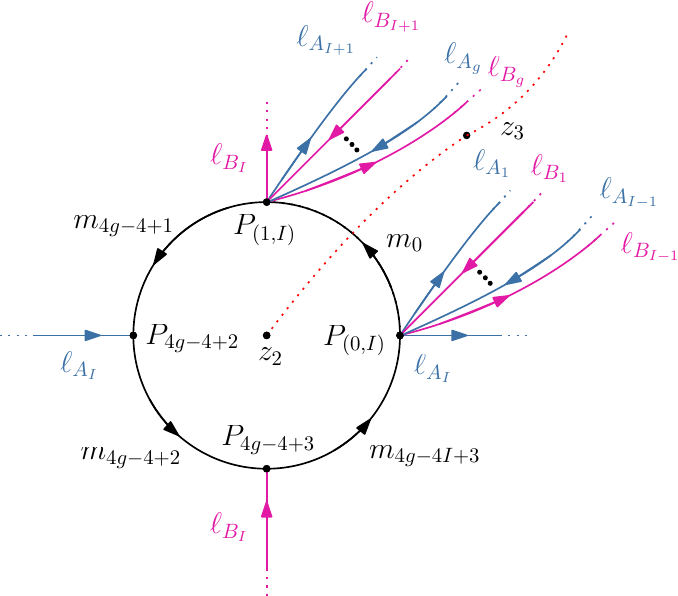}
    \caption{%
         When computing the intersection numbers involving $[\gamma_{2A_I}]$  and $[\gamma_{2B_I}]$ only, $1<I<g$, we can use the following configuration, which resembles the genus-one case. We have collapsed all the $1$-simplices $m_4,m_5,\ldots,m_{4g-4I}$ to the point $P_{(1,I)}$. We have collapsed all the $1$-simplices $m_{4g-4I+4},m_{4g-4I+5},\ldots,m_{4g-1}$ to the point $P_{(0,I)}$.
    }
    \label{fig:collapse_IJ_equals_I}
\end{figure}

For $1<I<g$, if we want to use this argument, we need to collapse the $1$-simplices $m_4,m_5,\ldots,m_{4g-4I}$ to a point, which we call $P_{(1,I)}$. Then we also collapse the 
$1$-simplices $m_{4g-4I+4},m_{4g-4I+5},\ldots,m_{4g-1}$ to a point we call $P_{(0,I)}$. The remaining $1$-simplices involved in the computation of any of the intersection numbers in our class are then in the configuration shown in figure \ref{fig:collapse_IJ_equals_I}. We see this resembles precisely the genus-one case, and thus our claim follows.

\subsection{Intersection numbers involving $[\gamma_{2A_I}]$, $[\gamma_{2B_J}]$, with $I\neq J$}

These are genuinely new homology intersection numbers that cannot be obtained from the genus-one results. 
Still they are straightforward to compute.

We remark that the unregulated cycles $\tilde{\gamma}_{2A_I}$ and $\tilde{\gamma}_{2B_I}$, $I=1,\ldots,g$ only intersect at $z_2$, and nowhere else on $\Sigma_g^*$. 
Moreover, we note that the unregulated cycles $[\tilde{\gamma}_{2\bullet_I}]$, $\bullet \in\{A,B\}$ can be deformed so that they intersect only $m_{4I-4+2}$ (in exactly 2 points: one ingoing and one outgoing). 
Thus, there are 2 summands in these homology intersection numbers, and we just need to compute them. 

Also note that, for $I<J$, we only need to compute four homology intersection numbers here: $[\gamma_{2A_I}|\gamma_{2A_J}]$, $[\gamma_{2A_I}|\gamma_{2B_J}]$, $[\gamma_{2B_I}|\gamma_{2A_J}]$, $[\gamma_{2B_I}|\gamma_{2B_J}]$. 
The cases when $I>J$ are determined from equation \eqref{eq:antiself_dual} and the $I<J$ intersection numbers.

For reproducibility, we present the computation of $[\gamma_{A_I}|\c\gamma_{ A_J}]$ in some more detail. 
We use a regularized cycle for $[\gamma_{A_I}]$ and an unregularized cycle for $[\c\gamma_{ A_J}]$. 
Also recall that we deform $[\c\gamma_{ A_J}]$ such that there are two points, $x$ and $y$ on $m_{4J-4+2}$ where $[\c\gamma_{ A_J}]$ and $[\gamma_{A_I}]$ intersect. 
At $x$ let $[\c\gamma_{ A_J}]$ be outgoing from $z_2$, and at $y$ let $[\c\gamma_{ A_J}]$ be incoming towards $z_2$. Thus,  the orientation at $x$ contributes a ($+1$) and the orientation at $y$ contributes a ($-1$). 
Similarly, the phases from crossing branch cuts at $x$ contribute $1$, and the phase from crossing branch cuts at $y$ contribute a factor of $e^{-2 \pi i s_{1A_J}}$. 
The remaining factor is the coefficient of the $m_{4J-4+2}$ 1-simplex in the regularized cycle $[\gamma_{2A_I}]$. 
From \eqref{eq:hom_basis_reg}, this coefficient is  $e^{2 \pi i s_{12}}\dfrac{1-e^{2 \pi i s_{1A_I}}}{e^{2 \pi i s_{12}}-1}$. 
Putting everything together yields: 
\begin{align}\label{eq:appAIAJ}
[\gamma_{A_I}\vert\c\gamma_{ A_J}] &=  \left(e^{2 \pi i s_{12}}\frac{1-e^{2 \pi i s_{1A_I}}}{e^{2 \pi i s_{12}}-1}\right)\cdot (1)\cdot (1)+\left(e^{2 \pi i s_{12}}\frac{1-e^{2 \pi i s_{1A_I}}}{e^{2 \pi i s_{12}}-1}\right)\cdot (e^{-2 \pi i s_{1A_J}})\cdot (-1)   \, 
\nonumber
\\
&=\frac{e^{2 \pi i s_{12}}\left(1-e^{2 \pi i s_{1A_I}}\right)\left(1-e^{-2 \pi i s_{1A_J}}\right)}{e^{2 \pi i s_{12}}-1} \, , \, \, \,  I<J \, .
\end{align}

The hardest part of the procedure sketched above is having to find the coefficient of $m_{4g-4J+2}$ in the regularized cycles $\gamma_{2A_I}$ and $\gamma_{2B_I}$, for $I<J$. It turns out that this coefficient is always the same:
\begin{align}
\textrm{Coefficient of }m_{{4g-4J+2}}\textrm{ in }\gamma_{2C_I} =e^{2 \pi i s_{12}}\dfrac{1-e^{2 \pi i s_{1A_I}}}{e^{2 \pi i s_{12}}-1} \, , \textrm{ for }C\in\{A,B\}, \,\,I<J \, .
\end{align}
Further, note again that thanks to our convention in writing each summand in the first line of equation \eqref{eq:appAIAJ}, we can read off how the factors therein reflect the paragraph above. 

We conclude this section by listing the remaining three intersection numbers in a similar format:
\begin{align}\label{eq:appAIAJ_rest}
[\gamma_{B_I}|\c\gamma_{ A_J}] &=  \left(e^{2 \pi i s_{12}}\frac{1-e^{2 \pi i s_{1B_I}}}{e^{2 \pi i s_{12}}-1}\right)\cdot (1)\cdot (1)+\left(e^{2 \pi i s_{12}}\frac{1-e^{2 \pi i s_{1B_I}}}{e^{2 \pi i s_{12}}-1}\right)\cdot (e^{-2 \pi i s_{1A_J}})\cdot (-1)   \, 
\nonumber
\\
&=\frac{e^{2 \pi i s_{12}}\left(1-e^{2 \pi i s_{1B_I}}\right)\left(1-e^{-2 \pi i s_{1A_J}}\right)}{e^{2 \pi i s_{12}}-1} \, , \, \, \,  I<J \,  \, , \\
[\gamma_{B_I}|\c\gamma_{ B_J}] &=  \left(e^{2 \pi i s_{12}}\frac{1-e^{2 \pi i s_{1B_I}}}{e^{2 \pi i s_{12}}-1}\right)\cdot (e^{-2 \pi i s_{1B_J}})\cdot (-1)+\left(e^{2 \pi i s_{12}}\frac{1-e^{2 \pi i s_{1B_I}}}{e^{2 \pi i s_{12}}-1}\right)\cdot (1)\cdot (1)   \, 
\nonumber
\\
&=\frac{e^{2 \pi i s_{12}}\left(1-e^{2 \pi i s_{1B_I}}\right)\left(1-e^{-2 \pi i s_{1B_J}}\right)}{e^{2 \pi i s_{12}}-1} \, , \, \, \,  I<J \,  \, ,\\
[\gamma_{A_I}|\c\gamma_{ B_J}] &=  \left(e^{2 \pi i s_{12}}\frac{1-e^{2 \pi i s_{1A_I}}}{e^{2 \pi i s_{12}}-1}\right)\cdot (e^{-2 \pi i s_{1B_J}})\cdot (-1)+\left(e^{2 \pi i s_{12}}\frac{1-e^{2 \pi i s_{1A_I}}}{e^{2 \pi i s_{12}}-1}\right)\cdot (1)\cdot (1)   \, 
\nonumber
\\
&=\frac{e^{2 \pi i s_{12}}\left(1-e^{2 \pi i s_{1A_I}}\right)\left(1-e^{-2 \pi i s_{1B_J}}\right)}{e^{2 \pi i s_{12}}-1} \, , \, \, \,  I<J \,  \, .
\end{align}

\section{Hypergeometric integrals in genus 1}
\label{app:genus_one}

Here, we focus on the case of hypergeometric integrals on a punctured genus-one Riemann surface. This appendix satisfies two purposes: We try to both study in particular the $g=1$ specialization of the genus-$g$ integrals (and twisted homology and cohomology groups) introduced in this work. Further, the functions involved in defining the $g=1$ hypergeometric integrals are more widely known, and as such the reader might refer to this section for some intuition and concreteness.

The genus-$1$ case of the hypergeometric integrals $I^\varphi_
\gamma$ coincides precisely with the what the so-called Riemann-Wirtinger integrals \cite{Mano2012,ghazouani2016moduli,Goto2022,bhardwaj2024double}. Here, we will focus on how the twist function of the Riemann-Wirtinger integral coincides precisely with a $g=1$ version of the hypergeometric integrals here, and how the basis of twisted cohomology of Riemann-Wirtinger integrals also coincide with the cohomology bases here.

Let $\tau \in \mathbb{H}$ be the modulus of a torus in the upper-half plane, $\mathbb{H} = {\{z \in \mathbb{C} | \Im z > 0\}}$, and $E_\tau=\mathbb{C}/{(\mathbb{Z}+\tau\,\mathbb{Z})}$. Let $E_\tau^* =E_\tau-\{z_2,\ldots,z_n\}$, where we have removed $(n-1)$ distinct punctures from $E_\tau$, and moreover $z_2= 0$. The first homology of this genus-one Riemann surface, $H_1(E_\tau)$ is of rank-2, generated by an $\mathfrak{A}$-cycle, corresponding to the cycle $\gamma_{2A}=z_1\in(0,1)$ and a $\mathfrak{B}$-cycle, corresponding to the cycle $\gamma_{2B}=z_1\in(0,\tau)$.

The odd Jacobi Theta function, $\theta_1:\mathbb{C}\times\mathbb{H}\rightarrow \mathbb{C}$, is defined by the series:
\begin{align}\label{eq:theta1_monodromies}
\theta_1(z,\tau)=-\sum_{m\in\mathbb{Z}}\exp\bigg(\pi i \left(m+\frac{1}{2}\right)^2 \tau +2 \pi i \left(m+\frac{1}{2}\right)\left(z+\frac{1}{2}\right)\bigg) \, ,
\end{align}
which has the following quasiperiodicities:
\begin{align}
\theta_1(z+1,\tau)&= -\theta_1(z,\tau)  \,  \\
\theta_1(z+\tau,\tau)&= -e^{\pi i (\tau + 2 z)}\theta_1(z,\tau) \, .
\end{align}
The theta function $\theta_1(z-x,\tau)$ will play the  role of the prime function\footnote{This is not obvious, and in fact the monodromies in \eqref{eq:theta1_monodromies} disagree with the ones of the prime function in \eqref{eq:primeSqDefn}, but this is because the equations in \eqref{eq:primeSqDefn} assume we are on a cover of $E_\tau$ for which the $\mathfrak{A}-$cycle monodromies are trivial. See equation (8) in \cite{crowdy2008geometric} for a precise relation between them when $\tau$ is purely imaginary. A quick check that $\theta_1(z-x,\tau)$ is the prime function on $E_\tau$ is that we can obtain the meromorphic differentials of third kind on $E_\tau$ from its logarithmic derivatives, just as in \eqref{eq:mero-diff-3rd-kind}.} on $E_\tau$. 
Then, for $z_1 \in E^*_\tau$, real numbers $s_{1A},s_{1j}$, $j=2,3,\ldots,n$ subject to $\sum_{j=2}^{n}s_{1j=0}$. We define the twist function of the Riemann-Wirtinger integral\footnote{Here it's worth noticing that $z_1$ takes the role of $\nu(z_1) = \int^{z_1}_P \d z = z_1 - P$, and where $P\in E_\tau$ is a point we are free to choose (equivalently, a normalization factor we are free to choose).}:
\begin{align}
\label{eq:RWintegral_defn}
T_{RW}(z_1) = \exp(2 \pi i s_{1A}z_1) \, \prod_{j=2}^n \big[\theta_1(z_1-z_j,\tau)\big]^{s_{1j}} \, .
\end{align}

Then the Riemann-Wirtinger integral, $I^{RW}$ takes the form:
\begin{align}
I^{RW} =\int_\gamma T_{RW}(z_1) \varphi_{RW}  \, ,
\end{align}
where the contours $\gamma $ are twisted cycles described in e.g. Fact 3.1 of \cite{Goto2022} and the $\varphi_{RW} \in H^1(E_\tau,\nabla_{\omega_{RW}})$, i.e.~they belong to the twisted cohomology group obtained form the twist $\omega_{RW}$:
\begin{align}
\omega_{RW} &= \d_{z_1} \log T_{RW}(z_1) \, \nonumber \\
&= 2 \pi i s_{1A } \d z_1 + \sum_{j=2}^{n-1} [g^{(1)}(z_1-z_j,\tau)-g^{(1)}(z_1-z_n,\tau)] \d z_1 \, ,
\end{align}
where $g^{(1)}(z,\tau) \d z_1$ denotes the logarithmic differential of $\theta_1(z,\tau)$, 
\begin{align}
g^{(1)}(z,\tau)  \, \d z = \d_z \log \theta_1(z,\tau) \, ,
\end{align}
which has the quasiperiodicities
\begin{align}
g^{(1)}(z+1,\tau)  &= g^{(1)}(z,\tau)  \, \nonumber \\
g^{(1)}(z+\tau,\tau)  &= g^{(1)}(z,\tau)  - 2 \pi i \, ,
\end{align}
and moreover $g^{(1)}(z,\tau)$ has a pole at $z=0$ of residue 1.

From this,  we can infer that  the terms $\big[g^{(1)}(z_1-z_j,\tau)-g^{(1)}(z_1-z_n,\tau) \d z_1\big]$ appearing in the twist $\omega_{RW}$ are indeed meromorphic differentials of the third kind. Similarly, $\d z_1$ is the unique holomorphic differential of first kind on $E_\tau$, so the formula for $\nabla_{RW}=\d_{z_1}+\omega_{RW}$ is simply a genus-one version of the formula for the twist in \eqref{eq:nabla_twist_genus_g}, defined at any $g\geq 1$. 

We similarly comment on the basis of twisted cohomology at genus-one, $H^1(E_\tau,\nabla_{\omega_{RW}})$ which is generated by the one-forms \cite{Goto2022}:
\begin{align}
H^1(E_\tau,\nabla_{\omega_{RW}}) = \operatorname{span}\bigg( &\d z_1, \frac{\partial g^{(1)}}{\partial z_1}(z_1-z_2,\tau)  \d z_1 , \big[g^{(1)}(z_1-z_2,\tau)-g^{(1)}(z_1-z_3,\tau) \big]\d z_1,  \nonumber \\
&\big[g^{(1)}(z_1-z_2,\tau)-g^{(1)}(z_1-z_4,\tau) \big]\d z_1 , \ldots , \nonumber \\
&\big[g^{(1)}(z_1-z_{2},\tau)-g^{(1)}(z_1-z_{n-1},\tau) \big]\d z_1  \bigg) \, .
\end{align}
This genus-$1$ basis of twisted cohomology coincides with the one at any genus, once we notice that $ \frac{\partial g^{(1)}}{\partial z_1}(z_1-z_2,\tau)  \d z_1$ is the unique (up to a holomorphic differential) meromorphic differential, doubly periodic, with a 2nd order pole at $z_1=z_2$, and that we can add together meromorphic differentials of third kind to obtain others:
\begin{align}
{\omega}(z_1)_{z_j,z_k} - {\omega}(z_1)_{z_k,z_l} = {\omega}(z_1)_{z_j,z_l} \, ,
\end{align}
and the linear relation in twisted cohomology:
\begin{align}
[\omega_{RW}(z_1)] \cong [0] \, ,
\end{align}
which relates the holomorphic differential, $\d z_1$ and the meromorphic differentials of third kind, ${\omega}(z_1)_{z_j,z_k}$.

The multivaluedness of the Riemann-Wirtinger twist along $\mathfrak{A}$- and $\mathfrak{B}$-cycles is given by:
\begin{align}
T(z_1+\mathfrak{A})&=T(z_1+1) = e^{2 \pi i s_{1A}} T(z_1) \\
T(z_1+\mathfrak{B})& =T(z_1+\tau)=e^{2 \pi i s_{1B}} T(z_1) \, ,
\end{align}
where $s_{1B}$ is given by:
\begin{align}
s_{1B} = \tau s_{1A} + \sum_{j=2}^n s_{1j} z_j \, ,
\end{align}
which is the genus-one version of \eqref{eq:s1B_defined}.

With these definitions of $s_{1B}$ in mind, one can then proceed to compute regularized cycles in homology -- see, e.g. section 3 of \cite{Mano2012}, compute homology intersection numbers -- see, e.g.  Proposition 3.4 of \cite{ghazouani2016moduli} and obtain twisted Riemann bilinear relations, also known as double copy relations  -- see e.g. Proposition 4.22 of \cite{ghazouani2016moduli} or equation (6.4) of \cite{bhardwaj2024double}, which follows more closely our notation.

\section{Schottky uniformization and Numerical setup} \label{app:schottky}

In this work we deal with integrals of multivalued functions (times a 1-form) over Riemann punctured surfaces of genus $g$. We have found linear and quadratic relations among these integrals, all of which depend on many terms. As an internal check in our work, we have benefited from being able to numerically evaluate these integrals. Here, we will give a brief sketch of the numerical methods we have used -- originally developed by Crowdy and Marshall \cite{crowdy2007computing} -- which use the setup of Schottky uniformization of a Riemann surface $\Sigma_g$. We will closely follow the presentation of \cite{crowdy2008geometric}, and recommend the reader see the monograph \cite{crowdy2020solving} for a  pedagogical, detailed introduction.

Conceptually, Schottky uniformization allows us to give a model for a Riemann surface of genus $g$ starting from the Riemann sphere $\mathbb{CP}^1$, and using its complex structure. Consider $2g$ disjoint, non-nested circles\footnote{For the experts, this means we're talking about classical Schottky groups.} on the Riemann sphere, which we will denote as Schottky circles, that come in pairs, $C_1,C_1',C_2,C_2',\ldots,C_g,C_g'$. Then, if we identify $C_j$ and $C_j'$, for $j=1,2,\ldots,g$ in a continuous, one-to-one and orientation preserving way (i.e.~a path going around $C_j$, with 1 turn in the counter-clockwise direction is mapped to a path going around $C_j'$ in a clockwise direction, for 1 turn).

\begin{figure}
    \centering
    \includegraphics[scale=.5]{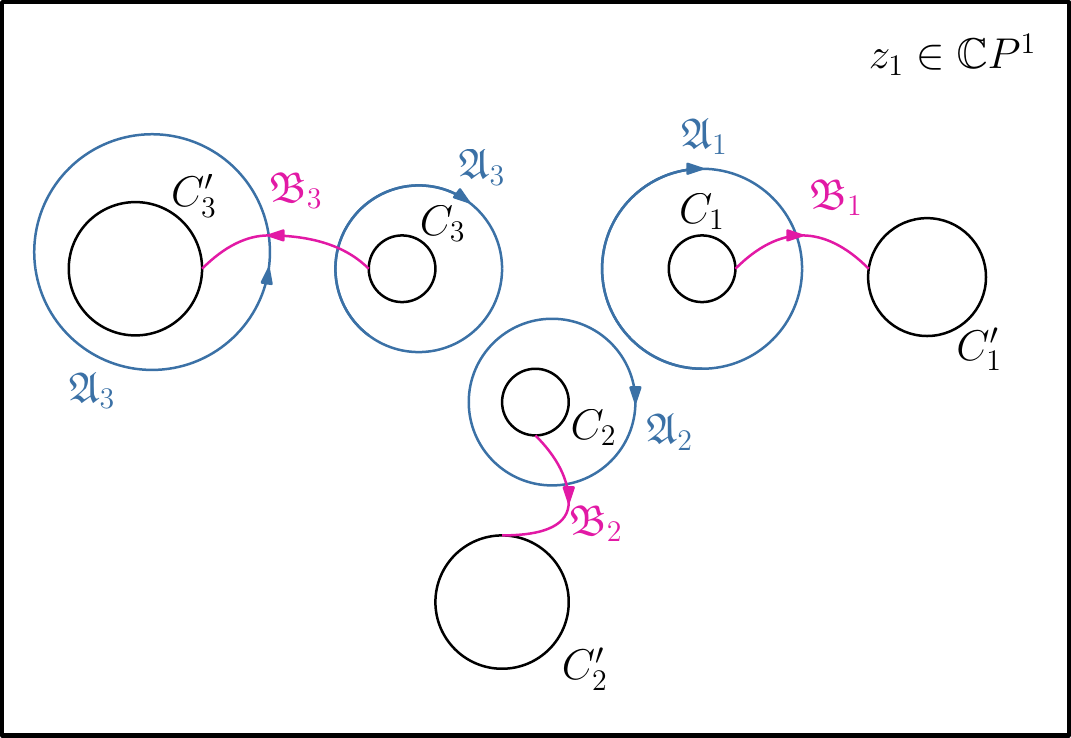}
    \caption{The exterior of the circles $C_1,C_1',C_2,C_2',C_3,C_3'$ is topologically a Riemann surface of genus-3. We draw also $\mathfrak{A}$- and $\mathfrak{B}$-cycles. The two blue cycles labeled $\mathfrak{A}_3$ are equivalent in homology -- we draw them to emphasize their orientation.
    }
    \label{fig:SchottkyG3}
\end{figure}

After performing this identification, the exterior of the circles is topologically a compact Riemann surface of genus $g$, $\Sigma_g$. On this Riemann surface, we will identify the cycle homologous to $C_j$ with the $\mathfrak{A}_j$ cycle on $\Sigma_g$, and we identify a cycle going from one circle $C_j$ to the point on $C_j'$ to which it is identified, by going along the exterior of the circle, as homologous to the $\mathfrak{B}_j$-cycle on $\Sigma_g$. See figure \ref{fig:SchottkyG3}.  

\begin{figure}
    \centering
    \includegraphics[scale=.5]{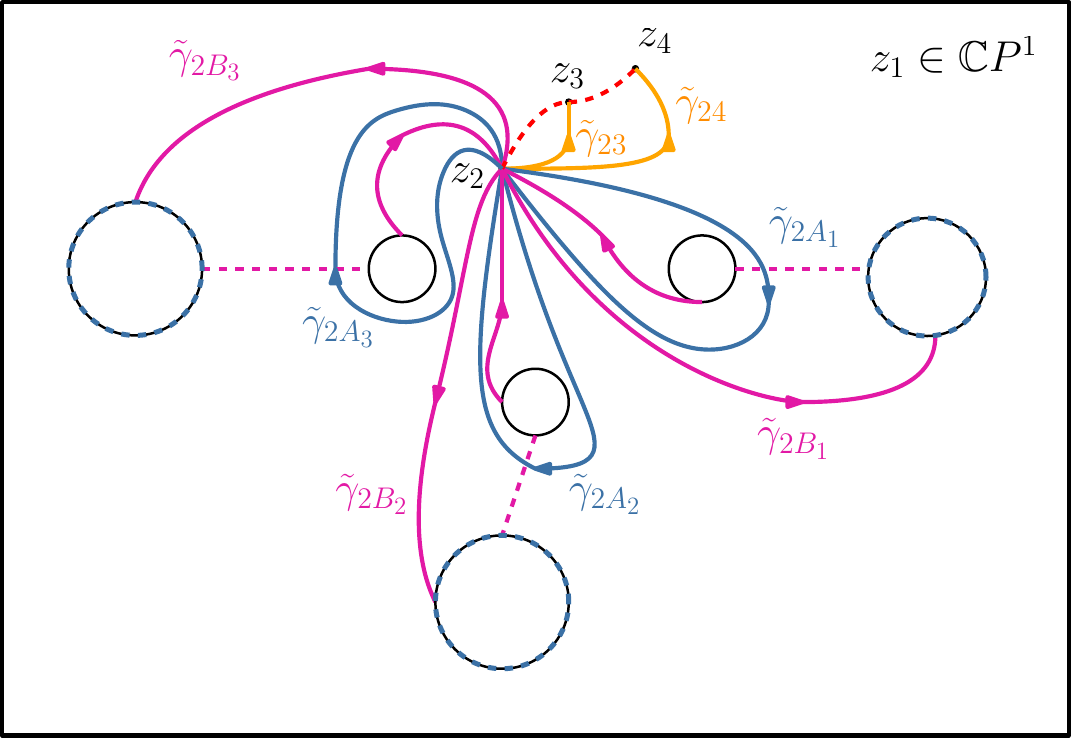}
    \caption{Integration cycles and branch cuts for $T(z_1)$ for $z_1\in \Sigma_g$, for $(g,n)=(3,4)$. This corresponds to figure \ref{fig:cyclesG3N4}. Here, the dashed pink and blue (on the Schottky circles) lines correspond to the branch cuts of $T(z_1)$ around $\mathfrak{A}$- and $\mathfrak{B}$-cycles.
    }
    \label{fig:SchottkyG3_Twisted}
\end{figure}

There is more to Schottky uniformization than this topological identification of the exterior of these $2g$ circles to a Riemann surface $\Sigma_g$. We need more data: the data of a Schottky group $\Gamma \subset \operatorname{SL}(2,\mathbb{C})$, a subgroup of the groups of M\"obius transformations. The way the circles $C_j$ and $C_j'$ are identified with each other is by specifying a $f_j\in \operatorname{SL}(2,\mathbb{C})$ that maps the interior of $C_j$ into the exterior of $C_j'$. Then, the Schottky group $\Gamma = \langle f_1,f_2,\ldots,f_g\rangle$ is freely generated by these generators and with a nonempty limit set, $\Lambda(\Gamma)$. Then the Riemann surface $\Sigma_g$ that is uniformized\footnote{We remark that finding the Schottky group $\Gamma$ that uniformizes a given Riemann surface $\Sigma_g$ is a hard problem. See e.g.  \cite{seppala2004myrberg} for the case of hyperelliptic Riemann surfaces.} by $\Gamma$ is given by the quotient:
\begin{align}
\Sigma_g = (\mathbb{CP}^1 - \Lambda(\Gamma)) / \Gamma \, .
\end{align}
More importantly, then, the exterior of the circles $C_j,C_j'$ are a fundamental domain for the Schottky group $\Gamma$, and this way we identify $\mathbb{CP}^1 - \Lambda(\Gamma) $ as a cover for the Riemann surface $\Sigma_g$. In twisted (co)homology, we can work directly on the base space of our twist function $T(z_1)$ rather than its cover, so its sufficient to work on the exterior of the Schottky circles. See figure \ref{fig:SchottkyG3_Twisted}. We remark Schottky uniformization is also convenient to understand the branch cuts of $T(z_1)$.

The next crucial step is how to write nontrivial functions or forms on $\mathbb{CP}^1 - \Lambda(\Gamma)$. These should be complex-valued functions which are, for example, automorphic with respect to the Schottky group elements:
\begin{align}
\label{eq:defn_automorphic}
\Phi( g\circ z) = \Phi(z) \, , \textrm{for }z\in \mathbb{CP}^1-\Lambda(\Gamma), g \in \Gamma \, . 
\end{align}
equation \eqref{eq:defn_automorphic} is a  starting point to a theory of Poincar\'e series over $\Gamma$, out of which one can compute numerical approximations for holomorphic differentials $\omega_j$, prime functions $E(x,y)$ and even the kernels of polylogarithms ${g^I}_J(z_1,z_j)$ \cite{Baune:2024biq}. But notice that if we want to compute functions or forms on $\Sigma_g$, we can try to focus on conditions only on the fundamental domain of the Schottky group (i.e.~the exterior of the Schottky circles, including its boundary). This insight is the starting point for the Crowdy-Marshall method.

\subsection{The Crowdy-Marshall method}\label{subsec:CrowdyMarshall}

The fundamental idea of the Crowdy-Marshall method is that we approximate a function (or form) in the exterior of the circles by rational approximation, and some logarithm terms as required. In the setup of \cite{crowdy2007computing}, the Schottky group $\Gamma$ uniformizes a so-called $M$-curve, which implies that the period matrix $\Omega_{IJ}$ is purely imaginary for the corresponding Riemann surfaces. Then, in this setup one first obtains an approximation for the primitives\footnote{The fact that the Crowdy-Marshall gives direct access to these means that we can easily keep a consistent branch choice in $T(z_1)$.} of holomorphic differentials, $\nu_J(z_1)$, $J=1,2,\ldots,g$, where the holomorphic differentials $\omega_J$ are normalized around the $\mathfrak{A}$-cycles. Because of this, if we denote $\delta_j,\delta_j'$ to be the centers of the Schottky circles $C_j,C_j'$, its multivaluedness in the exterior of the circles is given by a logarithm:
\begin{align}
\nu_j(z_1) = \frac{1}{2 \pi i} \log\frac{ z_1-\delta_j'}{z_1-\delta_j} + r_j(z_1) ,
\end{align}
where $r_j(z_1)$ is a rational function with poles {\em inside} the Schottky circles.

Then, these rational functions $r_j(z_1)$ are given by some linear combinations:
\begin{align}
\label{eq:rational_approx_example}
r_l(z_1) = \sum_{j=1}^{g}\sum_{n=1}^N \bigg[ \frac{a_{l,j,n}}{(z_1-\delta_j)^n}+\frac{b_{l,j,n}}{(z_1-\delta'_j)^n}\bigg]\, , l=1,2,\ldots,g \, ,
\end{align}
and where the coefficients $a_{l,j,n}$ and $b_{l,j,n}$ above are determined from boundary conditions that $\nu_j(z_1)$ satisfies. These boundary conditions are of the form\footnote{In \cite{crowdy2007computing} this condition is more simple because of the nature of their Schottky group. For their choice of Schottky group, $\Im(\nu_j(z_1))$ is constant on each Schottky circle.}:
\begin{align} \label{eq:app_well-defined}
 \nu_J(z_1) -\nu_J(h_I \circ z_1) = \Omega_{IJ}\, , z_1 \in C_I \, .
\end{align}
Then, one samples $N_2$ points on $C_I$, for $I=1,\ldots,g$, and uses linear regression to obtain approximations for the $a_{l,j,n}$ and $b_{l,j,n}$. This method is  to give approximations for $\nu_j(z_1)$ that get exponentially better  as one increases $N_1,N_2$, see e.g. the discussion on convergence rates in \cite[section 8]{trefethen2018series}, or \cite[theorem 3]{trefethen2024polynomial}.

Thus, by knowing some analytic properties of functions or forms (e.g. whether they have poles, residues along topological cycles, or whether they are single-valued or quasiperiodic on $\Sigma_g$) one can write corresponding functions and obtain similar approximations. Crowdy and Marshall do this too to obtain $E(z_1,z_j)$, the prime function (i.e.~the components of the prime form). 

These approximations are also particularly useful for numerically approximating the hypergeometric integrals $I^\varphi_\gamma$ because the form $T(z_1) \varphi(z_1)$ thus obtained is easy to evaluate and integrate.

\subsection{A rational approximation for higher-genus polylogarithm kernels}

One can similarly use the Crowdy-Marshall method, as explained above, to obtain approximations for the kernels of polylgarithms, ${g^{I_1,_I2,\ldots,I_r}}_J(z_1,z_j)$ on the fundamental domain of the Schottky group $\Gamma$ -- the exterior of the Schottky circles.

To give a concrete example, consider ${g^1}_2(z_1)$. This is a 1-form given by \cite[equation 6]{DHoker:2025dhv}:
\begin{align}
{g^1}_2(z_1)=\oint_{t\in\mathfrak{A_1}} \omega_2(t) \partial_x \log \frac{E(z_1,z_j)}{E(z_1,t)} \, , z_j \in \Sigma_{g} \, , g\geq 2 \, .
\end{align}
where this 1-form doesn't depend\footnote{We have used a notation that reflects this, but it's not obvious at all. See equation (7a) in \cite{Baune:2024ber} or the paragraph that begins in ``Putting all together we conclude'' in \cite{DHoker:2025szl} for two proofs of this.} on $z_j$.

Given that we know the quasiperiodicity and integrals around $\mathfrak{A_J}$-cycles of ${g^1}_2(z_1)$, see equations (2) and (5), respectively, of \cite{DHoker:2025dhv}, we can use the Crowdy-Marshall method to approximate it as a sum of rational functions, of the form of \eqref{eq:rational_approx_example}. The knowledge of the $\mathfrak{A_J}$-cycles periods of ${g^1}_2(z_1)$ fixes the $a_{l,g,1}$, $b_{l,g,1}$ coefficients, and the knowledge of the quasiperiodicities gives us equations analogous to \eqref{eq:app_well-defined} for our linear system.

In supplementary material, we include a Mathematica implementation of the Crowdy-Marshall method, which we have checked against the Matlab implementation\footnote{This Matlab implementation can be found in  \url{https://github.com/ehkropf/SKPrime}. We use our own implementation in Mathematica because we need to integrate $T(z_1)\varphi(z_1)$.} of \cite{crowdy2016schottky}. We also include an implementation of the extension of this method for the Enriquez kernel ${g^1}_2(z_1)$, included in ancillary files.

In the code, we include numerical checks of equations \eqref{eq:mon_rel_lf}, \eqref{eq:RiemannBilinearRelations_complex_omegas} and \eqref{eq:JpolylogDoubleCopy}.

\begin{figure}
    \centering
    \includegraphics[width=0.55\linewidth]{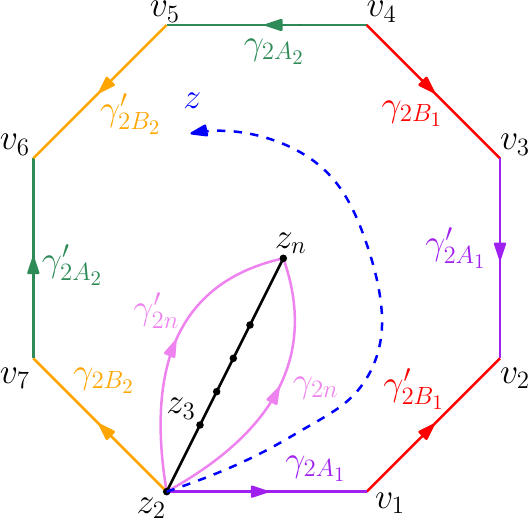}
    \caption{Structure of a genus-two surface as the $z_1$-space cut open to an octagon. The black lines denote the out choice of branch cut. The {\color{red}arrow}{\color{orange}ed} {\color{violet}so}{\color{pink}lid} {\color{ForestGreen}lines} serve two purposes: on the one hand, the edges with the same color denote identified cycles, with the arrows specifying the orientation; on the other hand, they denote the integral paths (cycles) in equation \eqref{eq:g2boundary}, and those with the same color differ by a monodromy. The {\color{blue} blue dashed line} is the path to define the integral $\Omega(z)$ in equation \eqref{eq:defOmega}: we always draw the path counter-clockwisely around the branch cut from $z_2$ to $z_n$.}
    \label{fig:Stokesg2}
\end{figure}

\section{An elementary proof for the double copy relations} \label{app:proofDC}

In this section, we will use Stokes theorem to derive the double copy relation on the integral over one variable. We employ the technique used by Ghazouani and Pirio in~\cite[Proposition 4.22]{ghazouani2016moduli} for the double copy relation at genus-one. Here we generalize there technique and give the explicit derivation for genus two, which is straightforward to generalize to higher genera. 

We start with a complex hypergeometric integral over the genus-two Riemann surface $\Sigma_2^*$:
\begin{equation}
    \begin{aligned}
        J_{2} &= \int_{\Sigma_2^*}   \xi(z_1) \wedge \bar{\psi}(\bar{z}_1) \,,
    \end{aligned}
\end{equation}
with
\begin{equation}
    \begin{aligned}
        \bar{\psi}(z_1) &= \bar{T}(z_1) \bar{\phi}(z_1) \,, \\
        \xi(z_1) &= T(z_1) \phi(z_1) \,.
    \end{aligned}
\end{equation}
Then, we define $\Xi(z)$ to be a local primitive of $\xi(z_1)$: 
\begin{equation}\label{eq:defOmega}
    \Xi(z) := \int_{z_2}^{z} \xi(z_1) \,
\end{equation}
where the integration path from $z_2$ to $z$ is taken along the dashed blue line in figure \ref{fig:Stokesg2}. 
Using Stokes theorem, we then have 
\begin{equation}\label{eq:g2boundary}
    \begin{aligned}
        J_2 &= \int_{\Sigma_2^*}   \d \Xi \wedge \bar{\psi}   
        =\int_{\Sigma_2^*}   \d \big(\Xi \bar{\psi} \, \big)
        = \int_{\partial \Sigma_2} \Xi\, \bar{\psi}  
        \\
        &= \left\lbrack \int_{\gamma_{2A_1}} - \int_{\gamma'_{2A_1}} + \int_{\gamma'_{2B_1}} - \int_{\gamma_{2B_1}} \right. 
        \left. + \int_{\gamma_{2A_2}} - \int_{\gamma'_{2A_2}} + \int_{\gamma'_{2B_2}} - \int_{\gamma_{2B_2}} + \int_{\gamma'_{2n}} - \int_{\gamma_{2n}} \right\rbrack \Xi\, \bar{\psi}  \,,
    \end{aligned}
\end{equation}
where the contours listed above are given in figure \ref{fig:Stokesg2}. We compute the above summation separately according to the cycles. For the integrals over the $A_1$-cycle, we have:
\begin{equation}
    \begin{aligned}
        & \int_{z \in \gamma_{2A_1}} \bar{\psi}(\bar{z})\; {\color{violet} \Xi(z) }
        - \int_{z' \in \gamma'_{2A_1}} \bar{\psi}(\bar{z}')\; {\color{red} \Xi(z') } 
        \\
        =& \int_{z \in \gamma_{2A_1}} \bar{\psi}(z)\, {\color{violet} \left\lbrack \int_{z_2}^{z} \xi(z_1) \right\rbrack }
        - \int_{z' \in \gamma'_{2A_1}} \bar{\psi}(z') \, {\color{red} \left\lbrack \int_{\gamma_{2A_1} + \gamma'_{2B_1} - \gamma'_{2A_1}} \xi(z_1) + \int_{v_3}^{z'} \xi(z_1) \right\rbrack } \,.
    \end{aligned}
\end{equation}
Note that,
\begin{equation}
    \int_{z \in \gamma_{2A_1}} \bar{\psi}(z) \, {\color{violet} \left\lbrack \int_{z_2}^{z} \xi(z_1) \right\rbrack } = \int_{z' \in \gamma'_{2A_1}} \bar{\psi}(z') \, {\color{red} \left\lbrack \int_{v_3}^{z'} \xi(z_1) \right\rbrack } \,,
\end{equation}
therefore we have 
\begin{equation}
    \begin{aligned}
        & \int_{z \in \gamma_{2A_1}} \bar{\psi}(\bar{z}) \, {\color{violet} \Xi(z) }
        - \int_{z' \in \gamma'_{2A_1}} \bar{\psi}(\bar{z}') \, {\color{red} \Xi(z') } \\
        =&\, - \int_{z' \in \gamma'_{2A_1}} \bar{\psi}(z') \, {\color{red} \int_{\gamma_{2A_1} + \gamma'_{2B_1} - \gamma'_{2A_1}} \xi(z_1) } \\
        =&\, e^{-2\pi i s_{1B_1}} \bar{I}_{\gamma_{2A_1}} 
        \Big\lbrack (e^{2\pi i s_{1B_1}} - 1) I_{\gamma_{2A_1}} - e^{2\pi i s_{1A_1}} I_{\gamma_{2B_1}} \Big\rbrack \,.
    \end{aligned}
\end{equation}
Adding the contributions for the integrals over both $A_J$-cycles, we have
\begin{equation}\label{eq:dcA}
    \begin{aligned}
        & \int_{\gamma_{2A_J}} \bar{\psi} \, \Xi
        - \int_{\gamma'_{2A_J}} \bar{\psi} \, \Xi \\
        =&\, e^{-2\pi i s_{1B_J}} \bar{I}_{\gamma_{2A_J}} 
        \Big\lbrack (e^{2\pi i s_{1B_J}} - 1) I_{\gamma_{2A_J}} - e^{2\pi i s_{1A_J}} I_{\gamma_{2B_J}} \Big\rbrack \,, \quad J = 1,2 \,.
    \end{aligned}
\end{equation}
For integrals over $B_J$-cycles, we have
\begin{equation}\label{eq:dcB}
    \begin{aligned}
        & \int_{\gamma'_{2B_J}} \bar{\psi} \, \Xi
        - \int_{\gamma_{2B_J}} \bar{\psi} \, \Xi \\
        =&\, \bar{I}_{\gamma_{2B_J}} \Big\lbrack (e^{-2\pi i s_{1A_J}} + e^{2\pi i s_{1B_J}} - 1) I_{\gamma_{2A_J}} + (1 - e^{2\pi i s_{1A_J}}) I_{\gamma_{2B_J}} \Big\rbrack \,, \quad J = 1,2 \,,
    \end{aligned}
\end{equation}
and finally, for integrals over $\gamma_{2n}$, we have
\begin{equation}\label{eq:dc2n}
    \begin{aligned}
        & \int_{\gamma'_{2n}} \bar{\psi} \, \Xi - \int_{\gamma_{2n}} \bar{\psi} \, \Xi \\
        =&\, \sum_{2 \leqslant i \leqslant j <n} e^{-2\pi i (s_{12} + \cdots + s_{1i})} (1 - e^{2\pi i (s_{12} + \cdots + s_{1j})}) \bar{I}_{\gamma_{i,i+1}} I_{\gamma_{j,j+1}} \\
        & + \sum_{i=2}^{n-1} (e^{-2\pi i (s_{12} + \cdots + s_{1i})} - 1) \bar{I}_{\gamma_{i,i+1}} I_{\gamma_{2,i}} \,,
    \end{aligned}
\end{equation}
where $\gamma_{i,i+1} = \gamma_{2,i+1}-\gamma_{2,i}$, for $i=3,4,\ldots,n-1$.
Summing over equations \eqref{eq:dcA}, \eqref{eq:dcB}, and \eqref{eq:dc2n}, we obtain the final formula for the double copy relation. One can also do these exercise at higher genus. The general double copy formula for arbitrary genus is:
\begin{equation}
    \begin{aligned}
        & \int_{\Sigma_g^*} T(z_1) \overline{T(z_1)} \, \phi(z_1) \wedge \overline{\phi(z_1)} \\
        =& \sum_{J=1}^{g} \, \bigg\lbrack (1-e^{-2\pi i s_{1B_J}}) I_{\gamma_{2A_J}} \bar{I}_{\gamma_{2A_J}} - e^{2\pi i (s_{1A_J} - s_{1B_J})} I_{\gamma_{2B_J}} \bar{I}_{\gamma_{2A_J}} \\
        & \qquad\  + (e^{-2\pi i s_{1A_J}} + e^{2\pi i s_{1B_J}} - 1) I_{\gamma_{2A_J}} \bar{I}_{\gamma_{2B_J}} + (1-e^{2\pi i s_{1A_J}}) I_{\gamma_{2B_J}} \bar{I}_{\gamma_{2B_J}} \bigg\rbrack \\
        & + \sum_{1\leq J < K \leq g} \bigg\lbrack \big( 1 - e^{2\pi i s_{1B_J}} \big) I_{\gamma_{2A_J}} - \big( 1 - e^{2\pi i s_{1A_J}} \big) I_{\gamma_{B_J}} \bigg\rbrack \bigg\lbrack \big( 1 - e^{-2\pi i s_{1B_K}} \big) I_{\gamma_{A_K}} - \big( 1 - e^{-2\pi i s_{1A_K}} \big) I_{\gamma_{2B_K}} \bigg\rbrack \\
        & + \sum_{2 \leq i \leq n-1} e^{-2\pi i (s_{12} + \cdots + s_{1i})} (1 - e^{2\pi i (s_{12} + \cdots + s_{1j})}) \bar{I}_{\gamma_{i,i+1}} I_{\gamma_{j,j+1}} \\
        & + \sum_{i=2}^{n-1} (e^{-2\pi i (s_{12} + \cdots + s_{1i})} - 1) \bar{I}_{\gamma_{i,i+1}} I_{\gamma_{2,i}} \,,
    \end{aligned}
\end{equation}
with the path shown in figure~\ref{fig:Stokesg2}. 
We have explicitly verified that the above formula coincides with the KLT kernel \eqref{eq:homIntMat_ex1} used to build the double copy in  \eqref{eq:genus_2_double_copy_example} for $g=2$ and $n=3$ (once the $\gamma_{ij}$'s are written in terms of the basis of the main text).

\bibliographystyle{JHEP}
\bibliography{refs.bib}

\end{document}